%% file: main.tex
\documentclass[12pt,reqno]{amsart}
\pdfoutput=1 
\usepackage{latexsym}
\usepackage{amsfonts}
\usepackage{amsthm,amssymb,mathtools,amsfonts,amsmath}
\usepackage{color}
\usepackage{graphicx}
\usepackage{url}
\usepackage{enumerate}
\usepackage{tikz-cd}
\usepackage{hyperref}
\usepackage[foot]{amsaddr}
\usepackage{stackrel}
\usepackage{hyperref}
\usepackage{cancel}
\usepackage{braket}
\usetikzlibrary{decorations.pathmorphing}
\usepackage[a4paper, margin=2cm]{geometry}
\usepackage{enumitem}
\usepackage{booktabs}
\usepackage{caption} 
\usepackage{arydshln}
\DeclareMathOperator{\tr}{tr}
\usepackage{comment}
\usepackage{dsfont}
\usepackage{nicematrix}
\usepackage{mathrsfs}  

\usepackage{ytableau}

\ytableausetup{
  mathmode,
  boxsize=0.9em,
  centertableaux
}

\usepackage{tikz}
\usetikzlibrary{arrows.meta}

\usepackage{thmtools}
\declaretheorem[parent=section]{theorem}
\declaretheorem[parent=section,numbered=no,name=Theorem]{theorem*}
\declaretheorem[numberlike=theorem, name=Proposition]{proposition}
\declaretheorem[name=Proposition, numbered=no]{proposition*}
\declaretheorem[numberlike=theorem, name=Remark]{remark}

\declaretheorem[numberlike=theorem, name=Lemma]{lemma}
\declaretheorem[numberlike=theorem, name=Definition]{definition}
\declaretheorem[numberlike=theorem, name=Corollary]{corollary}

\DeclareMathOperator{\id}{id}

\def\rs#1{{\color{black}#1}}%
\def\rausdamit#1{{\color{lightgray}#1}}%
\def\rausdamit#1{{}}%

\title[DIQKD in the commuting operator framework]{\large Device-independent Quantum Key Distribution in the commuting operator framework}
\author{Gereon Koßmann$^{1}$}
\address{$^1$ Institute for Quantum Information, RWTH Aachen University, Aachen, Germany}
\author{René Schwonnek$^{2}$}
\address{$^2$ Leibniz Universität Hannover, Hannover, Germany}
\author{Po-Chieh Liu$^{3}$ \and Hao-Chung Cheng$^{3}$}
\address{$^3$ Department of Electrical Engineering and Graduate Institute of Communication Engineering,\\
National Taiwan University, Taipei 106, Taiwan (R.O.C.)\\
Department of Mathematics, National Taiwan University\\
Center for Quantum Science and Engineering, National Taiwan University\\
Hon Hai (Foxconn) Quantum Computing Center, New Taipei City 236, Taiwan (R.O.C.)\\
Physics Division, National Center for Theoretical Sciences, Taipei 10617, Taiwan (R.O.C.)}

\bigskip
\def\wech#1{}%

\begin{document}

\maketitle

\vspace{-1cm}
\input{abstract2}

\section{Introduction}

Quantum key distribution (QKD) has evolved from an idea that started the field of quantum information theory, trough 
an motivation for experiments that demonstrate fundamental quantum effects into a cryptographic technology that is partially approaching real-world practicality. 
Along this path our collection of available security proof techniques had to keep up as well and is still far from being completed. 
The central promise of QKD is to base cryptographic security on the fundamental structures of quantum theory itself, in particular on the monogamy constraints imposed on quantum correlations \cite{Ekert1991,Ekert2014}. 
Over the past decades, such security ideas \cite{Ekert1991,Bennett1992a,Bennett1992b,Bennett_2014} have been refined, linked to experiments \cite{Jennewein2000,Tittel2000,Naik2000}, and placed in an information-theoretic framework central to modern QKD \cite{RENNER2008,tomamichel2013frameworknonasymptoticquantuminformation}.

A technology, that just has started to leave the stage of a purely theoretical idea \cite{Mayers,Barrett2005,Acn2006_1,Acn2006_2,Acn2007,Pironio2009} in order to reach the stage of real experiments \cite{Liu2022,Nadlinger2022,Zhang2022,lu2026device} is Device-independent QKD (DIQKD). Here the fundamental challenge of modeling imperfect devices is adressed by  reducing the assumptions made in a modeling process to a minimum.
In order to achieve this, DIQKD experiments essentially turn experimental setups designed for a loophole-free Bell test into  cryptographic devices. Those are then  only treated through their classical input-output behavior, while the security analysis relies, at the device level, only on the validity of quantum theory and the availability of local private randomness \cite{Mayers}. This makes DIQKD the most stringent form of QKD security, but also creates a clear mathematical requirement: if a proof is meant to be device-independent, then its mathematical model of the devices should be as general as the most general formulation of quantum theory allows. Providing such a formulation is the aim of this work.

Naively, one might try to describe a device by the collection of all positive operator-valued measure (POVM) families on all Hilbert spaces that could realize its input-output behavior. Taken literally, without imposing further structure, this collection is already too large to be a set and in any case  not a well behaving object for a security proof.
Instead, natural, well understood, and indeed also well behaving objects for a general description are provided by the commuting-operator framework \cite{Tsirelson1993,Junge2010,Junge2011}. Here the language of universal $C^*$-algebras is used for modeling quantum systems. The beauty of this approach is that we can recover any possible implementation of an experiment by representations of this algebra. This is, 
every concrete implementation of devices corresponds to a representation of this algebra, and conversely every representation gives such an implementation.

Throughout this work we will consider bipartite protocols involving two parties Alice and Bob, see \autoref{fig:diqkd_sketch}. A generalization to protocols with more parties \cite{bluhm2026device,kossmann2025routed,Lobo2024} is straight forward. The local measurement device of a party, say Alice,  is modeled by assigning an abstract effect  $M_{a|x}$ to each event with input $x$ and output $a$. Demanding that $M_{a|x}\geq 0$ and $\sum_a M_{a|x}=1$ is then already sufficient to define \cite{Blackadar2006} the abstract $C^*$ algebra $\mathcal{A}$   generated by the $M_{a|x}$. It contains all observables Alice could measure. The algebra $\mathcal{B}$ of Bob is defined correspondingly. A system that contains  Alice and Bob is then given by any algebra $\mathcal{C}_{AB}$ that contains $\mathcal{A}$ and $\mathcal{B}$. The fact that Alice and Bob have independent laboratories is modeled by imposing that $\mathcal{A}$ and $\mathcal{B}$ commute. This is where the name of the framework originates.  
\input{figure-qkd}. Now, a representation of an algebra $\mathcal{C}_{AB}$ is in essence a  linear map $\pi:\mathcal{C}_{AB} \to \mathcal{B}(\mathcal{H})$ that assigns bounded operators on a concrete Hilbertspace $\mathcal{H}$ to the abstract observables from before. This is done in a way that preserves the algebra structure. A sufficiently large collection of representations can be obtained by the GNS-construction, which assigns a concrete Hilbertspace realization to every state $\omega$. 

What this ansatz not contains is the guarantee that, for an representation on $\mathcal{H}$, we can always find local Hilbertspaces $\mathcal{H}_A$, $\mathcal{H}_B$, and an environment $\mathcal{H}_E$ such that we could decompose  $\mathcal{H}=\mathcal{H}_A\otimes \mathcal{H}_B\otimes\mathcal{H}_{E}$. So in essence, we have that tensor products of Hilbert spaces do not suffice to describe all valid actions and attacks on an DIQKD system held by Alice Bob. This point is important as here most of the existing DIQKD security proofs more or less outspokenly deviate from the general commuting operator model by restricting to situations in which such a tensor product decomposition exists \cite{ArnonFriedman2020,Tan_2021,Brown_2024}. A major reason for that might be that in this case modern finite dimensional building blocks like  left over hashing  \cite{Ren05, Dupuis2023, regula2026rethinkingquantumsmoothentropies} or entropy accumulation \cite{ArnonFriedman2020,Metger2024,Metger2024} and its corresponding entropy definitions can be more or less smoothly be taken over. Moreover, worst case models of an attacker, i.e.~the Hilbertspace $\mathcal{H}_{E}$ and the often made simplification that all measurements can w.l.o.g.~be assumed to be projective, build on Stinespring dilation arguments \cite{Tan_2021} that, up to now, demanded the tensor product structure. 

The question whether this restriction is only a technicality or reflects a genuine loss of generality that is observable in an experiment is precisely the content of the various forms of Tsirelson's problem and Cones embedding conjecture \cite{Kirchberg1993,Tsirelson1993,OZAWA2004,PrezGarca2008,Junge2010,Paulsen2016,Slofstra_2019,ji2022mipre}. This question is now known to have a negative answer \cite{SLOFSTRA_2019_not_closed,Slofstra_2019,MIPstar=RE}.  The question whether this has relevant cryptographic implications is however still unclear. An operational characterization of states that admit a tensor product representation was given in \cite{van2024schmidt} in terms of a generalized Schmidt rank. From this perspective the assumption of a tensorproduct structure can be seen as assuming only a finite amount of entanglement between Alice and Bob, which is a clear deviation from full generality when considering an attacker Eve only limited by fundamental laws of quantum physics.  

Lastly a technical problem arises in the process of numerically computing secure key rates. For protocols beyond the CHSH setting, the standard path \cite{Tan_2021,Brown_2024,kossmann2025reliableentropyestimationobserved} for this is to reduce this task to an non-commutative polynomial optimization problem which is the relaxed by the Navascues--Pironio--Acín (NPA) hierarchy \cite{Navascus2007,Navascus2008}, one of the main numerical tools in device-independent quantum information.  These hierarchies however attain their limit in the commuting-operator model and not in a fixed  tensor product model. 

In this work, we close these structural gaps for an AEP based secrecy analysis of DIQKD. We formulate Alice's and Bob's devices by universal $C^*$-algebras generated by their measurement effects and describe the joint Alice--Bob observable algebra using the maximal tensor product. After fixing a state compatible with the observed statistics, the GNS construction gives a represented von Neumann algebra for Alice and Bob. Eve's side information is then not introduced as an external tensor factor, but as the commutant of this represented algebra. We next prove the corresponding POVM-to-PVM reduction. In a Hilbert space model one would normally appeal to Naimark dilation, but in the commuting-operator setting the dilation has to preserve not only Alice and Bob's observed statistics but also all correlations with arbitrary operators in Eve's commutant. We show that this can be done, so that local POVMs may be replaced by PVMs without loss of generality for the secrecy relevant optimization. Since Eve's algebra may be an arbitrary von Neumann algebra, the relative entropy, which is the central quantity of a security analysis, has to be expressed using modular operators \cite{kosaki_1986_variationalformula} rather than by the explicit formula $\tr(\rho( \log \rho-\log \sigma))$ \cite{umegaki1962conditional}. A core technical tool, is the extension of Frenkel's integral representation \cite{Frenkel2023} to von Neumann algebras. For which we have to generalize the operator layer cake representation of the Gateaux derivative of the logarithm \cite{liu2025layercakerepresentationsquantum} accordingly. Combining this new entropy formula with the asymptotic equipartition theorem in von Neumann algebras \cite{Berta_2015} and the techniques of \cite{kossmann2025reliableentropyestimationobserved} yields a noncommutative polynomial optimization problem for the amount of randomness that can be securely extracted in the presence of an attacker. This formulation is fully compatible with underlying algebraic structure, and therefore makes NPA-type relaxations rigorous within the commuting-operator framework. In this sense, we provide a representation independent route from observed statistics to an AEP based security statement, with all objects formulated inside the commuting-operator framework. 

This paper is structures as follows. In the subsequent  \autoref{sec:results}  we briefly comment on two theorems  that build the technical core of this work and might be from general interest. In \autoref{sec:prelim} we briefly review the  basic algebraic concepts needed to follow the content of this work. In \autoref{sec:qkd} the basic concepts and challenges of a DIQKD security proof are outlined. In \autoref{sec:commop} a formulation of DIQKD in the commuting operator framework is worked out in detail. In \autoref{sec:operator_layer_cake} our main technical tool, the von Neumann version of the operator layer cake theorem is provided. In \autoref{sec:approx} we collect the numerical tools needed for computing key rates in a fully DI setting and proof their convergence.  

\section{Challenges and Results}\label{sec:results}

Considering DIQKD from the perspective of quantum games turns out to be a very powerful and mathematically rigorous point of view for the analysis of the underlying experiment, which we already sketched in \autoref{fig:diqkd_sketch}. However, several serious challenges arise the moment we leave the well-established perspective of Hilbert space based quantum theory. Indeed, the definition of security of a DIQKD protocol needs to be adapted, and most of the tools required to claim security in a rigorous mathematical manner need to be developed within the commuting operator framework, which is the correct framework for analyzing quantum games. In this work, we do not aim to lift all techniques available in the Hilbert space world to the commuting operator framework, as this would go strictly beyond the scope of this work. To nevertheless deliver all results with mathematical rigor, we use a leftover hashing result from \cite{Berta_2015}, which turns a one-shot security statement into an entropic statement, and we apply the simplest form of a reduction theorem for an $n$-round procedure of a DIQKD protocol sketched in \autoref{fig:diqkd_sketch}, namely the asymptotic equipartition (AEP) theorem, which is available on von Neumann algebras \cite{Fawzi_2025}. Even though both results are not state of the art in the device-dependent/Hilbert space world (cf.~\cite{Dupuis2023} for a state-of-the-art privacy amplification result and \cite{Metger2024} for an entropy accumulation theorem), both results are formulated in the commuting operator framework and thus fit into our point of view.
After collecting statistics in an $n$-round procedure of \autoref{fig:diqkd_sketch}, and even assuming that the rounds are independent and identically distributed such that we can apply the AEP from \cite{Fawzi_2025}, it has been an open question in the commuting operator framework whether we are allowed to assume that the single-round reduction from the AEP can be considered with projective measurements instead of general effects (cf.~e.g.~\cite{Tan_2021}). Roughly speaking, the reason for this is that we are not allowed to assume that the purification of any state lives in a fixed additional Hilbert space, which could simply be tensored on as one would do in a Naimark dilation argument in Hilbert spaces (see also \autoref{subsec:randomness_extraction_DIQKD}). Our first result is a dilation theorem showing that, in a certain sense, an embedding of the commutants of the Gelfand-Naimark-Segal representation of the underlying state on the effect algebra into the algebra spanned by projection-valued measurements is possible.
\begin{theorem}[Dilation theorem, informal]
    Let $\psi$ be a state on the POVM algebra with GNS triple with purification $\zeta_\psi$. Then there exist
    \begin{enumerate}
        \item[(a)] a Hilbert space $\mathcal{H}$,
        \item[(b)] a unit vector $\xi \in \mathcal{H}$,
        \item[(c)] a unital $\star$-representation from the PVM-algebra into $\mathcal{B}\!\left(\mathcal{H}\right)$,
        \item[(d)] an injective normal unital $\star$-homomorphism connecting the commutants,
    \end{enumerate}
    such that $\zeta_\psi$ behaves on the effect algebra and its commutant as $\xi$ behaves on the PVM-algebra and its commutant.
\end{theorem}
Given this dilation theorem, we can show that the optimization problems of the single-round entropy can be computed without loss of generality with the assumption that Alice and Bob have PVM's at their disposal (cf.~\autoref{cor:povm_pvm_equality}). 

Randomness is quantified by entropic quantities, and thus the natural objective function in the optimization problem resulting from the AEP reduction to a single round involves, in this case, the conditional von Neumann entropy on a general von Neumann algebra. As we are not allowed to assume that Eve always admits a type-$I$ factor description, we must assume that the adversary's information is correctly described only by a general, possibly type-$III$, von Neumann algebra. However, the techniques available for the optimization of the conditional von Neumann entropy for DIQKD assumed at some point a Hilbert space model and tensor products and did not consider the general case \cite{Brown_2024,kossmann2025reliableentropyestimationobserved}. To overcome this issue and to show that the technique from \cite{kossmann2025reliableentropyestimationobserved} leads to the correct optimization problem in the commuting operator framework, we provide in this work a proof of the integral formula by Frenkel \cite{Frenkel2023} in the case of von Neumann algebras. To achieve this, we choose the path via the operator layer cake theorem introduced in \cite{cheng2025errorexponentsquantumpacking} for matrices. To be concrete, we prove the following main theorem, which is an integral representation for the Gateaux derivative of the operator logarithm.
\begin{theorem}[Operator layer cake]
Let $\mathcal{H}$ be a separable Hilbert space and let $A,B\in\mathcal{B}(\mathcal{H})$ with $A\geq \delta I$ for some $\delta > 0$ and $B=B^\ast$.
Then the following identity holds in strong operator topology:
\begin{align}
\operatorname{D}\log(A)[B]
=
\int_{0}^{\infty} H(B-uA) \, du
-
\int_{-\infty}^{0} \left(I-H(B-uA)\right)du,
\end{align}
where $H(\cdot)$ denotes the Heaviside function defined in \eqref{eq:def_heaviside} below. 
\end{theorem}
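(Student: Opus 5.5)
The plan is to reduce the statement to the finite-dimensional operator layer cake theorem of \cite{cheng2025errorexponentsquantumpacking} by compressing to finite-dimensional subspaces, and then pass to the limit on both sides in the strong operator topology. Separability of $\mathcal{H}$ is used exactly here. We choose an increasing sequence of finite-rank orthogonal projections $P_n\uparrow I$ strongly. We set $A_n:=P_nAP_n+\delta(I-P_n)$ and $B_n:=P_nBP_n$. Then $A_n\geq\delta I$ and $\|A_n\|\le\|A\|+\delta$. All operators are block diagonal with respect to $P_n\mathcal{H}\oplus(I-P_n)\mathcal{H}$, and on the infinite-dimensional block everything is scalar. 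So the matrix identity, applied on $P_n\mathcal{H}$ together with the trivial scalar identity $0=\int_0^\infty H(-u\delta)\,du-\int_{-\infty}^0(1-H(-u\delta))\,du$ on the complement, gives the claimed formula for the pair $(A_n,B_n)$. For the scalar identity we fix the convention of \eqref{eq:def_heaviside} that $H$ is the spectral projection onto $[0,\infty)$ or $(0,\infty)$. The convention will not matter, by the a.e. argument below.

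For the left-hand side we use the resolvent representation of the Gateaux derivative,
\begin{align}
\operatorname{D}\log(A)[B]=\int_0^\infty (A+s)^{-1}B(A+s)^{-1}\,ds,
\end{align}
which follows from $\log A=\int_0^\infty\big((1+s)^{-1}-(A+s)^{-1}\big)ds$ and the resolvent identity. The integral converges in norm, since the integrand is bounded by $\|B\|(\delta+s)^{-2}$. Since $A_n\to A$ and $B_n\to B$ strongly with uniformly bounded norms, and $A_n+s\geq \delta+s$, the identity $(A_n+s)^{-1}-(A+s)^{-1}=(A_n+s)^{-1}(A-A_n)(A+s)^{-1}$ gives $(A_n+s)^{-1}\to(A+s)^{-1}$ strongly. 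Products of uniformly bounded strongly convergent sequences converge strongly. Hence for each vector $\psi$ the integrands converge pointwise in $s$. The uniform bound $\|B\|\|\psi\|(\delta+s)^{-2}$ and dominated convergence then give $\operatorname{D}\log(A_n)[B_n]\psi\to\operatorname{D}\log(A)[B]\psi$.

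For the right-hand side we first localise the integrals. Let $c:=\sup_n\|B_n\|/\delta\le\|B\|/\delta$. For $u>c$ we have $B_n-uA_n<0$, so $H=0$; for $u<-c$ we have $B_n-uA_n>0$, so $I-H=0$. The same holds for $(A,B)$. Hence both integrals live on $[-c,c]$ with integrands bounded by $1$ in norm. By dominated convergence (applied vectorwise, using the Bochner/strong integral), it therefore suffices to show $H(B_n-uA_n)\to H(B-uA)$ strongly for almost every $u$. This is the step I expect to be the main obstacle: spectral projections are discontinuous functions of the operator, so strong convergence of $X_n:=B_n-uA_n\to X:=B-uA$ does not by itself suffice.

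The plan for this step is to invoke the standard fact that strong convergence of uniformly bounded self-adjoint operators implies strong resolvent convergence, and hence $\mathds{1}_{[0,\infty)}(X_n)\to\mathds{1}_{[0,\infty)}(X)$ strongly whenever $0$ is not an eigenvalue of $X$ (Reed--Simon, Thm.~VIII.24). It then remains to show that $\ker(B-uA)\neq\{0\}$ only for countably many $u$. Indeed, $(B-uA)v=0$ with $v\ne0$ is equivalent to $A^{-1/2}BA^{-1/2}w=uw$ with $w=A^{1/2}v\neq0$. Thus such $u$ are eigenvalues of the bounded self-adjoint operator $A^{-1/2}BA^{-1/2}$. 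Eigenvectors for distinct eigenvalues are orthogonal, so by separability there are at most countably many such $u$, a Lebesgue null set. The same observation shows that the choice of convention for $H$ at $0$ is immaterial. Combining the convergence of both sides, the identity passes from $(A_n,B_n)$ to $(A,B)$ in the strong operator topology.
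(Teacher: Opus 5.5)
Your proof is correct, but it takes a genuinely different route from the paper.

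\textbf{Your route.} You take the matrix layer cake formula of \cite{cheng2025errorexponentsquantumpacking,liu2025layercakerepresentationsquantum} as a black box. You transfer it to $\mathcal{B}(\mathcal{H})$ through the block-diagonal compressions $(A_n,B_n)$. Everything then rests on two continuity statements:
\begin{enumerate}
\item strong continuity of $\operatorname{D}\log(\cdot)[\cdot]$, which you get cleanly from the resolvent integral $\int_0^\infty (A+s)^{-1}B(A+s)^{-1}\,ds$;
\item strong convergence $H(B_n-uA_n)\to H(B-uA)$ of the discontinuous functional calculus. You obtain this off the countable set of eigenvalues of $A^{-1/2}BA^{-1/2}$ and of the $A_n^{-1/2}B_nA_n^{-1/2}$. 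This is the same observation that underlies the paper's \autoref{lem:vanishing_terms}.
\end{enumerate}

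\textbf{The paper's route.} The paper proves the identity directly in infinite dimensions, and so reproves the matrix case along the way.
\begin{enumerate}
\item It regularizes $H$ by $f_\varepsilon^\pm(x)=\tfrac12\pm\tfrac1\pi\arctan(x/\varepsilon)$ and writes these as differences of operator logarithms.
\item It uses Cauchy's theorem to move the $u$-integral over $[-r,r]$ onto semicircles of radius $r>\Vert B\Vert/\delta$, where $B-zA$ is invertible.
\item It lets $\varepsilon\downarrow0$ on these arcs, in norm.
\item It integrates by parts on the full circle. This produces the circle average of $\operatorname{D}\log(A-z^{-1}B)[B]$, which tends to $\operatorname{D}\log(A)[B]$ as $r\to\infty$.
\end{enumerate}

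\textbf{What each buys.} The paper's argument is self-contained and never approximates operators. Its only discontinuous limit is $f_\varepsilon^\pm\to H$ for one fixed operator, handled by dominated convergence in that operator's spectral measure. The convention $H(0)=\tfrac12$ comes out automatically because $f_\varepsilon^\pm(0)=\tfrac12$. Separability plays no essential role there. Your argument is shorter and more elementary once the matrix theorem is granted. It does depend on that imported result, and it uses separability twice: for $P_n\uparrow I$ and for the countability of the exceptional $u$.

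\textbf{A small slip.} The paper's $H$ in \eqref{eq:def_heaviside} has $H(0)=\tfrac12$, so $H(X)$ is not a spectral projection in general. This is harmless: if you take the exceptional set to be the union over all $n$ together with the limit pair, all conventions agree for almost every $u$.
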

With some technical effort, we are able to show the integral formula by Frenkel for the relative entropy in its general form, establishing another integral formula for the relative entropy next to the Kosaki formula \cite{kosaki_1986_variationalformula}. 
\begin{theorem}[Frenkel formula for general von Neumann algebras]
Let $\mathcal{M}$ be a von Neumann algebra and let $\rho,\sigma\in\mathcal{M}_+^\star$ \rs{ with support $s_{\mathcal{M}}(\rho)\leq s_{\mathcal{M}}(\sigma)$}. Then
\begin{align}
D(\rho\Vert\sigma)
=
\left(\rho-\sigma\right)(1)
+
\int_0^1 \frac{1}{s}\,\left(s\sigma-\rho\right)_+(1)\,ds
+
\int_1^\infty \frac{1}{s}\,\left(\rho-s\sigma\right)_+(1)\,ds,
\end{align}
with both sides possibly infinite.
\end{theorem}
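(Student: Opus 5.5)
Our plan is to reduce the von Neumann algebra statement to the operator layer cake theorem, using Araki's relative modular operator in the standard form. Concretely, we realize $\mathcal{M}$ in standard form on $L^2(\mathcal{M})$. We let $\xi_\rho,\xi_\sigma$ be the representing vectors in the natural cone and $\Delta_{\rho,\sigma}$ the relative modular operator. Under the support assumption $s(\rho)\le s(\sigma)$ we have the spectral representation
\begin{align}
D(\rho\Vert\sigma)=-\langle \xi_\rho,\log\Delta_{\sigma,\rho}\,\xi_\rho\rangle,
\end{align}
which is equivalent to Kosaki's variational formula \cite{kosaki_1986_variationalformula}. We first treat the \emph{regular case} $\epsilon\sigma\le\rho\le\epsilon^{-1}\sigma$ for some $\epsilon>0$, and remove this restriction at the end by approximation.

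\textbf{Scalar identity.} For $t>0$,
\begin{align}
t\log t - t + 1 = \int_0^1 \tfrac{1}{s}(s-t)_+\,ds + \int_1^\infty \tfrac1s (t-s)_+\,ds ,
\end{align}
which one checks by differentiating twice in $t$. Applying the spectral theorem to the relative modular operator, we would like to read off
\begin{align}
\left(\rho-s\sigma\right)_+(1)
\end{align}
as the spectral integral of $(t-s)_+$ against the spectral measure of $\Delta_{\rho,\sigma}$ in the vector $\xi_\sigma$. Integrating the scalar identity against this measure, and using $\langle\xi_\sigma,\Delta_{\rho,\sigma}\xi_\sigma\rangle=\rho(1)$ and $\langle \xi_\sigma,\Delta\log\Delta\,\xi_\sigma\rangle=D(\rho\Vert\sigma)$, then gives the claim by Tonelli, since all integrands are nonnegative.

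\textbf{Main obstacle.} The difficulty is that $(\rho-s\sigma)_+$ is the positive part in the Jordan decomposition of the \emph{functional} $\rho-s\sigma\in\mathcal{M}_*$, and $(\rho-s\sigma)_+(1)=\sup_{0\le e\le1}(\rho-s\sigma)(e)$. This is not in general the spectral quantity $\langle\xi_\sigma,(\Delta-s)_+\xi_\sigma\rangle$. In the matrix case the two coincide only in the commuting situation; in general the Frenkel integrand is the trace of a positive part of $\rho-s\sigma$, not of a modular function.

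So the actual route must go through the derivative of the logarithm, which is why the operator layer cake theorem is stated above. In the finite-dimensional proof one writes $D=\int_0^1 \frac{d}{d\lambda}(\ldots)$ along the segment $\rho_\lambda=\sigma+\lambda(\rho-\sigma)$. The derivative $\mathrm{D}\log(\rho_\lambda)[\rho-\sigma]$ is then expanded via the layer cake formula into Heaviside projections $H(\rho-\sigma-u\rho_\lambda)$. After changes of variables, these become the projections onto the positive parts of $\rho-s\sigma$, and their traces are $(\rho-s\sigma)_+(1)$.

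For a general $\mathcal{M}$ we would implement this along the following steps.
\begin{enumerate}
\item Pass to a semifinite setting via the Haagerup crossed product $\mathcal{M}\rtimes_{\sigma}\mathbb{R}$ with its trace $\tau$. There $\rho,\sigma$ become $\tau$-measurable operators $h_\rho,h_\sigma$, and $(\rho-s\sigma)_+(1)=\mathrm{tr}\,(h_\rho-sh_\sigma)_+$.
\item Use the representation $D(\rho\Vert\sigma)=\mathrm{tr}\,h_\rho(\log h_\rho-\log h_\sigma)$, which is meaningful in the regular case.
\item Apply the layer cake theorem (extended to affiliated operators bounded below on supports) to $\frac{d}{d\lambda}\log h_{\rho_\lambda}$.
\item Exchange the $\lambda$, $u$ integrals and the trace, and conclude with the scalar identity.
\end{enumerate}

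The delicate points are the normality and trace-exchange arguments in the crossed product. For the final approximation step we would use lower semicontinuity of $D$ together with monotone convergence of the right-hand side, e.g.\ replacing $\sigma$ by $\sigma+\epsilon\rho$.
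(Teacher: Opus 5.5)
Your first idea is rightly discarded (the spectral measure of $\Delta_{\rho,\sigma}$ in $\xi_\sigma$ gives $\langle\xi_\sigma,(\Delta_{\rho,\sigma}-s)_+\xi_\sigma\rangle$, not $(\rho-s\sigma)_+(1)$). The route you settle on, however, has a genuine gap at its central step (3).

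In the continuous core $\mathcal{M}\rtimes\mathbb{R}$ the Haagerup densities satisfy $\theta_s(h_\rho)=e^{-s}h_\rho$ under the dual action. So the spectrum of every nonzero $h_\rho$, also restricted to its support, is invariant under all positive dilations and equals $[0,\infty)$. Hence $h_\rho$ is unbounded and \emph{never} bounded below on its support, however regular $\rho$ and $\sigma$ are relative to each other. Three consequences follow:
\begin{itemize}
\item the hypothesis of your ``extended'' layer cake theorem is never met;
\item $\log h_{\rho_\lambda}$ is unbounded in both directions, so $\mathrm{tr}(h_\rho\log h_{\rho_\lambda})$ is undefined;
\item at best differences such as $\log h_{\rho_\lambda}-\log h_\sigma$ make sense, and you would first have to show they are bounded elements of $\mathcal{M}$.
\end{itemize}
The paper's layer cake theorem is proved only for bounded $A\ge\delta I$ and bounded $B$. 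Its proof truncates the $u$-integral at $r>\Vert B\Vert/\delta$ and uses the holomorphic functional calculus of bounded operators, and none of this transfers directly.

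The real difficulty is therefore not the trace exchange you flag: on $\mathcal{M}$ one has $\mathrm{tr}(h_\rho\,x)=\rho(x)$, which is normal. What you actually need is that $\lambda\mapsto\log h_{\rho_\lambda}-\log h_\sigma$ is differentiable in $\mathcal{M}$, with derivative given by a layer-cake integral of Jordan support projections, together with $D(\rho\Vert\sigma)=\mathrm{tr}\,h_\rho(\log h_\rho-\log h_\sigma)$. This carries essentially the whole content of the theorem, and your plan asserts it rather than proves it.

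The paper never touches unbounded densities. After reducing to states and to faithful $\sigma$ (cutting down by $s_{\mathcal{M}}(\sigma)$), it applies Haagerup's \emph{reduction} to the dominating reference state $\varphi=\frac{1}{2}(\rho+\sigma)$. This gives an ambient algebra $\widetilde{\mathcal{M}}\supseteq\mathcal{M}$ with conditional expectation $E$ and an increasing sequence of finite subalgebras $(\mathcal{M}_n,\tau_n)$ with $\sigma$-weakly dense union. Since $\rho,\sigma\le2\varphi$, and $\varphi\circ E$ restricted to $\mathcal{M}_n$ has density $e^{a_n}$ with respect to $\tau_n$ where $0\le a_n\le\alpha_n$, the restrictions $\rho_n,\sigma_n$ have \emph{bounded} densities. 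So only the bounded layer cake theorem is ever needed, through the finite tracial formula. The limit then uses $D(\rho_n\Vert\sigma_n)\to D(\rho\Vert\sigma)$ from the reduction, and monotone convergence of both integrands from $\omega_+(1)=\frac{1}{2}(\Vert\omega\Vert+\omega(1))$ and the density of $\bigcup_n\mathcal{M}_n$.

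Two smaller points would also need repair.
\begin{itemize}
\item Step (4) is not finished by your scalar identity, since there is no joint spectral variable to apply it to (your own obstacle). After the layer cake and the change of variables one has $\int_1^\infty\gamma^{-1}\rho(p_\gamma-q_\gamma)\,d\gamma$, with $p_\gamma,q_\gamma$ the strictly positive spectral projections of $\rho-\gamma\sigma$ and $\sigma-\gamma\rho$. Converting this into Frenkel's integrand requires the mass-balance argument from convexity of $F(s)=(\rho-s\sigma)_+(1)$ with $F'(s)=-\sigma(p_s)$ almost everywhere. It also requires the fact that kernel projections contribute only on a null set.
\item The perturbation $\sigma\mapsto\sigma+\epsilon\rho$ gives no lower bound $c(\sigma+\epsilon\rho)\le\rho$, which is impossible whenever $s_{\mathcal{M}}(\rho)<s_{\mathcal{M}}(\sigma)$. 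So it does not reach your regular case. Both arguments must be perturbed, and $\limsup D$ then has to be controlled by joint convexity, as in the paper's finite tracial step, rather than by monotonicity.
\end{itemize}
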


Given the integral formula for the relative entropy, it turns out that the discretization techniques from \cite{kossmann2025reliableentropyestimationobserved} can be adopted to the setup of von Neumann alegbras and deliver for fixed states $\rho$ and $\sigma$ an approximation. However, what is a priori not clear is that this is compatible with the Navascues-Pironio-Acin (NPA)-hierarchy \cite{Navascus2007,Navascus2008} as the NPA-hierarchy optimizes over the maximal tensor product of universal $\mathcal{C}^\star$-algebras as shown in \cite{Ligthart_2023}. For this purpose we show that this step is indeed justified and the optimization of the conditional von Neumann entropy can be indeed executed in the framework of the NPA-hierachy and a corresponding universal $\mathcal{C}^\star$-algebra (cf.~\autoref{thm:equivalence_convergence}).

\section{Preliminaries on operator algebras}\label{sec:prelim}
\rs{Before turning to the device-independent cryptographic setting, we recall the operator-algebraic notions used throughout the paper. The aim of this section is not to give a self-contained introduction to the theory of von Neumann algebras, but rather to collect the concepts needed by a reader with a background in quantum information theory to follow the later construction. We begin with basic $C^*$-algebraic terminology, including positivity, states, representations, and maximal tensor products, since these provide the language in which commuting-operator models of Bell experiments are formulated. We then recall the corresponding von Neumann algebraic notions: normal states, traces, type decompositions at the level needed here, standard forms, support projections, and Araki relative entropy. Finally, we record two technical tools used later in the paper: Haagerup reduction, which allows certain arguments to be reduced to the finite tracial case, and the functional calculus for the operator logarithm, whose directional derivative is the starting point for the layer-cake representation of relative entropy.}

In this work we consider $\mathcal{C}^\star$-algebras $\mathcal{A}$, which are Banach-algebras equipped with a conjugate-linear isometric anti-automorphism of order two \cite{Blackadar2006}, such that the norm $\Vert \cdot\Vert$ satisfies $\Vert a^\star a\Vert = \Vert a\Vert^2$ for all elements $a\in \mathcal{A}$. An element $a \in \mathcal{A}$ is called positive, if there exists $b\in \mathcal{A}$ such that $a = b^\star b$. The closed convex cone of positive elements is denoted with $\mathcal{A}_+$. A representation of a $\mathcal{C}^\star$-algebra $\mathcal{A}$ is a $\star$-homomorphism from $\mathcal{A}\to \mathcal{B}\!\left(\mathcal{H}\right)$ for some Hilbert space $\mathcal{H}$. A representation $\pi$ is called non-degenerate if $\mathcal{H} = \overline{\{\pi(x)\xi \ \vert \ x \in \mathcal{A},\xi \in \mathcal{H}\}}^{\Vert \cdot \Vert}$. Let $\mathcal{A}$ and $\mathcal{B}$ be $\mathcal{C}^\star$-algebras. The \emph{algebraic tensor product} $\mathcal{A} \odot \mathcal{B}$ is the algebraic tensor product of vector spaces equipped with the $\star$-algebra structure
\begin{align}
(a_1 \otimes b_1)(a_2 \otimes b_2) \coloneqq  (a_1a_2)\otimes(b_1b_2),
\qquad
(a \otimes b)^* \coloneqq  a^\star \otimes b^\star.
\end{align}
For $x \in \mathcal{A} \odot \mathcal{B}$ define the \emph{maximal $\mathcal{C}^\star$-seminorm} by
\begin{align}
\Vert x\Vert_{\max} \coloneqq  \sup \left\{ \Vert (\pi_{\mathcal{A}} \cdot \pi_{\mathcal{B}})(x)\Vert 
\ \vert \  (\pi_{\mathcal{A}},\pi_{\mathcal{B}})\ \text{commuting $\star$-representations on some $\mathcal{H}$}\right\},
\end{align}
where $(\pi_\mathcal{A} \cdot \pi_{\mathcal{B}})(a \otimes b) \coloneqq  \pi_{\mathcal{A}}(a)\pi_{\mathcal{B}}(b)$. 
Then the \emph{maximal tensor product} $\mathcal{A} \otimes_{\max} \mathcal{B}$ is the completion of 
$\mathcal{A} \odot \mathcal{B} / \{x \ \vert \ \Vert x\Vert_{\max}=0\}$ with respect to $\Vert \cdot\Vert_{\max}$. 

We refer to the free product "$\star$" of $\mathcal{C}^\star$-algebras as the universal $\mathcal{C}^\star$-algebras by its subalgebras and relations. As we just consider algebras generated by positive operator valued measure constraints, it becomes clear how to build the universal $\mathcal{C}^\star$-algebras. We furthermore denote by $\ell^\infty(\Omega)$ for a countable set $\Omega$, the commutative $\mathcal{C}^\star$-algebra on the symbols of $\Omega.$
A von Neumann algebra $\mathcal{M} \subseteq \mathcal{B}(\mathcal{H})$ is a $\mathcal{C}^\star$-algebra, which satisfies $\mathcal{M}^{\prime \prime} = \mathcal{M}$. For von Neumann algebras $\mathcal{M}_1 \subseteq \mathcal{B}(\mathcal{H}_1)$ and $\mathcal{M}_2 \subseteq \mathcal{B}(\mathcal{H}_2)$, the spatial tensor product is defined as the weak$^\star$-closure of the algebraic tensor product $\mathcal{M}_1 \odot \mathcal{M}_2 \subseteq \mathcal{B}(\mathcal{H}_1 \otimes \mathcal{H}_2)$, which is denoted as $\mathcal{M}_1 \overline{\otimes} \mathcal{M}_2$.  
\subsection{States}
As $\mathcal{C}^\star$-algebras are Banach spaces, we can define its canonical dual space as the set of linear functionals $\omega :\mathcal{A} \to \mathbb{C}$, which are continuous with respect to the  $\mathcal{C}^\star$-norm $\Vert \cdot\Vert$. In particular positive functionals, i.e.~$\omega(a)\geq 0$ for all $a\in \mathcal{A}_+$, are denoted with $\mathcal{A}_+^\star$ and states are positive functionals with $\Vert \omega\Vert =1$, whereby
\begin{align}
    \Vert \omega \Vert \coloneqq \sup_{x \in \mathcal{A}, \ \Vert x\Vert\leq 1} \vert\omega(x)\vert.
\end{align}
For a von Neumann algebra $\mathcal{M}$ we denote by $\mathcal{M}_+^\star$ the set of normal linear functionals, i.e.~linear functionals, which are also weak$^\star$-continuous and $\mathcal{S}(\mathcal{M})$ the set of normal states.  

\subsection{Types and traces}

A \emph{trace} on a von Neumann algebra $\mathcal{M}$ is a map $\tau \colon \mathcal{M}_+ \to [0,\infty]$
which is additive, positively homogeneous, and satisfies the tracial property
\begin{align}
\tau(x^*x)=\tau(xx^*)
\qquad \text{for all } x\in \mathcal{M}.
\end{align}
A trace is called \emph{faithful} if $\tau(x)=0$ implies $x=0$ for every $x\in \mathcal{M}_+$, \emph{normal} if it is continuous with respect to increasing limits of positive operators, and \emph{semi-finite} if for every $x\in \mathcal{M}_+$ there exists $0\neq y\in \mathcal{M}_+$ with $y\leq x$ and $\tau(y)<\infty$.

The theory of von Neumann algebras is partly characterized by types, which, in a rough sense, classify von Neumann algebras according to the existence of a trace. For the purpose of this work, it is enough to introduce three classes:
\begin{enumerate}
    \item \emph{finite.} A finite von Neumann algebra admits a normal faithful trace which is finite, i.e.~$\tau(1)=1$.
    \item \emph{semi-finite.} A semi-finite von Neumann algebra admits a normal faithful semi-finite trace.
    \item \emph{type-III.} A type-III von Neumann algebra is not semi-finite.
\end{enumerate}
The simplest example of finite, non-commutative von Neumann algebras are full matrix algebras, which we denote with $\mathbb{M}_n(\mathbb{C})$ for $n\times n$-dimensional matrices. 

\subsection{Standard form and relative entropy}
The standard form of a von Neumann algebra \cite{Takesaki2003} $\mathcal{M}$ is defined by a quadruple $(\mathcal{M},\mathcal{H},J,P)$, whereby $\mathcal{H}$ is a Hilbert space, $J$ is an injective anti-linear isometry on $\mathcal{H}$ and $P \subseteq \mathcal{H}$ is a self-dual cone and  with an injective $\star$-homomorphism $\pi:\mathcal{M} \to \mathcal{B}(\mathcal{H})$, such that 
\begin{enumerate}
\item $J^2 = 1$, $J\mathcal{M}J = \mathcal{M}^\prime$,
\item $JaJ = a^\star$ for $a \in \mathcal{M} \cap \mathcal{M}^\prime$,
\item $J\xi = \xi$, $\xi \in P$,
\item $aJaJP \subseteq P$ for $a \in \mathcal{M}$.
\end{enumerate}
Given the standard form, there exists a unique vector $\xi_\phi \in P$ for each positive normal functional $\phi\in\mathcal{M}_+^\star$ implementing $\phi$ via
\begin{align}
    \phi(x) = \langle\xi_\phi\vert x \xi_\phi\rangle,\qquad x \in \mathcal{M}.
\end{align}
For positive normal linear functions $\phi \in \mathcal{M}_+^\star$ (and indeed even positive forms) there exists a unique projection $s_{\mathcal{M}}(\phi)$ with the property 
\begin{align}
    \phi(x) = \phi(xs_{\mathcal{M}}(\phi)) = \phi(s_{\mathcal{M}}(\phi)x), \quad x \in \mathcal{M}.
\end{align}
This projection is called \emph{support projection} of $\phi$ (cf.~\cite[sec.~5.15]{Stratila2019}). 

Let $\psi,\phi \in \mathcal{M}_+^\star$ with vector representations $\xi_\psi,\xi_\phi \in P$. Then define
\begin{align}
    S_{\psi,\phi}^0 (a \xi_\phi + \eta) = s_{\mathcal{M}}(\phi)a^\star \xi_\psi, \qquad a \in \mathcal{M}, \ \eta \in (1-s_{\mathcal{M}^\prime}(\phi))\mathcal{H}.
\end{align}
This is a closable and anti-linear operator, whereby we denote with $S_{\psi,\phi}$ its closure. The relative modular operator is defined as 
\begin{align}
    \Delta(\psi,\phi) \coloneqq S_{\psi,\phi}^\star S_{\psi,\phi}.
\end{align}

For elements $\rho \in \mathcal{S}_\leq(\mathcal{M})$ and $\sigma \in \mathcal{M}_+^\star$ with $s_{\mathcal{M}}(\rho)\leq s_{\mathcal{M}}(\sigma)$, the Umegaki relative entropy is defined as (cf.~\cite{Araki1975}) 
\begin{align}\label{eq:def_araki_relative_entropy}
    D(\rho \Vert \sigma) \coloneqq -\langle \xi_\rho,\log  \Delta(\sigma,\rho)\xi_\rho\rangle.
\end{align}

We have the following properties of the Umegaki relative entropy, which readily follow from \eqref{eq:def_araki_relative_entropy}.
\begin{lemma}\label{lem:araki_order_scaling}
Let $\rho\in \mathcal S_{\le}(\mathcal M)$ and let
$\sigma,\sigma_1,\sigma_2\in \mathcal M_+^\ast$ with
$s_{\mathcal M}(\rho)\le s_{\mathcal M}(\sigma)$.

\begin{enumerate}
    \item[(i)] If $\sigma_1\le \sigma_2$ and
    $s_{\mathcal M}(\rho)\le s_{\mathcal M}(\sigma_1)$, then
    \begin{align}
        D(\rho\Vert \sigma_2)\le D(\rho\Vert \sigma_1).
    \end{align}

    \item[(ii)] For every $\lambda>0$,
    \begin{align}
        D(\rho\Vert \lambda \sigma)
        =
        D(\rho\Vert \sigma)-\rho(1)\log \lambda.
    \end{align}
    In particular, if $\rho$ is a state, then
    \begin{align}
        D(\rho\Vert \lambda \sigma)
        =
        D(\rho\Vert \sigma)-\log \lambda.
    \end{align}
\end{enumerate}
\end{lemma}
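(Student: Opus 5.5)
Both properties will be read off from the definition \eqref{eq:def_araki_relative_entropy}, $D(\rho\Vert\sigma)=-\langle\xi_\rho,\log\Delta(\sigma,\rho)\xi_\rho\rangle$. The work is to understand how the relative modular operator $\Delta(\sigma,\rho)$ changes when $\sigma$ is scaled or enlarged. The vector $\xi_\rho$ and the support projections attached to $\rho$ do not move, so only the $\sigma$-dependence matters.

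\textbf{Part (ii).} The first step is to show that $S_{\lambda\sigma,\rho}=\sqrt{\lambda}\,S_{\sigma,\rho}$.
\begin{itemize}
\item In the standard form, the representing vector of $\lambda\sigma$ in $P$ is $\xi_{\lambda\sigma}=\sqrt{\lambda}\,\xi_\sigma$. This follows from uniqueness of the representing vector and the fact that $P$ is a cone.
\item The pre-closure operator $S^0_{\sigma,\rho}$ is defined on $a\xi_\rho+\eta$ by $s_{\mathcal M}(\rho)a^\star\xi_\sigma$. Replacing $\xi_\sigma$ by $\sqrt{\lambda}\,\xi_\sigma$ multiplies $S^0_{\sigma,\rho}$ by the real number $\sqrt{\lambda}$, and this factor survives taking the closure.
\end{itemize}
Consequently $\Delta(\lambda\sigma,\rho)=\lambda\,\Delta(\sigma,\rho)$. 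Functional calculus for the positive self-adjoint operator $\Delta(\sigma,\rho)$ gives $\log(\lambda\Delta)=\log\Delta+(\log\lambda)\,s(\Delta)$. On the support of $\Delta(\sigma,\rho)$ the identity $\log(\lambda t)=\log t+\log\lambda$ holds pointwise, and the spectral projection onto $\ker\Delta$ contributes nothing because $\xi_\rho$ is orthogonal to it.

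That orthogonality uses the hypothesis $s_{\mathcal M}(\rho)\le s_{\mathcal M}(\sigma)$. Under it, the spectral measure of $\Delta(\sigma,\rho)$ in the vector $\xi_\rho$ puts no mass at $0$. Hence
\begin{align*}
D(\lambda\sigma\Vert\cdot)\ \text{term:}\quad -\langle\xi_\rho,\log\Delta(\lambda\sigma,\rho)\xi_\rho\rangle=D(\rho\Vert\sigma)-\Vert\xi_\rho\Vert^2\log\lambda .
\end{align*}
Since $\Vert\xi_\rho\Vert^2=\rho(1)$, this is exactly (ii). The state case is immediate from $\rho(1)=1$. If $D(\rho\Vert\sigma)=\infty$, both sides are infinite; to make this rigorous, split the spectral integral into the parts on $(0,1]$ and $(1,\infty)$, where the positive part is finite.

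\textbf{Part (i).} This is operator monotonicity of the logarithm transported to unbounded modular operators. I would prove it in three steps.
\begin{itemize}
\item \emph{Quadratic-form comparison.} Since $\sigma_1\le\sigma_2$, for $a\in\mathcal M$ we have
\begin{align*}
\Vert\Delta(\sigma_1,\rho)^{1/2}a\xi_\rho\Vert^2=\sigma_1(aa^\star s)\le\sigma_2(aa^\star s)=\Vert\Delta(\sigma_2,\rho)^{1/2}a\xi_\rho\Vert^2,
\end{align*}
with $s=s_{\mathcal M}(\rho)$ suitably placed. This holds on the common core $\mathcal M\xi_\rho+(1-s_{\mathcal M'}(\rho))\mathcal H$, and so gives the form inequality $\Delta(\sigma_1,\rho)\le\Delta(\sigma_2,\rho)$.
\item \emph{Monotonicity of the logarithm.} To pass from the form inequality to the logarithms, use the integral representation
\begin{align*}
\log t=\int_0^\infty\Big(\frac{1}{1+u}-\frac{1}{t+u}\Big)\,du .
\end{align*}
Each resolvent $(\Delta+u)^{-1}$ is a bounded operator, and the form inequality reverses under inversion, so $(\Delta(\sigma_2,\rho)+u)^{-1}\le(\Delta(\sigma_1,\rho)+u)^{-1}$. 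Integrating these inequalities in the vector $\xi_\rho$ with monotone convergence yields $\langle\xi_\rho,\log\Delta(\sigma_1,\rho)\xi_\rho\rangle\le\langle\xi_\rho,\log\Delta(\sigma_2,\rho)\xi_\rho\rangle$, with values in $(-\infty,\infty]$. The sign flip in the definition of $D$ then gives (i).
\item \emph{Justification.} Monotone convergence applies because the integrand is monotone in $u$-wise comparisons and its negative parts are controlled. The support hypotheses $s_{\mathcal M}(\rho)\le s_{\mathcal M}(\sigma_1)\le s_{\mathcal M}(\sigma_2)$ ensure there is no mass at $0$, so no $-\infty$ issues arise.
\end{itemize}

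\textbf{Main obstacle.} The hardest step is the domain issue in (i). One must check that the form inequality established on the core really extends to the closed forms, so that the comparison of unbounded operators in the form sense is legitimate. I would handle this by citing the standard fact (Araki; see also \cite{Stratila2019}) that $\mathcal M\xi_\rho+(1-s_{\mathcal M'}(\rho))\mathcal H$ is a form core for $\Delta(\sigma,\rho)^{1/2}$.

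An alternative I would keep as a fallback is Kosaki's variational formula \cite{kosaki_1986_variationalformula}. There $D(\rho\Vert\sigma)$ is a supremum of expressions that are manifestly decreasing in $\sigma$, which makes (i) immediate.
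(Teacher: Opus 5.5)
Your proposal is correct and takes the route the paper has in mind: the paper only says both properties ``readily follow'' from \eqref{eq:def_araki_relative_entropy}. Your scaling $\Delta(\lambda\sigma,\rho)=\lambda\Delta(\sigma,\rho)$, your form comparison on $\mathcal M\xi_\rho+(1-s_{\mathcal M'}(\rho))\mathcal H$, and operator monotonicity of $\log$ via resolvents supply exactly the details the paper omits.

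Two small corrections. That domain is a core for $S_{\sigma,\rho}$, and hence for $\Delta(\sigma,\rho)^{1/2}$, by construction, since $S_{\sigma,\rho}$ is defined as the closure of $S^0_{\sigma,\rho}$; no external citation is needed. Also, $\langle\xi_\rho,\log\Delta\,\xi_\rho\rangle$ takes values in $[-\infty,\infty)$, because its positive part is bounded by $\Vert\Delta^{1/2}\xi_\rho\Vert^2\le\sigma(1)$. So it is $D$, not the log-expectation, that lives in $(-\infty,\infty]$.
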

It is well-known that, in the case where the von Neumann algebra $(\mathcal{M},\tau)$ is semi-finite with trace $\tau$, the Umegaki relative entropy can be written as
\begin{align}\label{eq:def_relative_entropy_trace}
D(\rho\Vert\sigma)
\coloneqq 
\begin{cases}
\tau\left(\rho(\log\rho-\log\sigma)\right), & \text{if } s(\rho)\leq s(\sigma),\\[0.3em]
+\infty, & \text{otherwise},
\end{cases}
\end{align}
where $s(\rho)$ and $s(\sigma)$ denote the support projections of $\rho$ and $\sigma$, respectively.
\subsection{Haagerup reduction}
The Haagerup reduction technique \cite{Haagerup2009} is a powerful tool in the theory of von Neumann algebras, as it allows one to approximate the noncommutative $L_p$-spaces of a von Neumann algebra $\mathcal{M}$ by an increasing family of subalgebras $(\mathcal{M}_n)_{n\in \mathbb{N}}$. We recall the formulation given in \cite[Thm.~2.1, Prop.~2.2]{Fawzi_2025}, since it directly highlights the properties used in this work.

\begin{proposition}[Haagerup reduction]\label{prop:Haagerup_reduction}
    Let $\mathcal{M}$ be a von Neumann algebra. Then there exists a sequence of finite von Neumann algebras with normal faithful trace $(\mathcal{M}_n,\tau_n)$ such that, for each pair of states $\rho,\sigma \in \mathcal{S}(\mathcal{M})$, one can find, after appropriate identification, sequences of states $(\rho_n)$ and $(\sigma_n)$ with $\rho_n,\sigma_n \in \mathcal{S}(\mathcal{M}_n)$ such that
    \begin{align}
        \lim_{n\to \infty} D(\rho_n \Vert \sigma_n) = D(\rho\Vert \sigma).
    \end{align}
\end{proposition}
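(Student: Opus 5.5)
The plan is to follow Haagerup's crossed-product construction, as in \cite{Haagerup2009} and \cite[Thm.~2.1, Prop.~2.2]{Fawzi_2025}, and to get the entropy convergence from two facts about Araki's relative entropy: monotonicity under restriction, and lower semicontinuity along an increasing family of subalgebras.

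\textbf{Step 1 (reduction to the $\sigma$-finite case).} Let $p=s_{\mathcal M}(\rho)\vee s_{\mathcal M}(\sigma)$. The compressions of $\rho,\sigma$ to the corner $p\mathcal M p$ have the same relative entropy; I would check this directly from the definition \eqref{eq:def_araki_relative_entropy} via the standard form of $p\mathcal M p$. Since $p\mathcal M p$ carries a faithful normal state, I may assume $\mathcal M$ is $\sigma$-finite.

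\textbf{Step 2 (construction of the $\mathcal M_n$).} Fix a faithful normal state $\varphi$ on $\mathcal M$ and let $G=\bigcup_n 2^{-n}\mathbb Z$. Form the crossed product $\mathcal R=\mathcal M\rtimes_{\sigma^\varphi}G$ with respect to the modular automorphism group. Haagerup's theorem then provides:
\begin{itemize}
\item a faithful normal conditional expectation $E_{\mathcal M}\colon\mathcal R\to\mathcal M$ and the dual state $\hat\varphi=\varphi\circ E_{\mathcal M}$;
\item an increasing sequence of finite von Neumann subalgebras $\mathcal R_n\subseteq\mathcal R$ with $\bigcup_n\mathcal R_n$ weak$^*$-dense in $\mathcal R$;
\item faithful normal conditional expectations $E_n\colon\mathcal R\to\mathcal R_n$ with $\hat\varphi\circ E_n=\hat\varphi$.
\end{itemize}
Each $\mathcal R_n$ is realized as the centralizer of a perturbed weight $\hat\varphi_n=\hat\varphi(e^{-a_n}\,\cdot)$, and its finite faithful normal trace is $\tau_n=\hat\varphi_n|_{\mathcal R_n}$. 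I would set $\mathcal M_n:=\mathcal R_n$. This choice depends only on $\varphi$, not on $\rho,\sigma$, which is the "appropriate identification" in the statement.

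\textbf{Step 3 (lifting the states).} Define $\hat\rho=\rho\circ E_{\mathcal M}$ and $\hat\sigma=\sigma\circ E_{\mathcal M}$, which are normal states on $\mathcal R$. Then set $\rho_n=\hat\rho|_{\mathcal R_n}$ and $\sigma_n=\hat\sigma|_{\mathcal R_n}$. I claim $D(\hat\rho\Vert\hat\sigma)=D(\rho\Vert\sigma)$. Uhlmann monotonicity applied to the unital normal completely positive map $E_{\mathcal M}$ gives "$\le$". Monotonicity under restriction to $\mathcal M\subseteq\mathcal R$, using $\hat\rho|_{\mathcal M}=\rho$, gives "$\ge$". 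Monotonicity under restriction to $\mathcal R_n\subseteq\mathcal R_{n+1}$ also shows that $D(\rho_n\Vert\sigma_n)$ is nondecreasing in $n$ and bounded above by $D(\rho\Vert\sigma)$.

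\textbf{Step 4 (convergence).} It remains to prove $\liminf_n D(\rho_n\Vert\sigma_n)\ge D(\hat\rho\Vert\hat\sigma)$. This is the main obstacle. My first approach is Kosaki's variational formula \cite{kosaki_1986_variationalformula}, which expresses $D(\hat\rho\Vert\hat\sigma)$ as a supremum over step functions $x\colon[1/k,\infty)\to\mathcal R$. Given a near-optimal choice, I would use Kaplansky density together with the weak$^*$-density of $\bigcup_n\mathcal R_n$ to approximate the finitely many values $x(t)$ in the strong$^*$ topology by elements of some $\mathcal R_N$. Since $\hat\rho,\hat\sigma$ are normal, the objective is continuous under such approximation, so the supremum over $\mathcal R_N$ comes within $\varepsilon$ of the full supremum. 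This is the noncommutative martingale convergence theorem for relative entropy (Araki, Petz), and I would cite it if the approximation becomes technical. Combined with the upper bound from Step 3, this yields $\lim_n D(\rho_n\Vert\sigma_n)=D(\rho\Vert\sigma)$, including the case where both sides are $+\infty$.
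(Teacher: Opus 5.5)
Your proposal is correct and follows essentially the paper's route. The paper gives no proof here and only cites \cite[Thm.~2.1, Prop.~2.2]{Fawzi_2025}, but in the proof of Theorem~\ref{thm:frenkel_general} it uses exactly your lift $\widetilde\rho=\rho\circ E$, $\rho_n=\widetilde\rho\vert_{\mathcal M_n}$, with $D(\widetilde\rho\Vert\widetilde\sigma)=D(\rho\Vert\sigma)$ coming from the conditional expectation; your Kosaki--Kaplansky argument in Step~4 just makes explicit the martingale convergence that the paper imports from the reference. One small caveat: your claim that the $\mathcal M_n$ are independent of $\rho,\sigma$ holds only if $\mathcal M$ is $\sigma$-finite to begin with, since otherwise the corner $p\mathcal M p$ of Step~1 depends on the pair (as it also does in the paper's own application, where $\varphi=\tfrac12(\rho+\sigma)$).
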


\subsection{Operator logarithm and its directional derivative}

Let $\mathcal{H}$ be a separable Hilbert space and $\mathcal{B}(\mathcal{H})$ the algebra of bounded operators.
For a bounded operator $A\in\mathcal{B}(\mathcal{H})$ with
\begin{align}
\sigma(A)\subset \mathbb C\setminus (-\infty,0]
\end{align}
we denote by $\log$ the principal branch of the complex logarithm on $\mathbb C\setminus (-\infty,0]$ and define the operator logarithm by the holomorphic functional calculus
\begin{align}\label{eq:def_operator_log}
\log(A)\coloneqq \frac{1}{2\pi i}\int_{\Gamma_A} \log(z)\,(zI-A)^{-1}\,dz,
\end{align}
where $\Gamma_A$ is a positively oriented, simple smooth closed curve enclosing $\sigma(A)$ and lying in $\mathbb C\setminus (-\infty,0]$.

If $A>0$ is boundedly invertible and $B=B^\ast\in\mathcal{B}(\mathcal{H})$, then $A+tB$ stays invertible for $|t|$ small enough, and we define the \emph{directional derivative} (G\^{a}teaux derivative) of $\log$ at $A$ in direction $B$ by
\begin{align}\label{eq:def_DLog}
\operatorname{D}\log(A)[B]\coloneqq \left.\frac{d}{dt}\right|_{t=0}\log(A+tB),
\end{align}
where the derivative is taken in the strong operator topology (equivalently, pointwise on vectors).
It is standard that $\operatorname{D}\log(A)[B]$ exists and is bounded, and one may also express it via the resolvent integral
\begin{align}\label{eq:resolvent_derivative}
\operatorname{D}\log(A)[B]=\frac{1}{2\pi i}\int_{\Gamma_A} \log(z)\,(zI-A)^{-1}B(zI-A)^{-1}\,dz,
\end{align}
again in strong operator topology.

\section{Challenges in Device Independent Quantum Key Distribution} \label{sec:qkd}

The overarching purpose of quantum key distribution (QKD) is to establish an information-theoretically secure key between distant users, usually referred to as Alice and Bob, which is provably known only to Alice and Bob themselves, up to a security parameter $\varepsilon\!>\!0$ that quantifies an arbitrarily small error parameter \cite{RENNER2008}. As was recognized in the early days of QKD, security cannot be guaranteed unless all imperfections and all mismatches between theory and experiment are taken into account \cite{Bennett1992a,Bennett1992b}. It has become apparent throughout the different stages of development that experimental mismatches are arguably difficult to characterize, and it remains unclear at what level of detailed experimental description it is sufficient to claim security \cite{Scarani2009}.

To overcome the challenge of describing an experiment with sufficient accuracy to claim security, device-independent quantum key distribution (DIQKD) was developed and is regarded as the gold standard of security techniques in the quantum world, at least from a theoretical perspective \cite{Mayers} (cf.~\cite{Ekert2014} for a discussion of fundamental limitations of mathematical modeling in a physical world). The goal of a (DI)QKD protocol is to establish a secure key relying only on the observed statistics, i.e.~based on a Bell-experiment \cite{Bell1964} (cf.~\autoref{fig:diqkd_sketch}). To achieve this goal, quantum theory makes it possible to prove security based on fundamental physical principles \cite{Acn2006_1,Acn2006_2,Pironio2009}. Standard DIQKD protocols are based on the Bell-type QKD experiment introduced in \cite{Ekert1991} and first demonstrated as a Bell experiment in \cite{Hensen2015}. The fundamental principle underlying DIQKD security is the monogamy of entanglement: roughly speaking, strong correlations obtained in a quantum experiment cannot be shared among an arbitrary number of parties. Since this property is fundamental and does not depend on a specific implementation or detailed modeling of the experiment itself, the notion of device independence has been established. 

In most earlier works (cf.~\cite{Barrett2005,Acn2006_1,Acn2006_2,Navascus2007,Navascus2008,Pironio2009,Vazirani2014} and subsequent developments), device independence has been formulated through the mathematical model in which the underlying quantum system admits a Hilbert-space representation with a trace-class operator $\rho_{AB} \in \mathcal{S}(\mathcal{H}_A \otimes \mathcal{H}_B)$, measurement operators $\{M_{a\vert x}\}$ on Alice's side, and measurement operators $\{N_{b\vert y}\}$ on Bob's side. We will refer to such a tuple as a strategy and denote it by
\begin{align}\label{eq:general_quantum_model}
    S \coloneqq \left(\rho_{AB},\{M_{a\vert x}\},\{N_{b\vert y}\},\mathcal{H}_A,\mathcal{H}_B\right).
\end{align}
The observed statistics in an experiment of the form shown in \autoref{fig:diqkd_sketch} are then given by Born's rule,
\begin{align}\label{eq:conditional_distribution}
    p(a,b\vert x,y) \coloneqq \tr[\rho_{AB} M_{a\vert x} \otimes N_{b\vert y}].
\end{align}

The key difference between device-dependence and device independence is that one even neither specifies the Hilbert spaces $\mathcal{H}_A$ and $\mathcal{H}_B$  more the measurement operators. Rather, in DIQKD one assumes only the existence of some strategy of the form \eqref{eq:general_quantum_model} that gives rise to the observed conditional distribution \eqref{eq:conditional_distribution}. Accordingly, the typical questions in DIQKD ask whether a given property of the underlying experiment holds for all strategies compatible with an observed statistic $p(a,b\vert x,y)$ connected to the mathematical modeling with \eqref{eq:conditional_distribution}. Although this point of view has been tremendously successful in the device-dependent setting, where the Hilbert spaces and measurement operators are fixed and known, substantial obstacles arise in the device-independent setting.

\subsection{Quantum Games and DIQKD}\label{subsec:quantum_games_intro} 
\input{figure-games}

The more and less obvious limitations of the Hilbert space strategy-based perspective in DIQKD are perhaps best understood by viewing a DIQKD experiment as a quantum game, see \autoref{fig:quantum_game}. Since QKD ultimately amounts to collecting statistical data under the assumption that the underlying experiment is governed by the laws of quantum theory, it is natural to compare QKD with the task of playing a quantum game (cf.~\cite{Palazuelos2016} for a review). A bipartite quantum game $\mathcal{G}$ is specified by sets of questions $\mathcal{X},\mathcal{Y}$, sets of answers $\mathcal{A},\mathcal{B}$, a rule function
\begin{align}
    V:\mathcal{X}\times \mathcal{Y}\times \mathcal{A}\times \mathcal{B} \to \{0,1\},
\end{align}
and a referee who asks Alice and Bob questions $x \in \mathcal{X}$ and $y\in \mathcal{Y}$ according to a probability mass function $\pi$ on $\mathcal{X} \times \mathcal{Y}$. The goal of Alice and Bob is to agree in advance on a strategy, in the most general case given by a conditional distribution $p(a,b\vert x,y)$, so as to maximize the success probability
\begin{align}
    \omega(\mathcal{G}) \coloneqq \sup_{p\left(a,b\vert x,y\right) \ \text{feasible}} \sum_{a,b,x,y} \pi(x,y) V(a,b,x,y)p(a,b\vert x,y).
\end{align}
If the distribution $p(a,b\vert x,y)$ arises from a quantum experiment, for instance as in \eqref{eq:conditional_distribution} and corresponding to a strategy $S$ of the form \eqref{eq:general_quantum_model}, then the game is called a quantum game. Computing the optimal value $\omega(\mathcal{G})$ is therefore equivalent to optimizing a linear functional over a set of distributions determined by the physical power available to Alice and Bob. Although in DIQKD one is not directly interested in the success probability of a game, it has turned out that important cryptographic quantities can be derived from this value when the DIQKD protocol is based on a game \cite{Pironio2009,CerveroMartn2025}. In the realm of the quantum games it was a central task to distinguish different notions on the power of the strategies such as they can be described with finite-dimensional quantum systems, approximately finite-dimensional quantum systems, general limits of such strategies and strategies in the commuting operator framework (cf.~e.g.~\cite{Kirchberg1993,Tsirelson1993,OZAWA2004,PrezGarca2008,Junge2010,Paulsen2016,Slofstra_2019,ji2022mipre} for a by far not complete list of milestone results in the realm of quantum games). 

For DIQKD, it is an immediate consequence that the perspective of Hilbert space \emph{strategies} as in \eqref{eq:general_quantum_model} is incomplete, as this becomes apparent already if we decide for a specific class of strategies studied in the context of quantum games (cf.~\cite{SLOFSTRA_2019_not_closed}). In particular, these results show that, even if one restricts \eqref{eq:general_quantum_model} to finite-dimensional strategies, to which is often referred to in DIQKD, the corresponding set of strategies is not closed. Comparing this with one of the most important and successful tools in DIQKD, namely the Navascu\'es--Pironio--Ac\'in (NPA) hierarchy \cite{Navascus2007,Navascus2008}, which is known in the setting of quantum games to converge to the commuting-operator value of the game (cf.~\cite[Lem.~4]{Ligthart_2023} and \cite{Paulsen2016}), reveals a genuine discrepancy between the way DIQKD is modeled and the tools available to analyze it. This calls for a unified framework.

\subsection{Randomness extraction in DIQKD}\label{subsec:randomness_extraction_DIQKD}
A less obvious challenge in DIQKD lies in the fact that the central quantity to be quantified in QKD is \emph{randomness}. Modern security frameworks capture this through entropic quantities \cite{RENNER2008} (cf.~\cite{regula2026rethinkingquantumsmoothentropies} for a recent result on randomness extraction in device-dependent QKD). In cryptography, however, one is interested in the amount of randomness that remains secure in the presence of a potential eavesdropper. In the Hilbert-space model of quantum theory, this question can be addressed using purifications of quantum states and the fact that pure states are, in a precise sense, uncorrelated with any additional systems. More specifically, if $\rho_{AB} \in \mathcal{S}(\mathcal{H}_A\otimes \mathcal{H}_B)$, then we have for any extension $\rho_{ABEE^\prime}$
\begin{align}\label{eq:purification_property_finite_dimensional_spaces}
    \tr_{E^\prime}[\rho_{ABEE^\prime}]
    = \lvert \psi_{ABE} \rangle\!\langle \psi_{ABE} \rvert
    \quad \Rightarrow \quad
    \rho_{ABEE^\prime}
    = \lvert \psi_{ABE} \rangle\!\langle \psi_{ABE} \rvert \otimes \rho_{E^\prime},
\end{align}
which means precisely that the state is a product across the bipartition $ABE:E^\prime$. Moreover, it follows that all purifications can be modeled within a fixed quantum system $\mathcal{H}_A \otimes \mathcal{H}_B \otimes \mathcal{H}_E$, with $\mathcal{H}_E \cong \mathcal{H}_A \otimes \mathcal{H}_B$. By Uhlmann's theorem \cite{Uhlmann1976} and the invariance of conditional entropies under local isometries in the Hilbert-space framework (cf.~\cite{Tomamichel2016}), one may therefore fix, without loss of generality, a quantum system $\mathcal{H}_E \cong \mathcal{H}_A \otimes \mathcal{H}_B$ and solve the randomness-extraction problem in this setting. Unfortunately, this property fails in the general commuting operator framework, as was recently shown in \cite{vanLuijk2026}. Thus, combining the elegant perspective in \autoref{subsec:quantum_games_intro} with randomness extraction over side-information is at least challenged by an argumentation for a \emph{correct} notion of purification for the purpose of randomness extraction. 

\subsection{Turning experimental data into security statements}\label{subsec:turning_experimental_data}

The composable security framework established for QKD \cite{Pfitzmann2000} provides a way to turn experimental data, such as the statistics $p(a,b\vert x,y)$, together with a given Hilbert-space model of the underlying quantum experiment, into a security statement for the resulting key. For the purposes of this work, it is sufficient to distinguish the following three core security properties \cite{RENNER2008}:
\begin{enumerate}
    \item \emph{Completeness}: If both parties behave honestly and the protocol is executed under normal operating conditions, then the protocol should abort only with small probability.

    \item \emph{Correctness}: Conditioned on not aborting, the keys held by the legitimate parties should be identical except with small probability.

    \item \emph{Secrecy}: Conditioned on not aborting, the final key should be indistinguishable from an ideal uniformly random key that is independent of the adversary's information.
\end{enumerate}

Among these three properties, only \emph{secrecy} truly depends on the underlying quantum model, since it requires a characterization of the adversary's possible side information \cite{ferradini2025definingsecurityquantumkey}. By contrast, \emph{completeness} and \emph{correctness} are determined from the classically observed data and an honest implementation usually in finite dimensions, namely whether the protocol aborts and whether the legitimate users' classical keys agree. Thus, as this work focuses on how to turn data such as a distribution $p(a,b\vert x,y)$ into a  mathematical model for DIQKD, we are primarily interested in proving \emph{secrecy}. In device-dependent QKD, secrecy is defined by requiring that, conditioned on the protocol not aborting, the final key be close to an ideal uniformly random key that is independent of the adversary's side information (cf.~\cite[Rem~6.1.3]{RENNER2008}). We give a concrete definition next.

\begin{definition}[Secrecy in device-dependent QKD]\label{def:secrecy_ddqkd}
Let $\Omega$ denote the event that the protocol does not abort, and let $p(\Omega)$ be its probability. Let $\rho_{K_AE\mid\Omega}$ be the joint state of Alice's final key $K_A$ and Eve's side information $E$, conditioned on $\Omega$, where $E$ includes all information available to the adversary, in particular the public classical transcript. The protocol is called $\varepsilon_{\mathrm{sec}}$-secret if
\begin{align}\label{eq:secrecy_statement_ddqkd}
p(\Omega)\,\Vert \rho_{K_AE\vert\Omega} - \tau_{K_A}\otimes\rho_{E\vert \Omega}\Vert
\le \varepsilon_{\mathrm{sec}},
\end{align}
and $\tau_{K_A}=\frac{1}{2^\ell}\sum_{k\in\{0,1\}^\ell}\lvert k \rangle\!\langle k \rvert$ is the fully mixed state on the $\ell$-bit key space.
\end{definition}

To derive a secrecy statement from experimental data, a leftover-hash lemma \cite{Tomamichel2011,Berta_2015,Dupuis2023,regula2026rethinkingquantumsmoothentropies} converts the one-shot quantity in \eqref{eq:secrecy_statement_ddqkd} into an entropic quantity for a multi-round experiment. Building on this, entropy-accumulation theorems \cite{Dupuis_2019,Dupuis_2020,Metger2024}, which can be viewed as generalizations of the quantum asymptotic equipartition theorem \cite{Tomamichel2009,Fawzi_2025}, reduce the security analysis to a single optimization problem involving a single-round quantity, typically a conditional entropy $\mathbb{H}$ \cite{Tomamichel2016}. The resulting optimization program is then given by
\begin{equation}\label{eq:finite_dimensional_problem}
    \begin{aligned}
        \inf \ &\mathbb{H}(A\vert X=\tilde{x},E)_{\rho_{AE}} \\
        &\tr[\rho_{AB} M_{a\vert x} \otimes N_{b\vert y}] = p(a,b\vert x,y)\\
        &\rho_{ABE} \in \mathcal{S}(\mathcal{H}_A \otimes \mathcal{H}_B\otimes \mathcal{H}_E).
    \end{aligned}
\end{equation}
In the device-dependent case, the optimization problem \eqref{eq:finite_dimensional_problem} can be solved numerically for certain conditional entropies $\mathbb{H}$, given the distribution $p(a,b\vert x,y)$ (cf.~\cite{he2025qicsquantuminformationconic,Komann2026,Hu_2022}). In the general device-independent case, however, the discrepancy between the two perspectives discussed in \autoref{subsec:quantum_games_intro} and \autoref{subsec:randomness_extraction_DIQKD} becomes apparent. If one adopts the viewpoint of DIQKD as a quantum game, as in \autoref{subsec:quantum_games_intro}, then the discussion in \autoref{subsec:randomness_extraction_DIQKD} shows that one must carefully address the definition of purification in cryptographic settings\footnote{We remark that this was considered in \cite[Lem.~1]{Berta_2015} for the setting of a leftover-hash lemma.} and its relation to a single-round program such as \eqref{eq:finite_dimensional_problem} within the commuting-operator framework. Conversely, if one assumes that the model is described by a finite-dimensional quantum system, then the purification issues discussed in \autoref{subsec:randomness_extraction_DIQKD} simplify considerably, but the application of the NPA hierarchy and related tools from the commuting-operator framework may no longer faithfully describe the experiment under consideration as discussed in \autoref{subsec:quantum_games_intro}. As we show in the following section, a careful combination of known tools and statements from operator algebras yields, in principle, a fully rigorous DIQKD framework in the language of the commuting-operator setting. However, the resulting optimization problems have so far only been solved by methods \cite{Tan_2021,Brown_2024,kossmann2025reliableentropyestimationobserved} that rely on the availability of Hilbert spaces and trace-class operators, which makes it difficult to conclude that one has obtained a genuinely complete DIQKD model. In the subsequent sections, we show how to resolve these issues completely.

\section{The commuting operator framework for DIQKD} \label{sec:commop}

Given the discussion of \emph{device independence} and its consequences from point of view of quantum games, i.e.~\autoref{subsec:quantum_games_intro}, and randomness extraction, i.e.~\autoref{subsec:randomness_extraction_DIQKD}, there is a need for a precise mathematical description of its meaning in the realm of DIQKD. Following the framework of quantum games, we consider \autoref{fig:diqkd_sketch} simply as an experiment whose internal workings inside the boxes may be modeled by quantum theory, without to specify any further. This question was already addressed in the early days of axiomatic quantum theory and was solved by von Neumann and Segal \cite{BaezSegalZhou1992,vonNeumann1996}. The solution is given by the universal $\mathcal{C}^\star$-algebra generated by the effects for each party, together with the maximal tensor product of these local algebras. In this realm, the standard Bell scenario is specified by measurement sets $\mathcal{X}$ and $\mathcal{Y}$ for Alice and Bob, respectively. Each measurement is assumed to have a finite outcome set, denoted $\mathcal{A}_x$ for $x \in \mathcal{X}$ and $\mathcal{B}_y$ for $y \in \mathcal{Y}$. We first model local measurements as positive operator-valued measures (POVMs), i.e.\ each setting $x$ corresponds to a finite family of positive contractions $\{M_{a\vert x}\}_{a\in\mathcal{A}_x}$ summing to the identity, and similarly for Bob. This motivates the following universal construction. The \emph{universal POVM algebra} associated with POVMs $M_{a\vert x}$, $a \in \mathcal{A}_x$ for each $x \in \mathcal{X}$ is the universal unital C$^\star$-algebra
\begin{align}\label{eq:def_universal_povm_algebra}
\mathcal U_{A} \; \coloneqq \; \mathcal{C}^\star\!\left(\{M_{a \vert x}\}_{x\in \mathcal{X},\,a\in\mathcal{A}_x}\;\Big|\;
0 \le M_{a \vert x} \le 1,\ \sum_{a\in \mathcal{A}_x} M_{a \vert x} = 1 \right).
\end{align}
Thus $\mathcal U_{A}$ is generated by symbols $\{M_{a\vert x}\}$ subject only to the relations that they form POVMs. The universal $\mathcal{C}^\star$-algebra is well-defined: the relations imply $\Vert M_{a\vert x}\Vert \le 1$ for all generators, and the defining relations are consistent (e.g.\ realized in a commutative model). Existence of the corresponding universal $\mathcal{C}^\star$-algebra for bounded generators is standard (see \cite[II.8.3.1]{Blackadar2006}).

Given the local algebras, one may ask what is the appropriate way of concatenating them, i.e., what is the correct “tensor product” between local algebras. As we aim to remain as general as possible, we argue that the universal property of the maximal tensor product provides the correct physical modeling (cf.~Prop.~\ref{prop:universal_property_maxtensorproduct}): every realization in a commuting-operator framework -- operationally satisfying the axioms of quantum theory -- is included in this tensor product \cite{Blackadar2006}. In particular, since it is known that there is a distinction between the minimal and maximal tensor products \cite{ji2022mipre}, we decide for the maximal tensor product as there exists a surjection from the maximal onto the minimal tensor product.
\begin{proposition}[Universal property of $\otimes_{\max}$ {\cite[Prop.~IV.4.7]{Takesaki1979}}]\label{prop:universal_property_maxtensorproduct}
Let $\mathcal{A},\mathcal{B}$ be $\mathcal{C}^\star$-algebras. Then for every $\mathcal{C}^\star$-algebra $\mathcal{C}$ and every pair of 
$\star$-homomorphisms $\pi_A:\mathcal{A} \to \mathcal{C}$, $\pi_B:\mathcal{B} \to \mathcal{C}$ with commuting ranges 
(i.e.\ $[\pi_A(a),\pi_B(b)]=0$ for all $a\in \mathcal{A}$, $b\in \mathcal{B}$), 
there exists a unique $\star$-homomorphism
\begin{align}
\Phi: \mathcal{A} \otimes_{\max} \mathcal{B} \longrightarrow \mathcal{C}
\end{align}
such that 
\begin{align}
\Phi(a \otimes b) = \pi_A(a)\pi_B(b) \qquad \forall a\in \mathcal{A}, b\in \mathcal{B}.
\end{align}
\end{proposition}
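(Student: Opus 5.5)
The plan is to construct $\Phi$ on the algebraic tensor product, show it is contractive for the maximal norm, and then extend it by continuity. Uniqueness is the easy part. Any $\star$-homomorphism satisfying $\Phi(a\otimes b)=\pi_A(a)\pi_B(b)$ is fixed on elementary tensors, hence by linearity on $\mathcal{A}\odot\mathcal{B}$. This subspace is dense in $\mathcal{A}\otimes_{\max}\mathcal{B}$, and $\star$-homomorphisms between $\mathcal{C}^\star$-algebras are automatically contractive, hence continuous, so $\Phi$ is determined everywhere.

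For existence, I would first define $\Phi_0:\mathcal{A}\odot\mathcal{B}\to\mathcal{C}$ on elementary tensors by $a\otimes b\mapsto \pi_A(a)\pi_B(b)$ and extend it linearly. The map $(a,b)\mapsto\pi_A(a)\pi_B(b)$ is bilinear, so the universal property of the algebraic tensor product of vector spaces makes $\Phi_0$ well defined. It is multiplicative on elementary tensors because
\[
\Phi_0\bigl((a_1\otimes b_1)(a_2\otimes b_2)\bigr)=\pi_A(a_1a_2)\pi_B(b_1b_2)=\pi_A(a_1)\pi_B(b_1)\pi_A(a_2)\pi_B(b_2),
\]
where the second equality uses that the ranges commute. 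The same commutation gives $\Phi_0((a\otimes b)^\star)=\pi_B(b)^\star\pi_A(a)^\star=(\pi_A(a)\pi_B(b))^\star$, so $\Phi_0$ is a $\star$-homomorphism of $\star$-algebras.

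The key step is the estimate $\Vert\Phi_0(x)\Vert\le\Vert x\Vert_{\max}$. To get it, I would choose a faithful representation $\rho:\mathcal{C}\to\mathcal{B}(\mathcal{H})$, for instance the universal GNS representation (Gelfand--Naimark). Then $\rho\circ\pi_A$ and $\rho\circ\pi_B$ are commuting $\star$-representations of $\mathcal{A}$ and $\mathcal{B}$ on $\mathcal{H}$, and $\rho\circ\Phi_0=(\rho\circ\pi_A)\cdot(\rho\circ\pi_B)$. Since $\rho$ is isometric, $\Vert\Phi_0(x)\Vert=\Vert((\rho\circ\pi_A)\cdot(\rho\circ\pi_B))(x)\Vert$. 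This pair is one of the competitors in the supremum defining $\Vert\cdot\Vert_{\max}$, so the right-hand side is at most $\Vert x\Vert_{\max}$.

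This step carries the weight of the argument, but it follows directly from the definition once a faithful representation of $\mathcal{C}$ is available. The same set-theoretic caveat that makes $\Vert\cdot\Vert_{\max}$ well defined applies here: one restricts to representations on Hilbert spaces of bounded dimension, which is standard. It is also worth noting that the definition allows degenerate pairs, so no nondegeneracy issue arises.

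Consequently $\Phi_0$ vanishes on the null space $\{x:\Vert x\Vert_{\max}=0\}$ and descends to a contractive $\star$-homomorphism on the quotient. By uniform continuity it extends to the completion $\mathcal{A}\otimes_{\max}\mathcal{B}$. Multiplicativity and $\star$-preservation pass to the limit because multiplication and the involution are continuous, and this extension is the required $\Phi$.
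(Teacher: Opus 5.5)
Your proposal is correct. It is the standard argument behind the result, which the paper does not prove itself but only cites from Takesaki: define $\Phi_0$ on $\mathcal{A}\odot\mathcal{B}$, note that $(\rho\circ\pi_A,\rho\circ\pi_B)$ is an admissible commuting pair in the supremum defining $\Vert\cdot\Vert_{\max}$, and extend by continuity, with uniqueness from density and automatic contractivity of $\star$-homomorphisms. The faithful representation of $\mathcal{C}$ is exactly what bridges the abstract target in the statement and the Hilbert-space definition of $\Vert\cdot\Vert_{\max}$. Your dimension caveat is unnecessary with the paper's definition, since the admissible values already form a bounded subset of $\mathbb{R}$; if you did impose a dimension bound, you would need an extra reduction to cyclic subrepresentations to keep $(\rho\circ\pi_A,\rho\circ\pi_B)$ admissible.
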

\begin{proof}
See \cite{Takesaki1979}, Prop.\ IV.4.7.
\end{proof}
Thus, the observable algebra for \autoref{fig:diqkd_sketch} can be summarized as
\begin{align}\label{eq:observable_algebras_diqkd}
    \mathcal{U}_{A} \otimes_{\operatorname{max}} \mathcal{U}_{B}.
\end{align}

\subsection{From POVMs to PVMs in the commuting-operator model}

With \eqref{eq:observable_algebras_diqkd}, we conclude that the DIQKD experiment can be described with the universal $\mathcal{C}^\star$-algebra of POVM's on Alice's and Bob's side and that they are concatenated by the maximal tensor product. 

Moreover, from the perspective of DIQKD, we are ultimately interested only in the statistics of such an experiment and aim to show a Naimark-type argument in order to replace the effect algebra by the universal $\mathcal{C}^\star$-algebra of projective measurements. This is particularly advantageous in applications, since, for instance, the Navascués--Pironio--Acín (NPA) hierarchy \cite{Navascus2007,Navascus2008} works much better with projections.

For each $x\in \mathcal{X}$ define the unital $\mathcal{C}^\star$-algebra $l^\infty\!\left(\mathcal{A}_x\right)$ to be generated by symbols $\{p_{a|x}\}_{a\in\mathcal{A}_x}$ subject to
\begin{align}
p_{a|x}p_{a'|x}=p_{a,a'}\,p_{a|x},\qquad \sum_{a\in\mathcal{A}_x}p_{a|x}=1.
\end{align}
The \emph{universal PVM algebra} is the unital free product \cite[II.8.3.4]{Blackadar2006}
\begin{align}\label{eq:universal_pvm_algebra}
\mathcal P_A \;=\; *_{\,x\in \mathcal{X}} \, l^\infty\!\left(\mathcal{A}_x\right),
\end{align}
so $\mathcal{P}_A$ is generated by families of projections which for each fixed $x$ form a projective measurement $\{p_{a|x}\}_{a \in \mathcal{A}_x}$, with no relations across different $x\neq x'\in \mathcal{X}$. We recall the corresponding proposition from quantum games. 
\begin{proposition}[Bell statistics are unchanged by passing from POVMs to PVMs]
\label{prop:Bell-POVM-PVM-equivalence}
Let $\mathcal{U}_{A}$ and $\mathcal{U}_{B}$ be the universal POVM algebras for Alice and Bob from \autoref{eq:def_universal_povm_algebra} and $\mathcal P_A$ and $\mathcal P_B$ be the universal PVM algebras from \autoref{eq:universal_pvm_algebra}. 

Then the set of probability distributions
\begin{align}
\Bigl\{
\bigl(\omega(M_{a|x}\otimes N_{b|y})\bigr)_{a,b,x,y}
:\ \omega\in S(\mathcal U_A\otimes_{\max}\mathcal U_B)
\Bigr\}
\end{align}
coincides with
\begin{align}
\Bigl\{
\bigl(\widetilde\omega(p_{a|x}\otimes q_{b|y})\bigr)_{a,b,x,y}
:\ \widetilde\omega\in S(\mathcal P_A\otimes_{\max}\mathcal P_B)
\Bigr\}.
\end{align}
\end{proposition}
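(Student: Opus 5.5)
We show both inclusions between the two sets of correlations. Write $C_{\mathrm{POVM}}$ and $C_{\mathrm{PVM}}$ for the first and second set, respectively.

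\emph{The inclusion $C_{\mathrm{PVM}}\subseteq C_{\mathrm{POVM}}$ is formal.} The projections $p_{a|x}$ satisfy the defining relations of $\mathcal U_A$: they are positive contractions summing to $1$ for each $x$. By the universal property of $\mathcal U_A$ there is a unital $\star$-homomorphism $\iota_A:\mathcal U_A\to\mathcal P_A$ with $M_{a|x}\mapsto p_{a|x}$, and likewise $\iota_B:\mathcal U_B\to\mathcal P_B$. The maps $a\mapsto \iota_A(a)\otimes 1$ and $b\mapsto 1\otimes\iota_B(b)$ into $\mathcal P_A\otimes_{\max}\mathcal P_B$ have commuting ranges. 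Proposition~\ref{prop:universal_property_maxtensorproduct} therefore gives a unital $\star$-homomorphism $\Phi:\mathcal U_A\otimes_{\max}\mathcal U_B\to\mathcal P_A\otimes_{\max}\mathcal P_B$. For a state $\widetilde\omega$, the pullback $\omega=\widetilde\omega\circ\Phi$ is a state that reproduces $\widetilde\omega(p_{a|x}\otimes q_{b|y})$.

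\emph{For the converse we run a commuting-operator Naimark dilation.}

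1. Take $\omega\in S(\mathcal U_A\otimes_{\max}\mathcal U_B)$ and its GNS triple $(\pi,\mathcal H,\Omega)$. The operators $A_{a|x}=\pi(M_{a|x}\otimes1)$ and $B_{b|y}=\pi(1\otimes N_{b|y})$ are POVMs on $\mathcal H$ with $[A_{a|x},B_{b|y}]=0$.

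2. Dilate Alice's POVMs one setting at a time. For fixed $x$, set $\mathcal K=\mathcal H\otimes\ell^2(\mathcal A_x)$ and use the isometry $V_x\xi=\sum_a A_{a|x}^{1/2}\xi\otimes e_a$. The projections $1\otimes|e_a\rangle\langle e_a|$ compress to $A_{a|x}$. Doing this separately for every $x$ on one space requires the ancillas for different $x$ to interact correctly, which is the delicate point. The clean way is Stinespring: the map $\phi_A:\mathcal P_A\to\mathcal B(\mathcal H)$, $p_{a|x}\mapsto A_{a|x}$, is unital completely positive. On each $\ell^\infty(\mathcal A_x)$ it is the positive, hence completely positive, map of a commutative algebra. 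Boca's theorem states that unital free products of ucp maps are ucp, so $\phi_A$ is ucp on $\mathcal P_A=*_x\ell^\infty(\mathcal A_x)$. Its minimal Stinespring dilation $(\rho_A,\mathcal K_A,V_A)$ then yields PVMs $P_{a|x}=\rho_A(p_{a|x})$ with $V_A^*P_{a|x}V_A=A_{a|x}$.

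3. Preserve commutation with Bob. Since every $B_{b|y}$ commutes with $\phi_A(\mathcal P_A)$, Arveson's commutant lifting theorem for minimal Stinespring dilations provides a normal $\star$-representation $\theta$ of $\phi_A(\mathcal P_A)'\supseteq\{B_{b|y}\}''$ on $\mathcal K_A$. This $\theta$ commutes with $\rho_A(\mathcal P_A)$ and satisfies $\theta(T)V_A=V_AT$. Hence $\tilde B_{b|y}=\theta(B_{b|y})$ are POVMs on $\mathcal K_A$, they commute with $P_{a|x}$, and
\[
V_A^*P_{a|x}\tilde B_{b|y}V_A=V_A^*P_{a|x}V_AB_{b|y}=A_{a|x}B_{b|y}.
\]
Next apply the same construction on Bob's side. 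Take the commuting von Neumann algebra $\{P_{a|x}\}''$ as the given commutant and dilate the POVMs $\tilde B_{b|y}$, which commute with it, to PVMs $Q_{b|y}$. This produces lifted projections $\tilde P_{a|x}$, still PVMs because $\theta$ is a $\star$-homomorphism, commuting with the $Q_{b|y}$.

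4. Set $\xi=V_BV_A\Omega$. Proposition~\ref{prop:universal_property_maxtensorproduct} gives a representation of $\mathcal P_A\otimes_{\max}\mathcal P_B$, and $\widetilde\omega(\cdot)=\langle\xi,\cdot\,\xi\rangle$ satisfies $\widetilde\omega(p_{a|x}\otimes q_{b|y})=\langle\Omega,A_{a|x}B_{b|y}\Omega\rangle=\omega(M_{a|x}\otimes N_{b|y})$.

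The main obstacle is step~3: the dilation must keep the commutation between Alice's and Bob's operators. The ucp property of the free-product map and Arveson's commutant lifting are exactly what make it work. I expect to cite both rather than prove them.
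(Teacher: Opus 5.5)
Your argument is correct. It is essentially the paper's proof of the Dilation Theorem (Theorem~\ref{thm:dilation_theorem}) specialized to $E=1$: Boca's free-product extension, minimal Stinespring dilation and Arveson's commutant lifting, first on Alice's side and then on Bob's, followed by the universal property of $\otimes_{\max}$. For this proposition itself the paper only cites Fritz.

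The one claim you should justify explicitly is that $\phi_A(\mathcal P_A)$ commutes with the $B_{b|y}$, and the analogous claim on Bob's side. This holds because Boca's free-product map takes values in the $C^\ast$-algebra generated by the $A_{a|x}$; the paper enforces it by applying Boca's theorem with target $\mathcal N_A'$. It need not hold for an arbitrary ucp extension of $p_{a|x}\mapsto A_{a|x}$.
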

\begin{proof}
    cf.~e.g.~\cite[Prop.~3.4.~(b)]{FRITZ_2012}.
\end{proof}
As discussed in the previous \autoref{subsec:randomness_extraction_DIQKD}, Prop.~\ref{prop:Bell-POVM-PVM-equivalence} is not enough for DIQKD. We need to investigate also the relation of the POVM vs. PVM model with respect to quantum side-information. 
\subsection{Quantum-side information and DIQKD in the commuting operator model}

The discussion in \autoref{subsec:randomness_extraction_DIQKD} shows that, in the commuting-operator framework, one cannot a priori model the adversary by adjoining a fixed Hilbert space factor $\mathcal{H}_E$ as in the finite-dimensional setting. The reason is that the observable algebra of Alice and Bob is now given abstractly by a universal $\mathcal{C}^\star$-algebra, and its concrete realizations need not come with a distinguished tensor-product decomposition. Consequently, the correct notion of quantum side information has to be formulated representation-theoretically.

The guiding principle is that, once a state $\psi \in \mathcal{S}(\mathcal{A}\otimes_{\operatorname{max}} \mathcal{B})$ describing the statistics between Alice and Bob has been fixed, the GNS construction provides a canonical representation of the observable algebra as a von Neumann algebra, while the vector state induced by $\xi_\psi$ is a normal state on the corresponding von Neumann algebra, and the commutant of the represented Alice--Bob algebra plays the role of the complementary system. Thus, we consider from now on an arbitrary but fixed state $\psi \in \mathcal{S}(\mathcal{A}\otimes_{\operatorname{max}} \mathcal{B})$ and its GNS-construction $(\pi_\psi,\mathcal{H}_\psi,\xi_\psi)$. Particulary, we fix the corresponding von Neumann algebra
\begin{align}
    \mathcal{M}_{AB,\psi} \coloneqq \pi_\psi(\mathcal{A}\otimes_{\operatorname{max}} \mathcal{B})'', 
\end{align}
which will be from now on the object under consideration with 
\begin{align}
    \psi(a\otimes b) = \langle \xi_\psi, \pi_{\psi}(a\otimes b) \xi_\psi\rangle, \quad a\otimes b \in \mathcal{A}\otimes_{\operatorname{max}} \mathcal{B},
\end{align}
whereby the vector $\xi_{\psi}$ is cyclic for the representation $\pi_\psi$. This implies that this description contains all information needed about the state $\psi$ considered as action on $\mathcal{A}\otimes_{\operatorname{max}} \mathcal{B}$. Indeed, the relative entropy, which will become central in the following, can be defined with this approach on $\mathcal{C}^\star$-algebras as discussed in \cite{Ohya1993}.

This leads to the following notion of purification, following the operator-algebraic framework of \cite{Berta_2015} (cf.~also~\cite[Lem.~4.11]{Takesaki1979}).

\begin{definition}[Purification in the commuting-operator framework]
\label{def:purification_commuting_operator}
Let $\mathcal{M}$ be a von Neumann algebra and let $\omega \in \mathcal{S}(\mathcal{M})$. A \emph{purification} of $\omega$ is a triple
\begin{align}
(\pi,\mathcal{H},\xi),
\end{align}
where $\pi:\mathcal{M} \to \mathcal{B}(\mathcal{H})$ is a representation and $\xi \in \mathcal{H}$ is a unit vector such that
\begin{align}
\omega(x)=\langle \xi,\pi(x)\xi\rangle,
\qquad x\in \mathcal{M}.
\end{align}
We call $\pi(\mathcal{M})$ the \emph{relevant system} and $\pi(\mathcal{M})'$ the \emph{complementary system} of the purification.
\end{definition}

In particular, if $(\pi_\omega,\mathcal{H}_\omega,\xi_\omega)$ denotes the GNS triple of $\omega$, then
\begin{align}
\omega(x)=\langle \xi_\omega,\pi_\omega(x)\xi_\omega\rangle,
\qquad x\in \mathcal{M},
\end{align}
and hence the GNS construction yields a canonical purification of $\omega$.
It is worth mentioning that with this discussion, the problematic situation of non-unique purifications from \cite{vanLuijk2026} disappears in the sense of entropic quantities as shown in \cite[Lem.~1]{Berta_2015}. Moreover, in contrast to the finite-dimensional tensor-product picture (cf.~\eqref{eq:purification_property_finite_dimensional_spaces}), this notion of purification is formulated entirely in terms of the represented algebra $\pi(\mathcal{M})$ and its commutant. Thus, the complementary system is not introduced externally, but is determined by the chosen representation of the observable algebra itself. The following theorem shows how this point of view can be connected with a PVM model $\mathcal{P}_A \otimes_{\operatorname{max}} \mathcal{P}_B$ instead of $\mathcal{U}_A \otimes_{\operatorname{max}} \mathcal{U}_B$.
\begin{theorem}[Dilation theorem]\label{thm:dilation_theorem}
    Let $\psi \in \mathcal{S}\!\left(\mathcal{U}_A \otimes_{\operatorname{max}} \mathcal{U}_B\right)$ be a state with GNS triple $\left(\pi_\psi, \mathcal{H}_\psi, \xi_\psi\right)$. Then there exists
    \begin{enumerate}
        \item[(a)] a Hilbert space $\mathcal{H}$
        \item[(b)] a unit vector $\xi \in \mathcal{H}$
        \item[(c)] a unital $\star$-representation $\Pi : \mathcal{P}_A \otimes_{\operatorname{max}} \mathcal{P}_B \to \mathcal{B}\!\left(\mathcal{H}\right)$
        \item[(d)] an injective normal unital $\star$-homomorphism $\iota : \pi_\psi \left(\mathcal{U}_A \otimes_{\operatorname{max}} \mathcal{U}_B\right)^{\prime} \to \Pi\left(\mathcal{P}_A \otimes_{\operatorname{max}} \mathcal{P}_B\right)^\prime $
    \end{enumerate}
    such that for all $x,y,a,b$ and $E \in \pi_\psi \left(\mathcal{U}_A \otimes_{\operatorname{max}} \mathcal{U}_B\right)^{\prime}$
    \begin{align}\label{eq:intertwining_identity}
        \langle \xi_\psi, M_{a\vert x}N_{b \vert y} E\,\xi_\psi\rangle = \langle \xi,\Pi\left(p_{a\vert x}\otimes q_{b\vert y}\right) \iota\! \left(E\right)\xi \rangle.
    \end{align}
    In particular, $\iota$ is a normal $\star$-isomorphism onto its image.
\end{theorem}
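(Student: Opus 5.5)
We will construct the dilation directly on top of the GNS space, by a Naimark dilation of each local POVM that is performed with a tensor factor. Write $\mathcal{K}\coloneqq \mathcal{H}_\psi$ and set $A_{a|x}\coloneqq \pi_\psi(M_{a|x}\otimes 1)$ and $B_{b|y}\coloneqq \pi_\psi(1\otimes N_{b|y})$. These are commuting POVMs on $\mathcal{K}$, and $\mathcal{M}\coloneqq\pi_\psi(\mathcal{U}_A\otimes_{\max}\mathcal{U}_B)''$ is the von Neumann algebra they generate.

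\textbf{Step 1: dilate Alice's POVMs.} For each $x$ we take the ancilla $\ell^2(\mathcal{A}_x)$ and form $\mathcal{H}_A\coloneqq\bigotimes_{x}\ell^2(\mathcal{A}_x)$, a finite tensor product because $\mathcal{X}$ is finite. On $\mathcal{K}\otimes\mathcal{H}_A$ we build projections $P_{a|x}$ by a sequential Naimark construction. Concretely, we fix a reference vector $e_0$ in each $\ell^2(\mathcal{A}_x)$ and use the isometry $V_x\colon\eta\mapsto\sum_a A_{a|x}^{1/2}\eta\otimes e_a$. Extending $V_x$ to a unitary $U_x$ on $\mathcal{K}\otimes\ell^2(\mathcal{A}_x)$ in the standard way, we set
\[
P_{a|x}=U_x^\ast(1\otimes|e_a\rangle\langle e_a|)U_x .
\]
Each $U_x$ acts only on $\mathcal{K}$ and the $x$-th ancilla. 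The entries of $U_x$ are built from functional calculus of $A_{a|x}$, so they lie in the C$^\ast$-algebra generated by $\{A_{a|x}\}_a\subseteq\pi_\psi(\mathcal{U}_A\otimes 1)''$.

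We then repeat the construction for Bob with ancilla $\mathcal{H}_B=\bigotimes_y\ell^2(\mathcal{B}_y)$. We put
\[
\mathcal{H}=\mathcal{K}\otimes\mathcal{H}_A\otimes\mathcal{H}_B ,\qquad \xi=\xi_\psi\otimes e_0^{\otimes}\otimes e_0^{\otimes},
\]
where $e_0^{\otimes}$ denotes the product of the reference vectors over all ancillas of the relevant party. We then define $\iota(E)=E\otimes 1\otimes 1$. This $\iota$ is automatically an injective normal unital $\star$-homomorphism, which gives (d) together with the final sentence of the theorem.

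\textbf{Step 2: commutation checks.} Three commutation relations must hold:
\begin{itemize}
\item $P_{a|x}$ commutes with $Q_{b|y}$ for all $a,x,b,y$;
\item both families commute with $E\otimes 1\otimes 1$ for every $E\in\mathcal{M}'$;
\item the PVM relations hold for each fixed $x$ (respectively $y$).
\end{itemize}
The PVM relations are automatic. The second relation holds because $P_{a|x}$ lies in $\mathcal{M}\overline{\otimes}\mathcal{B}(\mathcal{H}_A)\otimes 1$, which commutes with $\mathcal{M}'\otimes 1\otimes 1$.

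The first relation is the main obstacle. $P_{a|x}$ is built from Alice's operators acting on Alice's ancilla, and $Q_{b|y}$ is built from Bob's operators acting on Bob's ancilla. The ancilla parts commute since they act on different tensor factors. The $\mathcal{K}$-parts commute because $\pi_\psi(\mathcal{U}_A\otimes1)''$ commutes with $\pi_\psi(1\otimes\mathcal{U}_B)''$. What must be verified carefully is that the unitary completion $U_x$ can be chosen inside $\pi_\psi(\mathcal{U}_A\otimes1)''\overline{\otimes}\mathcal{B}(\ell^2(\mathcal{A}_x))$. We would first try the explicit block form: its first column is $(A^{1/2}_{a|x})_a$, and the remaining columns are obtained from the polar decomposition inside a von Neumann algebra. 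If that becomes awkward, we would fall back on Stinespring's theorem applied to the unital completely positive map $\ell^\infty(\mathcal{A}_x)\to\pi_\psi(\mathcal{U}_A\otimes1)''$ relative to the algebra.

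Alice's different settings $x\neq x'$ need no mutual compatibility, since $\mathcal{P}_A$ is a free product. Hence the universal property of the free product gives a representation of $\mathcal{P}_A$, and likewise of $\mathcal{P}_B$. Because the ranges commute, Prop.~\ref{prop:universal_property_maxtensorproduct} then yields $\Pi$ on $\mathcal{P}_A\otimes_{\max}\mathcal{P}_B$, which gives (c). From the second relation, $\iota$ maps into $\Pi(\mathcal{P}_A\otimes_{\max}\mathcal{P}_B)'$.

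\textbf{Step 3: the intertwining identity.} Write $\Pi(p_{a|x}\otimes q_{b|y})=P_{a|x}Q_{b|y}$. Both commute with $\iota(E)$, and the ancillas are independent, so the compression of $P_{a|x}$ to the vector $e_0$ on Alice's ancilla is $A_{a|x}$, and that of $Q_{b|y}$ on Bob's ancilla is $B_{b|y}$. This is the Naimark property $V_x^\ast(1\otimes|e_a\rangle\langle e_a|)V_x=A_{a|x}$. Consequently
\[
\langle\xi,P_{a|x}Q_{b|y}(E\otimes1)\xi\rangle=\langle\xi_\psi,A_{a|x}B_{b|y}E\,\xi_\psi\rangle .
\]
This factorisation uses that $E\otimes1$ acts trivially on the ancillas and that Alice's and Bob's compressions act on disjoint ancilla factors while commuting on $\mathcal{K}$. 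The right-hand side is exactly the left-hand side of \eqref{eq:intertwining_identity}.
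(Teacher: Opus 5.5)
Your route differs from the paper's. The paper never builds a unitary. It extends the POVMs to a UCP map $\Phi_A:\mathcal{P}_A\to\mathcal{N}_A'$ via Boca's free-product theorem and takes a minimal Stinespring dilation. It then moves Bob's effects and Eve's commutant into the commutant of the dilated PVMs by Arveson's commutant lifting, and repeats all of this for Bob. You instead want an explicit tensor-ancilla Naimark dilation with a common reference vector and $\iota(E)=E\otimes1\otimes1$.

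Steps 2 and 3 are correct once Step 1 is granted. Suppose each $U_x$ is a unitary on $\mathcal{K}\otimes\ell^2(\mathcal{A}_x)$ with $U_x(\eta\otimes e_0)=V_x\eta$ and matrix entries in $\mathcal{M}_A\coloneqq\pi_\psi(\mathcal{U}_A\otimes1)''$, and likewise for Bob with $\mathcal{M}_B$. Then the entries of $P_{a|x}$ lie in $\mathcal{M}_A$ and those of $Q_{b|y}$ lie in $\mathcal{M}_B$. This gives every commutation you need, and the $e_0$-corners factorize to $A_{a|x}B_{b|y}$.

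\textbf{The gap.} The existence of such a $U_x$ is exactly what is missing. You flag it but leave it open, and the justification in Step 1 is not valid. For $|\mathcal{A}_x|\ge 3$ the effects $A_{a|x}$ at fixed $x$ need not commute, and no functional-calculus recipe completes the column $(A_{a|x}^{1/2})_a$ to a unitary. The two-outcome rotation $\bigl(\begin{smallmatrix}A^{1/2}&-(1-A)^{1/2}\\(1-A)^{1/2}&A^{1/2}\end{smallmatrix}\bigr)$ is the exception, because $A$ and $1-A$ commute.

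Your two suggested workarounds do not close it either. The polar-decomposition route needs $1-V_xV_x^\ast\sim 1\otimes(1-|e_0\rangle\langle e_0|)$ in $M_{|\mathcal{A}_x|}(\mathcal{M}_A)$. You only know $V_xV_x^\ast\sim 1\otimes|e_0\rangle\langle e_0|$, and cancellation fails in properly infinite von Neumann algebras. So this needs a genuine comparison-theory argument that the proposal does not contain. The Stinespring fallback destroys the tensor-ancilla structure that Steps 2 and 3 rely on. It would push you back to commutant lifting, i.e.\ to the paper's proof.

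\textbf{A repair.} You can avoid comparison theory by adding one vector $e_\ast$ to each ancilla. On $\mathcal{K}\oplus\bigl(\mathcal{K}\otimes\ell^2(\mathcal{A}_x)\bigr)\cong\mathcal{K}\otimes\ell^2(\mathcal{A}_x\sqcup\{\ast\})$ set
\[
U_x=\begin{pmatrix}0 & V_x^\ast\\ V_x & 1-V_xV_x^\ast\end{pmatrix}.
\]
This operator has the following properties.
\begin{itemize}
\item Since $V_x$ is an isometry, $U_x=U_x^\ast$ and $U_x^2=1$.
\item Its entries $A_{a|x}^{1/2}$ and $\delta_{ab}-A_{a|x}^{1/2}A_{b|x}^{1/2}$ lie in $C^\ast(\{A_{a|x}\}_a)\subseteq\mathcal{M}_A$.
\item $U_x(\eta\otimes e_\ast)=V_x\eta$.
\end{itemize}
Define $P_{a|x}=U_x(1\otimes|e_a\rangle\langle e_a|)U_x$ for $a\neq a_0$, and fold the extra vector into one outcome:
\[
P_{a_0|x}=U_x\bigl(1\otimes(|e_{a_0}\rangle\langle e_{a_0}|+|e_\ast\rangle\langle e_\ast|)\bigr)U_x .
\]
These form a PVM. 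The $e_\ast$-corner of $P_{a|x}$ is $A_{a|x}$, because $V_x\eta$ has no $e_\ast$-component, so the added term contributes nothing. With $e_\ast$ as reference vector in every ancilla, your Steps 2 and 3 go through verbatim.

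The completed argument is more elementary than the paper's: it needs neither Boca's theorem nor Arveson's lifting. It gives a finite ancilla $\mathcal{H}=\mathcal{H}_\psi\otimes\mathbb{C}^N$, and $\iota$ becomes a plain ampliation, which makes the isomorphism-invariance step in \autoref{cor:povm_pvm_equality} immediate. The paper's abstract route, in exchange, never has to solve a unitary-completion problem.
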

\begin{proof}
    Let $\left(\pi_\psi, \mathcal{H}_\psi, \xi_\psi\right)$ be the GNS representation of $\psi$, $\mathcal{M}_{AB} \coloneqq \pi_\psi \left(\mathcal{U}_A \otimes_{\operatorname{max}} \mathcal{U}_B\right)^{\prime \prime}$ and 
    \begin{align}
        \iota_0 : \mathcal{M}_{AB}^\prime \hookrightarrow \mathcal{B}\!\left(\mathcal{H}_\psi\right)
    \end{align}
    the embedding of the complementary system. Then $\iota_0$ is an injective normal $\star$-homomorphism by construction. With abuse of notation we denote by $\{A_{a\vert x}\}_{a,x}$ and $\{B_{b\vert y}\}_{b,y}$ the images of the POVM elements from $\mathcal{U}_A \otimes_{\operatorname{max}} \mathcal{U}_B$ in $\mathcal{B}\!\left(\mathcal{H}_\psi\right)$, which are again POVM's and satisfy $\left[A_{a\vert x}, B_{b\vert y}\right] = 0$ as well as $A_{a\vert x}, B_{b\vert y} \in \mathcal{M}_{AB}$. We define in the following 
    \begin{align}\label{eq:def_NA}
        \mathcal{N}_A \coloneqq \left(\{B_{b\vert y}\}_{b,y} \cup \mathcal{M}_{AB}^\prime \right)^{\prime \prime}
    \end{align}
    which is by construction a von Neumann algebra. Moreover, each $A_{a\vert x}$, $a\in \mathcal{A}_x$ and $x \in \mathcal{X}$, belongs to $\mathcal{N}_A^\prime$ and we thus define
    \begin{align}
        \phi_x: l^{\infty}\!\left(\mathcal{A}_x\right) \to \mathcal{N}_A^\prime, \quad p_{a\vert x} \mapsto A_{a\vert x}. 
    \end{align}
    Since $\{A_{a\vert x}\}_{a\in \mathcal{A}_x} \subseteq \mathcal{B}\!\left(\mathcal{H}_\psi\right)$ is a POVM, each $\phi_x$ is a unital and completely positive map (cf.~\cite[Prop.~3.1.~(b)]{FRITZ_2012}). By the free-product extension theorem for unital completely positive maps \cite[Thm.~3.1]{Davidson2018} (cf.~also~\cite{Boca1991}), there exists a unital completely positive map $\Phi_A:\mathcal{P}_A\to \mathcal{N}_A^\prime$ such that
\begin{align}\label{eq:def_PhiA}
\Phi_A(p_{a\vert x})=A_{a\vert x}, \quad x\in\mathcal X,\ a\in\mathcal A_x.
\end{align}
Now let $\left(\rho_A,\mathcal{H}_A,V_A\right)$ be a minimal Stinespring dilation of $\Phi_A$ \cite[Thm.~4.1]{Paulsen2003} (cf.~also~\cite{Stinespring1955}). Then we have $V_A^\star \rho_A(u)V_A = \Phi_A(u)$ for all $u \in \mathcal{P}_A$ and by minimality $\overline{\operatorname{span}}\bigl(\rho_A(\mathcal P_A)V_A\mathcal H_\psi\bigr)=\mathcal H_A$. Since $\Phi_A$ is unital, $V_A^\star V_A = \Phi_A\!\left(1\right) = 1$, i.e.~$V_A$ is an isometry. Furthermore, we observe that $\rho_A\left(\mathcal{P}_A\right) \subseteq \mathcal{B}\left(\mathcal{H}_A\right)$ is a unital $\mathcal{C}^\star$-subalgebra. By Arveson's theorem \cite[Thm.~12.7]{Paulsen2003} there exists for all $T \in \left(V_A^\star  \rho_A\left(\mathcal{P}_A\right) V_A\right)^\prime$ an element $T_1 \in \rho_A\left(\mathcal{P}_A\right)^\prime$ such that $V_AT=T_1V_A$. Indeed the map 
\begin{align}
\kappa_A:\left(V_A^\star  \rho_A\left(\mathcal{P}_A\right) V_A\right)^\prime \to \rho_A\left(\mathcal{P}_A\right)^\prime \cap \{V_AV_A^\star \}^\prime, T \mapsto \kappa_A(T) \coloneqq T_1 
\end{align}
is a normal $\star$-homormorphism (for normality check \cite[Thm.~1.3.1]{Arveson1969Subalgebras}). Since $\Phi_A\left(\mathcal{P}_A\right) \subseteq \mathcal{N}_A^\prime$, we have $V_A^\star  \rho_A\left(\mathcal{P}_A\right) V_A \subseteq \mathcal{N}_A^\prime.$ This implies $\mathcal{N}_A \subseteq \left(V_A^\star  \rho_A\left(\mathcal{P}_A\right) V_A\right)^\prime$ and thus there exists for each $T\in \mathcal{N}_A$ an operator $\kappa_A\left(T\right) \in \rho_A\left(\mathcal{P}_A\right)^\prime$ such that 
\begin{align}\label{eq:intertwine_VA}
    \kappa_A\left(T\right) V_A = V_A T.
\end{align}
Moreover, $\kappa_A$ is injective: if $\kappa_A\left(T\right) = 0$, then $V_AT = 0$ by \autoref{eq:intertwine_VA} and hence $T = V_A^\star V_A T = 0$, since $V_A$ is an isometry. We define $\xi_A \coloneqq V_A \xi_\psi \in \mathcal{H}_A$, which is a unit vector since $V_A$ is an isometry. Then we have for $T \in \mathcal{N}_A$, $x \in \mathcal{X}$, $a \in \mathcal{A}_x$
\begin{equation}\label{eq:alice_identity}
    \begin{aligned}
        \langle \xi_A, \rho_A\left(p_{a\vert x}\right) \kappa_A\left(T\right) \xi_A\rangle &= \langle \xi_A, \rho_A\left(p_{a\vert x}\right) \kappa_A\left(T\right) V_A \xi_\psi \rangle \\
        &=\langle \xi_\psi,V_A^\star \rho_A\left(p_{a\vert x}\right) V_A T \xi_\psi \rangle \\
        &= \langle \xi_\psi, \Phi_A\left(p_{a\vert x}\right) T\xi_\psi\rangle \\
        &=\langle \xi_\psi, A_{a\vert x} T\xi_\psi\rangle,
    \end{aligned}
\end{equation}
where we have used \autoref{eq:intertwine_VA} and \autoref{eq:def_PhiA}. We turn to Bob's side and define
\begin{align}\label{eq:def_NB}
    \mathcal{N}_B \coloneqq \left(\rho_A\!\left(\mathcal{P}_A\right) \cup \kappa_A\!\left(\mathcal{M}_{AB}^\prime\right)\right)^{\prime \prime} \subseteq \mathcal{B}\!\left(\mathcal{H}_A\right),
\end{align}
which is again a von Neumann algebra. Since $B_{b \vert y} \in \mathcal{N}_A$ for all $b,y$, the operators $\kappa_A\!\left(B_{b\vert y}\right)$ are well-defined and we claim $\kappa_A\!\left(B_{b\vert y}\right) \in \mathcal{N}_B^\prime$. Indeed, $\left[\kappa_A\!\left(B_{b\vert y}\right), \rho_A\!\left(p_{a\vert x}\right)\right] = 0$ since $\kappa_A\!\left(\mathcal{N}_A\right) \subseteq \rho_A\!\left(\mathcal{P}_A\right)^\prime$, and for $E \in \mathcal{M}_{AB}^\prime$ we have $\left[B_{b\vert y}, E\right] = 0$ since $B_{b\vert y} \in \mathcal{M}_{AB}$, whence
\begin{align}
    \left[\kappa_A\!\left(B_{b\vert y}\right), \kappa_A\!\left(E\right)\right] = \kappa_A\!\left(\left[B_{b\vert y}, E\right]\right) = 0,
\end{align}
because $\kappa_A$ is a $\star$-homomorphism. We thus define for each $y \in \mathcal{Y}$
\begin{align}
    \phi_y : l^\infty\!\left(\mathcal{B}_y\right) \to \mathcal{N}_B^\prime, \quad q_{b \vert y} \mapsto \kappa_A\!\left(B_{b\vert y}\right).
\end{align}
Since $\kappa_A$ is a unital $\star$-homomorphism, $\{\kappa_A\!\left(B_{b\vert y}\right)\}_{b \in \mathcal{B}_y}$ is again a POVM and each $\phi_y$ is a unital and completely positive map (cf.~\cite[Prop.~3.1.~(b)]{FRITZ_2012}). By the free-product extension theorem \cite[Thm.~3.1]{Davidson2018} (cf.~also~\cite{Boca1991}), there exists a unital completely positive map $\Phi_B : \mathcal{P}_B \to \mathcal{N}_B^\prime$ such that
\begin{align}\label{eq:def_PhiB}
    \Phi_B\!\left(q_{b\vert y}\right) = \kappa_A\!\left(B_{b\vert y}\right), \quad y \in \mathcal{Y},\ b \in \mathcal{B}_y.
\end{align}
Now let $\left(\rho_B, \mathcal{H}_B, V_B\right)$ be a minimal Stinespring dilation of $\Phi_B$. Then $V_B^\star \rho_B\!\left(v\right) V_B = \Phi_B\!\left(v\right)$ for all $v \in \mathcal{P}_B$ and $V_B$ is an isometry since $\Phi_B$ is unital. Since $\Phi_B\!\left(\mathcal{P}_B\right) \subseteq \mathcal{N}_B^\prime$, we have $V_B^\star \rho_B\!\left(\mathcal{P}_B\right) V_B \subseteq \mathcal{N}_B^\prime$ and hence $\mathcal{N}_B \subseteq \left(V_B^\star \rho_B\!\left(\mathcal{P}_B\right) V_B\right)^\prime$. By the same application of Arveson's theorem \cite[Thm.~12.7]{Paulsen2003} as above, there exists a normal $\star$-homomorphism
\begin{align}
    \kappa_B : \left(V_B^\star \rho_B\!\left(\mathcal{P}_B\right) V_B\right)^\prime \to \rho_B\!\left(\mathcal{P}_B\right)^\prime \cap \{V_BV_B^\star\}^\prime
\end{align}
such that for each $S \in \mathcal{N}_B$
\begin{align}\label{eq:intertwine_VB}
    \kappa_B\!\left(S\right) V_B = V_B S,
\end{align}
and $\kappa_B$ is injective by the same argument as for $\kappa_A$, since $V_B$ is an isometry. We now set
\begin{align}\label{eq:def_global}
    \mathcal{H} \coloneqq \mathcal{H}_B, \quad \xi \coloneqq V_B \xi_A = V_B V_A \xi_\psi, \quad \Pi_A \coloneqq \kappa_B \circ \rho_A, \quad \Pi_B \coloneqq \rho_B, \quad \iota \coloneqq \kappa_B \circ \kappa_A \circ \iota_0.
\end{align}
The vector $\xi$ is a unit vector since $V_A$ and $V_B$ are isometries. The compositions are well-defined: by \autoref{eq:def_NB} we have $\rho_A\!\left(\mathcal{P}_A\right) \subseteq \mathcal{N}_B$ and $\kappa_A\!\left(\mathcal{M}_{AB}^\prime\right) \subseteq \mathcal{N}_B$, both contained in the domain of $\kappa_B$. Furthermore, $\Pi_A$ and $\Pi_B$ are unital $\star$-representations with commuting ranges, since
\begin{align}
    \Pi_A\!\left(\mathcal{P}_A\right) = \kappa_B\!\left(\rho_A\!\left(\mathcal{P}_A\right)\right) \subseteq \rho_B\!\left(\mathcal{P}_B\right)^\prime = \Pi_B\!\left(\mathcal{P}_B\right)^\prime.
\end{align}
By the universal property of the maximal tensor product (cf.~\autoref{prop:universal_property_maxtensorproduct}) there exists a unique unital $\star$-representation
\begin{align}\label{eq:def_Pi}
    \Pi : \mathcal{P}_A \otimes_{\operatorname{max}} \mathcal{P}_B \to \mathcal{B}\!\left(\mathcal{H}\right), \quad \Pi\!\left(u \otimes v\right) = \Pi_A\!\left(u\right)\Pi_B\!\left(v\right),
\end{align}
which proves (a)--(c). For (d) we first observe that $\iota$ is a normal unital $\star$-homomorphism as a composition of such maps, and injective as a composition of injective maps. It remains to show $\iota\!\left(\mathcal{M}_{AB}^\prime\right) \subseteq \Pi\!\left(\mathcal{P}_A \otimes_{\operatorname{max}} \mathcal{P}_B\right)^\prime$, i.e.~commutation with $\Pi_A\!\left(\mathcal{P}_A\right)$ and $\Pi_B\!\left(\mathcal{P}_B\right)$. For $E \in \mathcal{M}_{AB}^\prime$ we have $\iota\!\left(E\right) = \kappa_B\!\left(\kappa_A\!\left(E\right)\right) \in \kappa_B\!\left(\mathcal{N}_B\right) \subseteq \rho_B\!\left(\mathcal{P}_B\right)^\prime = \Pi_B\!\left(\mathcal{P}_B\right)^\prime$. Moreover, $\left[\kappa_A\!\left(E\right), \rho_A\!\left(p_{a\vert x}\right)\right] = 0$ since $\kappa_A\!\left(\mathcal{N}_A\right) \subseteq \rho_A\!\left(\mathcal{P}_A\right)^\prime$, and therefore
\begin{align}
    \left[\iota\!\left(E\right), \Pi_A\!\left(p_{a\vert x}\right)\right] = \kappa_B\!\left(\left[\kappa_A\!\left(E\right), \rho_A\!\left(p_{a\vert x}\right)\right]\right) = 0,
\end{align}
because $\kappa_B$ is a $\star$-homomorphism. Since $\iota$ is an injective normal $\star$-homomorphism between von Neumann algebras, it is a normal $\star$-isomorphism onto its image. Finally, let $x \in \mathcal{X}$, $y \in \mathcal{Y}$, $a \in \mathcal{A}_x$, $b \in \mathcal{B}_y$ and $E \in \mathcal{M}_{AB}^\prime$. Then
\begin{equation}
    \begin{aligned}
        \langle \xi, \Pi\!\left(p_{a\vert x} \otimes q_{b\vert y}\right) \iota\!\left(c\right) \xi\rangle 
        &= \langle V_B\xi_A, \kappa_B\!\left(\rho_A\!\left(p_{a\vert x}\right)\right) \rho_B\!\left(q_{b\vert y}\right) \kappa_B\!\left(\kappa_A\!\left(E\right)\right) V_B \xi_A\rangle \\
        &= \langle V_B\,\rho_A\!\left(p_{a\vert x}\right)\xi_A,\, \rho_B\!\left(q_{b\vert y}\right) V_B\, \kappa_A\!\left(E\right) \xi_A\rangle \\
        &= \langle \rho_A\!\left(p_{a\vert x}\right)\xi_A,\, V_B^\star\rho_B\!\left(q_{b\vert y}\right) V_B\, \kappa_A\!\left(E\right) \xi_A\rangle \\
        &= \langle \xi_A,\, \rho_A\!\left(p_{a\vert x}\right) \kappa_A\!\left(B_{b\vert y}\right) \kappa_A\!\left(E\right) \xi_A\rangle \\
        &= \langle \xi_A,\, \rho_A\!\left(p_{a\vert x}\right) \kappa_A\!\left(B_{b\vert y}\,E\right) \xi_A\rangle \\
        &= \langle \xi_\psi,\, A_{a\vert x}\,B_{b\vert y}\,E\,\xi_\psi\rangle,
    \end{aligned}
\end{equation}
where the second equality uses \autoref{eq:intertwine_VB} for $\rho_A\!\left(p_{a\vert x}\right) \in \mathcal{N}_B$ and $\kappa_A\!\left(E\right) \in \mathcal{N}_B$, the fourth equality uses $V_B^\star \rho_B\!\left(q_{b\vert y}\right) V_B = \Phi_B\!\left(q_{b\vert y}\right)$ together with \autoref{eq:def_PhiB}, the fifth equality holds since $\kappa_A$ is a $\star$-homomorphism and $B_{b\vert y}\,E \in \mathcal{N}_A$, and the last equality is \autoref{eq:alice_identity} applied with $T = B_{b\vert y}\,E$. This proves \autoref{eq:intertwining_identity}.
\end{proof}

\subsection{Secrecy proofs}

As already emphasized in the \autoref{subsec:turning_experimental_data}, the genuinely quantum part of a security proof is privacy amplification: once Alice has produced a classical raw key variable from her key-generation measurement, the task is to quantify how much randomness can still be extracted against arbitrary quantum side information compatible with the observed data. In the present framework, such side information is not modeled by adjoining a fixed tensor factor, but rather through the commutant of a purification as in Def.~\ref{def:purification_commuting_operator}. 

For secrecy, fix a key-generation setting $\tilde{x}\in\mathcal X$, let $\mathcal Z\coloneqq \mathcal A_{\tilde{x}}$, and write $\{P_z\}_{z\in\mathcal Z}$ for the corresponding measurement on Alice's side. Let $(\pi,\mathcal H,\xi)$ be a purification of the relevant Alice--Bob state in the sense of Def.~\ref{def:purification_commuting_operator}, and set
\begin{align}
\mathcal M_E\coloneqq \pi(\mathcal M_{AB})'
\end{align}
for the adversary's system. The key-generation measurement is then described in the Heisenberg picture by the normal unital $^\ast$-homomorphism
\begin{align}
\mathsf M_Z:\ell^\infty(\mathcal Z)\to \pi(\mathcal M_{AB}),
\qquad
\mathsf M_Z(f)=\sum_{z\in\mathcal Z} f(z)\,P_z.
\end{align}
Since $\mathcal M_E$ commutes with $\pi(\mathcal M_{AB})$, this induces a normal state
\begin{align}
\omega_{ZE}\in \mathcal S\bigl(\ell^\infty(\mathcal Z)\,\overline\otimes\,\mathcal M_E\bigr)
\end{align}
given by
\begin{align}
\omega_{ZE}(f\otimes E)
=
\langle \xi,\mathsf M_Z(f)\,E\,\xi\rangle,
\qquad
f\in \ell^\infty(\mathcal Z),\ E\in \mathcal M_E.
\end{align}
Equivalently, if
\begin{align}
p(z)\coloneqq \langle \xi,P_z\xi\rangle,
\end{align}
then
\begin{align}
\omega_{ZE} = \sum_{z\in\mathcal Z} p(z)\,p_z\otimes \omega_E^z, \quad \omega_E^z(E)\coloneqq \frac{\langle \xi,P_zE\,\xi\rangle}{p(z)} \quad \text{for }p(z)>0,
\end{align}
so $\omega_{ZE}$ is precisely the classical--quantum post-measurement state of Alice's raw key variable and Eve's side information. The following definition is from \cite{Berta_2015}.

\begin{definition}[Secrecy for privacy amplification in the commuting-operator model]
\label{def:secrecy_commuting_operator}
Let $\mathcal{K}$ be a finite key alphabet and let
\begin{align}
    \omega_{KE}\in \mathcal S_{\le}\!\bigl(\ell^\infty(\mathcal{K})\,\overline\otimes\,\mathcal{M}_E\bigr)
\end{align}
be a classical--quantum state. We call $\omega_{KE}$ $\varepsilon$-secret if
\begin{align}\label{eq:secrecy_statement_commuting_operator}
    \Bigl\|
    \omega_{KE}
    -
    \frac{1}{|\mathcal K|}\tau_{\mathcal K}\otimes \omega_E
    \Bigr\|
    \le \varepsilon,
\end{align}
where $\tau_{\mathcal{K}}$ denotes the canonical trace on $\ell^\infty(\mathcal{K})$, $\omega_E$ is the marginal of $\omega_{KE}$ on $\mathcal{M}_E$.
\end{definition}

Def.~\ref{def:secrecy_commuting_operator} is the operator-algebraic analogue of the usual secrecy requirement, that is, the ideal key is the uniformly distributed classical state on $\ell^\infty(\mathcal K)$, independent of the adversary's observable algebra. In particular, once the abort event and the public classical transcript are included in the standard way, \eqref{eq:secrecy_statement_commuting_operator} reduces to the same privacy-amplification task that appears in the composable definitions of secrecy discussed in \autoref{subsec:turning_experimental_data}. 

The quantity governing privacy amplification is the smooth conditional min-entropy. Since $\ell^\infty(\mathcal Z)$ is finite-dimensional and canonically embeds as the diagonal algebra in $B(\mathbb C^{|\mathcal Z|})$, the definition \cite{Berta_2015} applies verbatim in our setting. For a classical--quantum state $\omega_{ZE}\in \mathcal S_{\le}(\ell^\infty(\mathcal Z)\,\overline\otimes\,\mathcal M_E)$ we define
\begin{align}\label{eq:conditional_min_entropy_commuting_operator}
    H_{\min}(Z|E)_\omega
    \coloneqq
    -\log
    \inf_{\sigma_E\in \mathcal N^+(\mathcal M_E)}
    \Bigl\{
        \sigma_E(1)
        :
        \tau_{\mathcal Z}\otimes \sigma_E \ge \omega_{ZE}
    \Bigr\},
\end{align}
where $\tau_{\mathcal Z}$ is the trace on $\ell^\infty(\mathcal Z)$. Writing
\begin{align}
    \mathcal B_\varepsilon(\omega_{ZE})
    \coloneqq
    \Bigl\{
        \widetilde\omega_{ZE}\in
        \mathcal S_{\le}\!\bigl(\ell^\infty(\mathcal Z)\,\overline\otimes\,\mathcal M_E\bigr)
        :
        P(\widetilde\omega_{ZE},\omega_{ZE})\le \varepsilon
    \Bigr\},
\end{align}
for the $\varepsilon$-ball with respect to the purified distance (cf.~\cite[Def.~7]{Berta_2015}), we set
\begin{align}\label{eq:smooth_min_entropy_commuting_operator}
    H_{\min}^\varepsilon(Z|E)_\omega
    \coloneqq
    \sup_{\widetilde\omega_{ZE}\in \mathcal B_\varepsilon(\omega_{ZE})}
    H_{\min}(Z|E)_{\widetilde\omega}.
\end{align}
Equivalently, in terms of the smooth max-relative entropy,
\begin{align}\label{eq:smooth_min_entropy_as_dmax}
    H_{\min}^\varepsilon(Z|E)_\omega
    =
    - \inf_{\sigma_E\in \mathcal S(\mathcal M_E)}
    D_{\max}^\varepsilon\!\bigl(
        \omega_{ZE}\,\big\|\,\tau_{\mathcal Z}\otimes \sigma_E
    \bigr).
\end{align}

At this point it is important that the quantity in \eqref{eq:smooth_min_entropy_commuting_operator} is well defined from the cryptographic point of view. For any purification, Alice's key measurement induces a post-measurement state on $ZB$, whose complementary system is the adversary's algebra. After embedding $\ell^\infty(\mathcal Z)$ diagonally into a matrix algebra, \cite[Lem.~1 and Lem.~3]{Berta_2015} imply that $H_{\min}^{\varepsilon}(Z\mid E)_\omega$ is independent of the chosen purification. Hence \eqref{eq:smooth_min_entropy_commuting_operator} is an intrinsic quantity of the commuting-operator model and thus the correct one-shot measure for privacy amplification.

The left-over-hash lemma \cite[Prop.~21]{Berta_2015} reduces secrecy to a lower bound on $H_{\min}^\varepsilon(Z|E)_\omega$ in the following sense. Let $\mathcal Z,\mathcal K$ be finite sets with $|\mathcal K|\le |\mathcal Z|$,
and let $\{\mathcal F,P_{\mathcal F}\}_{\mathcal Z,\mathcal K}$ be a two-universal family of hash functions $f:\mathcal Z\to \mathcal K$. For each $f\in\mathcal F$, let
\begin{align}
    T_f:\ell^\infty(\mathcal Z)\to \ell^\infty(\mathcal K)
\end{align}
denote the induced classical post-processing map. Then for every $\varepsilon\ge 0$,
\begin{align}\label{eq:privacy_amplification_commuting_operator}
    \mathbb E_{f\sim P_{\mathcal F}}
    \Bigl\|
        (T_f\otimes \id_E)(\omega_{ZE})
        -
        \frac{1}{|\mathcal K|}\tau_{\mathcal K}\otimes \omega_E
    \Bigr\|
    \le
    \sqrt{|\mathcal K|\,2^{-H_{\min}^\varepsilon(Z|E)_\omega}}
    +4\varepsilon.
\end{align}

In the asymptotic i.i.d.\ regime, this one-shot quantity \eqref{eq:privacy_amplification_commuting_operator} is governed by the conditional von Neumann entropy. To make this precise, recall that \cite{Fawzi_2025} formulates the asymptotic equipartition theorem in terms of the smooth max-relative entropy and the Araki relative entropy on general von Neumann algebras. For the present classical--quantum situation, this leads naturally to the definition
\begin{align}\label{eq:conditional_von_neumann_entropy_commuting_operator}
    H(Z|E)_\omega
    \coloneqq
    - \inf_{\sigma_E\in \mathcal S(\mathcal M_E)}
    D\!\bigl(
        \omega_{ZE}\,\big\|\,\tau_{\mathcal Z}\otimes \sigma_E
    \bigr).
\end{align}
Indeed, writing the classical--quantum state in block form as
\begin{align}
    \omega_{ZE}=\sum_{z\in \mathcal Z} p_z\otimes \omega_E^z,
    \qquad
    \omega_E=\sum_{z\in \mathcal Z}\omega_E^z,
\end{align}
the direct-sum structure of $\ell^\infty(\mathcal Z)\,\overline\otimes\,\mathcal M_E$
implies
\begin{align}
    D\!\bigl(\omega_{ZE}\,\big\|\,\tau_{\mathcal Z}\otimes \sigma_E\bigr)
    =
    \sum_{z\in\mathcal Z}
    D\!\bigl(\omega_E^z\,\big\|\,\sigma_E\bigr)
\end{align}
for every $\sigma_E\in \mathcal S(\mathcal M_E)$. Applying the corresponding identity for Araki relative entropy, which is a direct consequence of the definition, yields
\begin{align}
    \sum_{z\in\mathcal Z}
    D\!\bigl(\omega_E^z\,\big\|\,\sigma_E\bigr)
    =
    \sum_{z\in\mathcal Z}
    D\!\bigl(\omega_E^z\,\big\|\,\omega_E\bigr)
    +
    D\!\bigl(\omega_E\,\big\|\,\sigma_E\bigr).
\end{align}
Since $D(\omega_E\|\sigma_E)\ge 0$, with equality if and only if
$\sigma_E=\omega_E$, the infimum in
\eqref{eq:conditional_von_neumann_entropy_commuting_operator} is attained at
the marginal $\omega_E$. Therefore
\begin{align}
    H(Z|E)_\omega
    =
    -D\!\bigl(
        \omega_{ZE}\,\big\|\,\tau_{\mathcal Z}\otimes \omega_E
    \bigr).
\end{align}

Combining \cite[Prop.~21]{Berta_2015} with \cite[Thm.~1.1]{Fawzi_2025}, we arrive at the same conceptual conclusion as in the finite-dimensional discussion. Up to finite-size corrections, the secrecy of a DIQKD protocol is determined by a single-round optimization of a conditional von Neumann entropy over all commuting-operator models compatible with the observed statistics $p(a,b\vert x,y)$. The only difference is that the optimization now has to be carried out in the universal operator-algebraic model and with side information represented via the commutant of a purification.

More concretely, let $\mathcal{U}_A$ and $\mathcal{U}_B$ be the universal POVM algebras from \autoref{eq:def_universal_povm_algebra}, let $\tilde x\in \mathcal{X}$ be the key-generation input on Alice's side, and let $M_{z|\tilde x}\in \mathcal{U}_A$ denote the corresponding measurement operators. The asymptotic secrecy problem is then reduced to
\begin{equation}\label{eq:commuting_operator_entropy_program}
    \begin{aligned}
        \inf\ & H(Z|E)_{\omega_{ZE}} \\
        \text{s.t. }&
        \psi \in \mathcal{S}\!\left(\mathcal{U}_A\otimes_{\max}\mathcal{U}_B\right),\\
        &\psi\!\left(M_{a|x}\otimes N_{b|y}\right)=p(a,b|x,y),
        \qquad
        a,b,x,y,\\
        &(\pi,\mathcal{H},\xi)\ \text{is a purification of }\psi,\\
        &\mathcal{M}_E \coloneqq \pi\!\left(\mathcal{U}_A\otimes_{\max}\mathcal{U}_B\right)',\\
        &\omega_{ZE}\in
        \mathcal{S}\!\left(\ell^\infty(\mathcal{Z})\,\overline\otimes\,\mathcal{M}_E\right),\\
        &\omega_{ZE}(p_z\otimes E)
        =
        \left\langle \xi,\,
        \pi\!\left(M_{z|\tilde x}\otimes 1\right)E\,\xi
        \right\rangle,
        \qquad
        z\in\mathcal{Z},\ E\in \mathcal{M}_E.
    \end{aligned}
\end{equation}
Here $\{p_z\}_{z\in\mathcal{Z}}$ denotes the canonical basis of $\ell^\infty(\mathcal{Z})$. Comparing this with \eqref{eq:finite_dimensional_problem}, we conclude a similar structure in the commuting operator framework.

Moreover, there is no loss of generality in fixing the GNS purification of $\psi$. Indeed, for the one-shot quantities relevant to privacy amplification, purification-independence is exactly the content of \cite[Lem.~1 and Lem.~3]{Berta_2015}. Combining the argument in \cite[Lem.~3]{Berta_2015} and the asymptotic equipartition theorem \cite{Fawzi_2025} readily implies that also the conditional von Neumann entropy is invariant under the concrete purification. Hence the optimization over arbitrary purifications can be dropped, and one may, without loss of generality, work with the GNS representation throughout. Accordingly, if $(\pi_\psi,\mathcal{H}_\psi,\xi_\psi)$ denotes the GNS triple of $\psi$ and
\begin{align}
    \mathcal{M}_{AB,\psi}
    \coloneqq
    \pi_\psi\!\left(\mathcal{U}_A\otimes_{\max}\mathcal{U}_B\right)'',
\end{align}
then the asymptotic secrecy problem reduces to
\begin{equation}\label{eq:commuting_operator_entropy_program_gns_povm}
    \begin{aligned}
        \inf\ & H(Z|E)_{\omega_{ZE}} \\
        \text{s.t. }&
        \psi \in \mathcal{S}\!\left(\mathcal{U}_A\otimes_{\max}\mathcal{U}_B\right),\\
        &\psi\!\left(M_{a|x}\otimes N_{b|y}\right)=p(a,b|x,y),
        \qquad
        a,b,x,y,\\
        &\mathcal{M}_E \coloneqq \mathcal{M}_{AB,\psi}',\\
        &\omega_{ZE}\in
        \mathcal{S}\!\left(\ell^\infty(\mathcal{Z})\,\overline\otimes\,\mathcal{M}_E\right),\\
        &\omega_{ZE}(p_z\otimes E)
        =
        \left\langle \xi_\psi,\,
        \pi_\psi\!\left(M_{z|\tilde x}\otimes 1\right)E\,\xi_\psi
        \right\rangle,
        \qquad
        z\in\mathcal{Z},\ E\in \mathcal{M}_E.
    \end{aligned}
\end{equation}
Here $\{p_z\}_{z\in\mathcal{Z}}$ denotes again the canonical basis of $\ell^\infty(\mathcal{Z})$. Comparing this with \eqref{eq:finite_dimensional_problem}, we conclude that the commuting-operator framework leads to the same structural optimization problem, except that the underlying model is now formulated canonically in the GNS representation and the adversary's side information is encoded by the commutant.

Finally, the dilation theorem \autoref{thm:dilation_theorem} allows us to replace the universal POVM algebras by the universal PVM algebras without changing the value of the program. Consider
\begin{equation}\label{eq:commuting_operator_entropy_program_gns}
    \begin{aligned}
        \inf\ & H(Z|E)_{\omega_{ZE}} \\
        \text{s.t. }&
        \psi \in \mathcal{S}\!\left(\mathcal{P}_A\otimes_{\max}\mathcal{P}_B\right),\\
        &\psi\!\left(p_{a|x}\otimes q_{b|y}\right)=p(a,b|x,y),
        \qquad
        a,b,x,y,\\
        &\mathcal{M}_E \coloneqq \mathcal{M}_{AB,\psi}',\\
        &\omega_{ZE}\in
        \mathcal{S}\!\left(\ell^\infty(\mathcal{Z})\,\overline\otimes\,\mathcal{M}_E\right),\\
        &\omega_{ZE}(p_z\otimes E)
        =
        \left\langle \xi_\psi,\,
        \pi_\psi(p_{z|\tilde x})\,E\,\xi_\psi
        \right\rangle,
        \qquad
        z\in\mathcal{Z},\ E\in \mathcal{M}_E,
    \end{aligned}
\end{equation}
where now $\mathcal{M}_{AB,\psi} \coloneqq \pi_\psi\!\left(\mathcal{P}_A\otimes_{\max}\mathcal{P}_B\right)''$ and we abbreviate $\pi_\psi(p_{z|\tilde x}) \coloneqq \pi_\psi\!\left(p_{z|\tilde x}\otimes 1\right)$.

\begin{corollary}\label{cor:povm_pvm_equality}
    The optimal values of \eqref{eq:commuting_operator_entropy_program_gns_povm} and \eqref{eq:commuting_operator_entropy_program_gns} coincide.
\end{corollary}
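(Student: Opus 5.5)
The plan is to prove the two inequalities between the optimal values of \eqref{eq:commuting_operator_entropy_program_gns_povm} and \eqref{eq:commuting_operator_entropy_program_gns} separately. The direction ``POVM value $\le$ PVM value'' should be structural. The converse direction rests on \autoref{thm:dilation_theorem}, Uhlmann monotonicity of the Araki relative entropy, and the purification-independence of $H(Z|E)$ discussed above.

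\emph{POVM value $\le$ PVM value.} Projections form POVMs, so by the universal property of $\mathcal U_A$ there is a unital surjective $\star$-homomorphism $\theta_A:\mathcal U_A\to\mathcal P_A$ with $M_{a|x}\mapsto p_{a|x}$; the surjectivity holds because the image contains all generators. The same holds for Bob. Via \autoref{prop:universal_property_maxtensorproduct} this yields a surjective $\theta:\mathcal U_A\otimes_{\max}\mathcal U_B\to\mathcal P_A\otimes_{\max}\mathcal P_B$.

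Given a feasible $\psi$ for \eqref{eq:commuting_operator_entropy_program_gns}, the state $\psi\circ\theta$ has the same statistics. Its GNS triple is $(\pi_\psi\circ\theta,\mathcal H_\psi,\xi_\psi)$, and $\xi_\psi$ stays cyclic because $\theta$ is onto. Hence the represented algebras, the commutants $\mathcal M_E$, and the states $\omega_{ZE}$ coincide, and so do the objective values.

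\emph{PVM value $\le$ POVM value.} Fix a feasible $\psi$ for \eqref{eq:commuting_operator_entropy_program_gns_povm} and apply \autoref{thm:dilation_theorem} to obtain $(\mathcal H,\xi,\Pi,\iota)$. Set $\widetilde\psi\coloneqq\langle\xi,\Pi(\cdot)\xi\rangle$. Taking $E=1$ in \eqref{eq:intertwining_identity} shows that $\widetilde\psi$ reproduces $p(a,b|x,y)$, so $\widetilde\psi$ is feasible. The triple $(\Pi,\mathcal H,\xi)$ is a purification of $\widetilde\psi$ in the sense of Def.~\ref{def:purification_commuting_operator}, although $\xi$ need not be cyclic. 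By the purification-independence of the conditional von Neumann entropy argued before \eqref{eq:commuting_operator_entropy_program_gns_povm} (via \cite[Lem.~3]{Berta_2015} and \cite{Fawzi_2025}), we may compute $H(Z|E)_{\widetilde\omega}$ with Eve's algebra $\widetilde{\mathcal M}_E\coloneqq\Pi(\mathcal P_A\otimes_{\max}\mathcal P_B)'$ and the state $\widetilde\omega_{ZE}(p_z\otimes F)=\langle\xi,\Pi(p_{z|\tilde x}\otimes1)F\xi\rangle$.

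Summing \eqref{eq:intertwining_identity} over $b$ for fixed $y$ gives
\begin{align}
\widetilde\omega_{ZE}\circ(\id\otimes\iota)=\omega_{ZE},
\end{align}
and in particular $\widetilde\omega_E\circ\iota=\omega_E$. Since $\id\otimes\iota$ is a normal unital injective $\star$-homomorphism, it is a unital normal channel in the Heisenberg picture. Uhlmann's monotonicity of the Araki relative entropy therefore yields
\begin{align}
D(\omega_{ZE}\Vert\tau_{\mathcal Z}\otimes\omega_E)\le D(\widetilde\omega_{ZE}\Vert\tau_{\mathcal Z}\otimes\widetilde\omega_E),
\end{align}
that is, $H(Z|E)_{\widetilde\omega}\le H(Z|E)_\omega$. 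Taking the infimum over feasible $\psi$ gives the inequality.

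\emph{Main obstacle.} The delicate step is the justification that evaluating the entropy with the non-cyclic purification $(\Pi,\mathcal H,\xi)$ gives the same value as the GNS purification of $\widetilde\psi$; the dilated vector is typically not cyclic. I would settle this first by making the cited purification-invariance precise. One route is to restrict to the cyclic subspace $\overline{\Pi(\cdot)\xi}$, whose projection lies in $\widetilde{\mathcal M}_E$. The other is to argue directly that any purification's commutant relates to the GNS commutant by a normal $\star$-homomorphism compatible with the states, and then apply monotonicity in both directions.

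A fallback that avoids equality is enough here. Monotonicity gives only the inequality $H(Z|E)_{\mathrm{GNS}}\le H(Z|E)_{(\Pi,\xi)}$, but this is the direction needed: it shows the PVM program value is at most $H(Z|E)_\omega$.
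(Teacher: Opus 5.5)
Your proposal is correct and follows essentially the same route as the paper: the surjection $\Theta$ for one direction, and for the other the dilation theorem, summing \eqref{eq:intertwining_identity} over $b$, data processing, and purification independence. Your single monotonicity step under $\operatorname{id}\otimes\iota$ into $\Pi(\mathcal P_A\otimes_{\max}\mathcal P_B)'$ is just the paper's isomorphism invariance onto $\iota(\mathcal M_E)$ followed by restriction monotonicity, and your treatment of the non-cyclic vector $\xi$ is a sound, more explicit replacement for the purification-independence citation the paper relies on there: you compress to the cyclic subspace, whose projection lies in $\Pi(\mathcal P_A\otimes_{\max}\mathcal P_B)'$ and carries the supports of both functionals.
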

\begin{proof}
    Denote the value of \eqref{eq:commuting_operator_entropy_program_gns_povm} by $h_{\operatorname{POVM}}$ and the value of \eqref{eq:commuting_operator_entropy_program_gns} by $h_{\operatorname{PVM}}$.

    We first show $h_{\operatorname{POVM}} \leq h_{\operatorname{PVM}}$. Since the canonical projections $\{p_{a|x}\}_{a\in\mathcal{A}_x} \subseteq \mathcal{P}_A$ form POVMs, the universal property of $\mathcal{U}_A$ yields a unital $\star$-homomorphism $\theta_A : \mathcal{U}_A \to \mathcal{P}_A$ with $\theta_A\!\left(M_{a|x}\right) = p_{a|x}$, and similarly $\theta_B : \mathcal{U}_B \to \mathcal{P}_B$ with $\theta_B\!\left(N_{b|y}\right) = q_{b|y}$. Composed with the canonical inclusions into $\mathcal{P}_A\otimes_{\max}\mathcal{P}_B$, the ranges commute, so by the universal property of the maximal tensor product (cf.~\autoref{prop:universal_property_maxtensorproduct}) there exists a unital $\star$-homomorphism
    \begin{align}
        \Theta : \mathcal{U}_A\otimes_{\max}\mathcal{U}_B \to \mathcal{P}_A\otimes_{\max}\mathcal{P}_B, \quad \Theta\!\left(M_{a|x}\otimes N_{b|y}\right) = p_{a|x}\otimes q_{b|y},
    \end{align}
    which is surjective, since it contains the generators $p_{a|x}\otimes q_{b|y}$. Now let $\left(\psi, \mathcal{M}_E, \omega_{ZE}\right)$ be feasible for \eqref{eq:commuting_operator_entropy_program_gns} and set $\psi' \coloneqq \psi\circ\Theta \in \mathcal{S}\!\left(\mathcal{U}_A\otimes_{\max}\mathcal{U}_B\right)$. Then $\psi'\!\left(M_{a|x}\otimes N_{b|y}\right) = p(a,b|x,y)$. Since $\Theta$ is surjective, $\xi_\psi$ is cyclic for $\pi_\psi\circ\Theta$, so by uniqueness of the GNS construction up to unitary equivalence \cite[II.~6.4.3]{Blackadar2006} $\left(\pi_\psi\circ\Theta, \mathcal{H}_\psi, \xi_\psi\right)$ is the GNS triple of $\psi'$, and
    \begin{align}
        \mathcal{M}_{AB,\psi'} = \left(\pi_\psi\circ\Theta\!\left(\mathcal{U}_A\otimes_{\max}\mathcal{U}_B\right)\right)'' = \pi_\psi\!\left(\mathcal{P}_A\otimes_{\max}\mathcal{P}_B\right)'' = \mathcal{M}_{AB,\psi},
    \end{align}
    again by density. In particular the commutants agree, so $\left(\psi', \mathcal{M}_E, \omega_{ZE}\right)$ is feasible for \eqref{eq:commuting_operator_entropy_program_gns_povm}, with the same classical--quantum state because $\pi_{\psi'}\!\left(M_{z|\tilde x}\otimes 1\right) = \pi_\psi\!\left(p_{z|\tilde x}\otimes 1\right)$, and hence with the same objective value. Taking the infimum yields $h_{\operatorname{POVM}} \leq h_{\operatorname{PVM}}$.

    We now show $h_{\operatorname{PVM}} \leq h_{\operatorname{POVM}}$. Let $\left(\psi, \mathcal{M}_E, \omega_{ZE}\right)$ be feasible for \eqref{eq:commuting_operator_entropy_program_gns_povm}. By \autoref{thm:dilation_theorem} there exist a Hilbert space $\mathcal{H}$, a unit vector $\xi \in \mathcal{H}$, a unital $\star$-representation $\Pi : \mathcal{P}_A\otimes_{\max}\mathcal{P}_B \to \mathcal{B}\!\left(\mathcal{H}\right)$ and an injective normal unital $\star$-homomorphism $\iota : \mathcal{M}_{AB,\psi}' \to \Pi\!\left(\mathcal{P}_A\otimes_{\max}\mathcal{P}_B\right)'$ such that
    \begin{align}\label{eq:dilation_in_corollary}
        \left\langle \xi_\psi,\, \pi_\psi\!\left(M_{a|x}\otimes N_{b|y}\right) E\, \xi_\psi \right\rangle = \left\langle \xi,\, \Pi\!\left(p_{a|x}\otimes q_{b|y}\right)\iota(E)\,\xi\right\rangle, \qquad a,b,x,y,\ E \in \mathcal{M}_{AB,\psi}'.
    \end{align}
    Define $\psi^{\operatorname{p}} \coloneqq \left\langle \xi, \Pi\!\left(\cdot\right)\xi\right\rangle \in \mathcal{S}\!\left(\mathcal{P}_A\otimes_{\max}\mathcal{P}_B\right)$. Evaluating \eqref{eq:dilation_in_corollary} at $E = 1$ gives $\psi^{\operatorname{p}}\!\left(p_{a|x}\otimes q_{b|y}\right) = p(a,b|x,y)$. Moreover, summing \eqref{eq:dilation_in_corollary} over $b \in \mathcal{B}_y$ with $\sum_b N_{b|y} = 1$ and $\sum_b \Pi\!\left(p_{a|x}\otimes q_{b|y}\right) = \Pi\!\left(p_{a|x}\otimes 1\right)$ yields, for $x = \tilde x$, $a = z$ and $E \in \mathcal{M}_E$,
    \begin{align}\label{eq:cq_equality_corollary}
        \omega_{ZE}\!\left(p_z\otimes E\right) = \left\langle \xi,\, \Pi\!\left(p_{z|\tilde x}\otimes 1\right)\iota(E)\,\xi\right\rangle \eqqcolon \omega_{ZE}^{\operatorname{p}}\!\left(p_z\otimes\iota(E)\right),
    \end{align}
    where $\omega_{ZE}^{\operatorname{p}} \in \mathcal{S}\!\left(\ell^\infty(\mathcal{Z})\,\overline\otimes\,\iota\!\left(\mathcal{M}_E\right)\right)$ is the classical--quantum state induced by the purification $\left(\Pi, \mathcal{H}, \xi\right)$ of $\psi^{\operatorname{p}}$ with side information $\iota\!\left(\mathcal{M}_E\right) \subseteq \Pi\!\left(\mathcal{P}_A\otimes_{\max}\mathcal{P}_B\right)'$. By \autoref{thm:dilation_theorem}, $\iota$ restricts to a normal $\star$-isomorphism from $\mathcal{M}_E$ onto $\iota\!\left(\mathcal{M}_E\right)$, so $\operatorname{id}\otimes\iota$ is a normal $\star$-isomorphism from $\ell^\infty(\mathcal{Z})\,\overline\otimes\,\mathcal{M}_E$ onto $\ell^\infty(\mathcal{Z})\,\overline\otimes\,\iota\!\left(\mathcal{M}_E\right)$. By \eqref{eq:cq_equality_corollary} we have $\omega_{ZE} = \omega_{ZE}^{\operatorname{p}}\circ\left(\operatorname{id}\otimes\iota\right)$ and, for the marginals, $\omega_E = \omega_E^{\operatorname{p}}\circ\iota$. Since the relative entropy is invariant under normal $\star$-isomorphisms, it follows that
    \begin{align}
        D\!\left(\omega_{ZE}\,\middle\|\,\tau_{\mathcal{Z}}\otimes\omega_E\right)
        =
        D\!\left(\omega_{ZE}^{\operatorname{p}}\,\middle\|\,\tau_{\mathcal{Z}}\otimes\omega_E^{\operatorname{p}}\right),
    \end{align}
    and hence $H(Z|E)_{\omega_{ZE}^{\operatorname{p}}} = H(Z|E)_{\omega_{ZE}}$. The tuple $\left(\psi^{\operatorname{p}}, \iota\!\left(\mathcal{M}_E\right), \omega_{ZE}^{\operatorname{p}}\right)$ is a feasible point with respect to the purification $\left(\Pi, \mathcal{H}, \xi\right)$, which need not be the GNS purification of $\psi^{\operatorname{p}}$.

    \rs{Let \(\Omega_{Z\mathcal N_E}\in
\mathcal S(\ell^\infty(\mathcal Z)\,\overline\otimes\,\mathcal N_E)\) denote
the classical--quantum state induced by the same purification
\((\Pi,\mathcal H,\xi)\), where
\(\mathcal N_E:=\Pi(\mathcal P_A\otimes_{\max}\mathcal P_B)'\). Then
\(\omega_{ZE}^{\operatorname p}\) is the restriction of
\(\Omega_{Z\mathcal N_E}\) to
\(\ell^\infty(\mathcal Z)\,\overline\otimes\,\iota(\mathcal M_E)\). Hence,
by monotonicity of relative entropy under this restriction, enlarging Eve's
algebra from \(\iota(\mathcal M_E)\) to \(\mathcal N_E\) can only decrease
the conditional entropy,
\[
    H(Z|\mathcal N_E)_{\Omega}
    \leq
    H(Z|\iota(\mathcal M_E))_{\omega^{\operatorname p}}
    =
    H(Z|E)_{\omega_{ZE}} .
\]
By purification independence of the conditional entropy, the left-hand side
agrees with the value obtained from the GNS purification of
\(\psi^{\operatorname p}\).}
    Let $\left(\psi^{\operatorname{p}}, \mathcal{N}, \omega_{ZE}'\right)$ be a feasible point of \eqref{eq:commuting_operator_entropy_program_gns} with
    \begin{align}
        H(Z|E)_{\omega_{ZE}'} \leq H(Z|E)_{\omega_{ZE}^{\operatorname{p}}} = H(Z|E)_{\omega_{ZE}}.
    \end{align}
    Taking the infimum over all feasible points of \eqref{eq:commuting_operator_entropy_program_gns_povm} yields $h_{\operatorname{PVM}} \leq h_{\operatorname{POVM}}$, which completes the proof.
\end{proof}

We close this section with a simple observation regarding the range of the optimization problem \eqref{eq:commuting_operator_entropy_program_gns}. 
\begin{lemma}\label{lem:domination_and_entropy_bounds_cq}
Let
\begin{align}
    \omega_{ZE}\in \mathcal S\!\bigl(\ell^\infty(\mathcal Z)\,\overline\otimes\,\mathcal M_E\bigr)
\end{align}
be a classical--quantum state, and let $\omega_E$ denote its marginal on
$\mathcal M_E$. Then
\begin{align}\label{eq:domination_cq_state}
    \omega_{ZE}\le \tau_{\mathcal Z}\otimes \omega_E.
\end{align}
In particular,
\begin{align}\label{eq:conditional_entropy_bounds_cq}
    0\le H(Z|E)_\omega \le \log |\mathcal Z|.
\end{align}
\end{lemma}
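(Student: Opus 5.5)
The domination \eqref{eq:domination_cq_state} follows from the block structure of classical--quantum states. Every element of $\ell^\infty(\mathcal Z)\,\overline\otimes\,\mathcal M_E$ decomposes as $\sum_z p_z\otimes E_z$ with $E_z\in\mathcal M_E$, and such an element is positive if and only if each $E_z\ge 0$. Writing $\omega_{ZE}=\sum_z p_z\otimes\omega_E^z$ with $\omega_E^z\in(\mathcal M_E)^\star_+$ and $\omega_E=\sum_{z}\omega_E^z$, we evaluate on a positive element:
\begin{align}
\bigl(\tau_{\mathcal Z}\otimes\omega_E-\omega_{ZE}\bigr)\Bigl(\sum_z p_z\otimes E_z\Bigr)=\sum_z\bigl(\omega_E-\omega_E^z\bigr)(E_z)=\sum_z\sum_{z'\neq z}\omega_E^{z'}(E_z)\ge 0 .
\end{align}
Each summand is nonnegative because the $\omega_E^{z'}$ are positive functionals and $E_z\ge0$. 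This proves the operator inequality of normal functionals.

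For the upper bound $H(Z|E)_\omega\le\log|\mathcal Z|$, the plan is to use the identity $H(Z|E)_\omega=-D(\omega_{ZE}\Vert\tau_{\mathcal Z}\otimes\omega_E)$ derived above, together with the block decomposition
\begin{align}
D(\omega_{ZE}\Vert\tau_{\mathcal Z}\otimes\omega_E)=\sum_z D(\omega_E^z\Vert\omega_E).
\end{align}
Write $\omega_E^z=p(z)\hat\omega_E^z$ with $\hat\omega_E^z$ a state. Lemma~\ref{lem:araki_order_scaling}(ii), applied in the first argument (Araki relative entropy is jointly homogeneous, so $D(\lambda\rho\Vert\lambda\sigma)=\lambda D(\rho\Vert\sigma)$), gives
\begin{align}
\sum_z D(\omega_E^z\Vert\omega_E)=\sum_z p(z)\,D(\hat\omega_E^z\Vert\omega_E)+\sum_z p(z)\log p(z).
\end{align}
Positivity of relative entropy between states then yields $H(Z|E)\le -\sum_z p(z)\log p(z)\le\log|\mathcal Z|$, the last step being the classical Shannon bound.

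For the lower bound $H(Z|E)_\omega\ge 0$, the plan is to use \eqref{eq:domination_cq_state} together with Lemma~\ref{lem:araki_order_scaling}(i), which states that the relative entropy is antitone in its second argument. Taking $\sigma_1=\omega_{ZE}\le\sigma_2=\tau_{\mathcal Z}\otimes\omega_E$, the support condition $s(\omega_{ZE})\le s(\omega_{ZE})$ holds trivially, so
\begin{align}
D(\omega_{ZE}\Vert\tau_{\mathcal Z}\otimes\omega_E)\le D(\omega_{ZE}\Vert\omega_{ZE})=0 .
\end{align}
Hence $H(Z|E)_\omega\ge 0$. The support condition $s(\omega_{ZE})\le s(\tau_{\mathcal Z}\otimes\omega_E)$ needed for the left-hand side to be defined also follows from the domination. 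The main obstacle I expect is justifying that normality and the positivity characterization pass correctly through the tensor product $\ell^\infty(\mathcal Z)\,\overline\otimes\,\mathcal M_E\cong\bigoplus_z\mathcal M_E$. That identification is what makes both the positivity check and the block additivity of $D$ rigorous, and it holds because $\ell^\infty(\mathcal Z)$ is finite dimensional.
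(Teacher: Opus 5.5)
Your domination argument and your lower bound match the paper's proof: block decomposition with $\omega_E^z\le\omega_E$, then Lemma~\ref{lem:araki_order_scaling}(i) with $\sigma_1=\omega_{ZE}$, $\sigma_2=\tau_{\mathcal Z}\otimes\omega_E$.

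For the upper bound you take a different route, and it is correct. The paper never splits into blocks there. It compares $\omega_{ZE}$ directly with the normalized state $u_{\mathcal Z}\otimes\omega_E$, where $u_{\mathcal Z}=\tau_{\mathcal Z}/|\mathcal Z|$. It then uses nonnegativity of $D(\omega_{ZE}\Vert u_{\mathcal Z}\otimes\omega_E)$ for two states, and only the second-argument scaling of Lemma~\ref{lem:araki_order_scaling}(ii), to shift by $\log|\mathcal Z|$.

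Your argument instead uses three ingredients:
\begin{enumerate}
\item the block additivity $D(\omega_{ZE}\Vert\tau_{\mathcal Z}\otimes\omega_E)=\sum_z D(\omega_E^z\Vert\omega_E)$;
\item the first-argument scaling $D(\lambda\rho\Vert\sigma)=\lambda D(\rho\Vert\sigma)+\lambda\log\lambda$, which is not literally Lemma~\ref{lem:araki_order_scaling}(ii) but follows from it together with joint homogeneity, as the paper itself does in the proof of Theorem~\ref{thm:frenkel_general};
\item nonnegativity applied blockwise, where the support condition comes from $\omega_E^z\le\omega_E$ and blocks with $p(z)=0$ contribute zero.
\end{enumerate}

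What your route buys is the sharper intermediate statement $H(Z|E)_\omega\le-\sum_z p(z)\log p(z)$, i.e.\ conditioning on $E$ cannot exceed the Shannon entropy of the key distribution. The cost is reliance on two extra identities. The paper's route needs only what is already recorded in Lemma~\ref{lem:araki_order_scaling} plus positivity of $D$ between states, so it is shorter and less dependent on the block structure.
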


\begin{proof}
Write $\omega_{ZE}$ in block form as
\begin{align}
    \omega_{ZE}=\sum_{z\in\mathcal Z} p_z\otimes \omega_E^z,
    \qquad
    \omega_E=\sum_{z\in\mathcal Z}\omega_E^z,
\end{align}
with $\omega_E^z\in \mathcal N^+(\mathcal M_E)$. Let
\begin{align}
    x=\sum_{z\in\mathcal Z} p_z\otimes c_z
    \in
    \bigl(\ell^\infty(\mathcal Z)\,\overline\otimes\,\mathcal M_E\bigr)_+
\end{align}
with $c_z\in \mathcal M_E{}_+$. Since $\omega_E^z\le \omega_E$ for every
$z\in\mathcal Z$, we obtain
\begin{align}
    \omega_{ZE}(x)
    =
    \sum_{z\in\mathcal Z}\omega_E^z(c_z)
    \le
    \sum_{z\in\mathcal Z}\omega_E(c_z)
    =
    (\tau_{\mathcal Z}\otimes \omega_E)(x),
\end{align}
which proves \eqref{eq:domination_cq_state}.

By the discussion above,
\begin{align}
    H(Z|E)_\omega
    =
    -D\!\bigl(\omega_{ZE}\,\big\|\,\tau_{\mathcal Z}\otimes \omega_E\bigr).
\end{align}
Using \eqref{eq:domination_cq_state} together with
Lemma~\ref{lem:araki_order_scaling}(i), we get
\begin{align}
    D\!\bigl(\omega_{ZE}\,\big\|\,\tau_{\mathcal Z}\otimes \omega_E\bigr)
    \le
    D\!\bigl(\omega_{ZE}\,\big\|\,\omega_{ZE}\bigr)
    =0,
\end{align}
and hence
\begin{align}
    H(Z|E)_\omega\ge 0.
\end{align}

For the upper bound, let
\begin{align}
    u_{\mathcal Z}\coloneqq \frac{1}{|\mathcal Z|}\tau_{\mathcal Z}.
\end{align}
Then $u_{\mathcal Z}\otimes \omega_E$ is a state, so by nonnegativity of the
Araki relative entropy,
\begin{align}
    D\!\bigl(\omega_{ZE}\,\big\|\,u_{\mathcal Z}\otimes \omega_E\bigr)\ge 0.
\end{align}
Since $\tau_{\mathcal Z}=|\mathcal Z|\,u_{\mathcal Z}$ and $\omega_{ZE}$ is a
state, Lemma~\ref{lem:araki_order_scaling}(ii) yields
\begin{align}
    D\!\bigl(\omega_{ZE}\,\big\|\,\tau_{\mathcal Z}\otimes \omega_E\bigr)
    =
    D\!\bigl(\omega_{ZE}\,\big\|\,u_{\mathcal Z}\otimes \omega_E\bigr)
    -\log |\mathcal Z|.
\end{align}
Therefore
\begin{align}
    D\!\bigl(\omega_{ZE}\,\big\|\,\tau_{\mathcal Z}\otimes \omega_E\bigr)
    \ge -\log |\mathcal Z|,
\end{align}
or equivalently,
\begin{align}
    H(Z|E)_\omega \le \log |\mathcal Z|.
\end{align}
This proves \eqref{eq:conditional_entropy_bounds_cq}.
\end{proof}

\section{Operator layer cake on von Neumann algebras}\label{sec:operator_layer_cake}

In order to solve \eqref{eq:commuting_operator_entropy_program_gns}, we show in the following the main theorem of this work, which is an integral formula for the relative entropy introduced by Frenkel \cite{Frenkel2023} on von Neumann algebras. Given two positive semidefinite matrices $\rho,\sigma \in \mathbb{M}_n(\mathbb{C})$, \cite{Frenkel2023} shows
\begin{align}\label{eq:frenkel_matrices}
    D(\rho \Vert \sigma) = \tr[\rho - \sigma] + \int_{\mathbb{R}} \frac{dt}{\vert t\vert (t-1)^2} \tr^-[(1-t)\rho + t\sigma],
\end{align}
whereby the $\tr^-[\cdot]$ denotes the negative part of an hermitian matrix argument. In subsequent work it has been shown that the formula \eqref{eq:frenkel_matrices} can be seen as a special type of $f$-divergence, which are commonly defined for a twice continuously-differentiable function $f$ with $f(1) = 0$ as
\begin{align}
    D_f(\rho \Vert \sigma)
    \coloneqq \int_1^\infty f^{\prime \prime}(s) \, \tr^+[\rho - s\sigma] \, ds + \int_1^\infty s^{-3} f^{\prime \prime}(s^{-1}) \, \tr^+[\sigma - s\rho] \, ds.
\end{align}
For the specific function $f(s) = s\log s$, a direct integration by parts argument with \eqref{eq:frenkel_matrices} yields
\begin{align}
    D(\rho \Vert \sigma) \equiv D_{s\log s}(\rho\Vert \sigma) = \int_1^\infty \frac{ds}{s} \tr^+[\rho - s\sigma] + \int_1^\infty \frac{ds}{s^2}\tr^+[\sigma - s\rho].
\end{align}
As the goal is to generalize the formula from \eqref{eq:frenkel_matrices}, we introduce the main ingredient for this first, which is the \emph{operator-layer cake} formula, for matrices presented in \cite{liu2025layercakerepresentationsquantum}. The operator layer-cake formula is given by the following expression for the G\^{a}teaux derivative of the logarithm for a matrix $A>0$ and $B=B^\star$
\begin{align}\label{eq:operator_layer_cake_matrices}
\operatorname{D}\log(A)[B]
=
\int_{0}^{\infty} [B-uA>0] \, du
-
\int_{-\infty}^{0} [uA-B>0]\,du,
\end{align}
whereby $[H>0]$ means the unique projection on the positive part of $H$. The matter of the following theorem is to generalize this formula to a separable Hilbert space $\mathcal{H}$.

\begin{theorem}[Operator layer cake]\label{thm:operator_layer_cake}
Let $\mathcal{H}$ be a separable Hilbert space and let $A,B\in\mathcal{B}(\mathcal{H})$ with $A\geq \delta I$ for some $\delta > 0$ and $B=B^\ast$.
Then the following identity holds in strong operator topology:
\begin{align}\label{eq:operator_layer_cake}
\operatorname{D}\log(A)[B]
=
\int_{0}^{\infty} H(B-uA) \, du
-
\int_{-\infty}^{0} \left(I-H(B-uA)\right)du,
\end{align}
where $H(\cdot)$ denotes the Heaviside function defined in \eqref{eq:def_heaviside} below. 
\end{theorem}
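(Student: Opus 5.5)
The plan is to reduce the statement to a scalar identity, via a congruence with $A^{-1/2}$ and a spectral-measure computation, and then transport that identity through a Fréchet-differentiability argument. I take $H(X)$ to be the spectral projection $\mathbf 1_{(0,\infty)}(X)$ of a self-adjoint $X$, the natural reading of \eqref{eq:def_heaviside}.

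\textbf{Step 1: the integrals are well defined.} For $u>\|B\|/\delta$ we have $B-uA\le \|B\|-u\delta<0$, so $H(B-uA)=0$. For $u<-\|B\|/\delta$ we have $B-uA>0$, so $I-H(B-uA)=0$. Both integrands are therefore compactly supported and take values among projections. Weak measurability of $u\mapsto H(B-uA)$ follows from the fact that $u\mapsto\langle\eta,H(B-uA)\eta\rangle$ is a limit of continuous functions $\langle\eta,f_n(B-uA)\eta\rangle$, with $f_n\uparrow \mathbf 1_{(0,\infty)}$ continuous. Hence the right-hand side of \eqref{eq:operator_layer_cake} exists as a weak (Pettis) integral of bounded operators, and it is bounded by $\|B\|/\delta$ in norm. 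I will prove the identity weakly; strong-operator equality is then automatic, since both sides are bounded operators.

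\textbf{Step 2: the commuting (spectral) case.} Suppose first that $A$ and $B$ commute, or more generally that $A=I$. Put $C=A^{-1/2}BA^{-1/2}$, which is self-adjoint. For a real number $c$, $\int_0^\infty \mathbf 1[c>u]\,du-\int_{-\infty}^0 \mathbf 1[c\le u]\,du=c$. By Fubini applied to the spectral measure of $C$, this gives
\begin{align}
C=\int_0^\infty H(C-u)\,du-\int_{-\infty}^0 (I-H(C-u))\,du.
\end{align}
The key observation is that $B-uA=A^{1/2}(C-u)A^{1/2}$ is congruent to $C-u$. Congruence by the invertible $A^{1/2}$ preserves inertia but not the spectral projections, so in general $H(B-uA)\neq A^{1/2}H(C-u)A^{-1/2}$. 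The real content is therefore to pin down where $\operatorname{D}\log(A)[B]$ comes from.

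\textbf{Step 3: reduction to the matrix formula.} I expect this to be the main obstacle. I would avoid a direct infinite-dimensional spectral computation of $\operatorname{D}\log$, which is a double operator integral, and instead reduce to the matrix formula \eqref{eq:operator_layer_cake_matrices} from \cite{liu2025layercakerepresentationsquantum}. Since $\mathcal H$ is separable, I choose finite-rank projections $P_n\uparrow I$ strongly and set $A_n=P_nAP_n+\delta(I-P_n)\ge\delta I$ and $B_n=P_nBP_n$. These are block-diagonal: a matrix on $P_n\mathcal H$ plus trivial data on the complement. On the complement the formula holds trivially, since both sides vanish: $B_n=0$ there and $\operatorname{D}\log(\delta)[0]=0$, while the integrands $H(-u\delta)$ and $I-H(-u\delta)$ cover only a null set. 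Hence \eqref{eq:operator_layer_cake} holds for $(A_n,B_n)$.

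\textbf{Step 4: passing to the limit on each side.} On the left, I use the resolvent formula \eqref{eq:resolvent_derivative} with a fixed contour $\Gamma$ enclosing $[\delta,\|A\|]$. Because $(z-A_n)^{-1}\to(z-A)^{-1}$ strongly and uniformly on $\Gamma$, with uniformly bounded norms, and $B_n\to B$ strongly, dominated convergence gives $\operatorname{D}\log(A_n)[B_n]\to\operatorname{D}\log(A)[B]$ strongly. On the right, $X_n(u)=B_n-uA_n\to X(u)=B-uA$ strongly, with uniform bounds. For any $u$ such that $0$ is not an eigenvalue of $X(u)$, the spectral projection $\mathbf 1_{(0,\infty)}$ is strongly continuous at $X(u)$ (the standard fact that $f(X_n)\to f(X)$ strongly for bounded Borel $f$ continuous off a null set of the spectral measure). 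The set of bad $u$, namely those with $\ker(B-uA)\neq0$, is countable. This holds because the eigenvectors of the pencil $B-uA$ for distinct $u$ are $A$-orthogonal: if $Bv=uAv$ and $Bw=u'Aw$, then $(u-u')\langle w,Av\rangle=0$, and separability allows only countably many mutually $A$-orthogonal nonzero vectors. So for every $\eta$, $\langle\eta,H(X_n(u))\eta\rangle\to\langle\eta,H(X(u))\eta\rangle$ for almost every $u$. The integrands are bounded by $\|\eta\|^2$, and all supports lie in $|u|\le \|B\|/\delta$ uniformly in $n$. Dominated convergence in $u$ then yields weak convergence of the right-hand sides, and combining the two limits proves \eqref{eq:operator_layer_cake}.
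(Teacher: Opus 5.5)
Your proof is correct, but it takes a genuinely different route from the paper.

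\textbf{The paper's route.} The paper never uses the finite-dimensional formula. It smooths $H$ by the arctan functions $f_\varepsilon^\pm$ and rewrites these as differences of principal logarithms of $\varepsilon I\pm i(B-uA)$. Holomorphy in a strip lets it deform the $u$-integral over $[-r,r]$ onto semicircles of radius $r>\|B\|/\delta$. There $B-zA$ is invertible, so $\varepsilon\downarrow0$ can be taken in norm. It then splits off the scalar logarithms and identifies $-\frac{1}{2\pi i}\oint_{|z|=r}\log(A-z^{-1}B)\,dz$ with $\operatorname{D}\log(A)[B]$, by integration by parts and $r\to\infty$.

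\textbf{Your route.} You treat \eqref{eq:operator_layer_cake_matrices} as a black box and run a compression argument:
\begin{enumerate}
\item finite-rank corners $A_n=P_nAP_n+\delta(I-P_n)$ and $B_n=P_nBP_n$;
\item strong convergence of the left side through the resolvent formula;
\item convergence of the right side for every $u$ outside the set where $\ker(B-uA)\neq0$, which your $A$-orthogonality argument shows is countable;
\item dominated convergence on the uniform support $|u|\le\|B\|/\delta$.
\end{enumerate}

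\textbf{Comparison.} Your argument is shorter and uses only standard facts, such as strong resolvent convergence of spectral projections at endpoints that are not eigenvalues. However, it is not self-contained, and it uses separability in an essential way. In a nonseparable space, $\ker(B-uA)\neq0$ can hold for uncountably many $u$: take $A=I$ and $B$ multiplication by $\arctan$ on $\ell^2(\mathbb R)$. The paper's contour argument works directly in infinite dimensions and covers the matrix case along the way. Via $f_\varepsilon^\pm(0)=\tfrac12$, it also produces the convention $H(0)=\tfrac12$ automatically.

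\textbf{Two small points.} First, your Step 2 is never used and can be dropped. Second, you should state explicitly that your $\mathbf 1_{(0,\infty)}$ agrees with the paper's $H$ except on the countable set of $u$ with $\ker(B-uA)\neq0$. The same applies on the negative side, where $I-\mathbf 1_{(0,\infty)}(B-uA)$ must be matched with $[uA-B>0]$. Since these differ only on that countable set, none of the integrals changes.
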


\begin{proof}
We define the Heaviside function as
\begin{equation}\label{eq:def_heaviside}
    \begin{aligned}
        H\!\left(x\right) \coloneqq \begin{cases}
            0 & x<0 \\
            \frac{1}{2} & x = 0\\
            1 & x>0.
        \end{cases}
    \end{aligned}
\end{equation}

Fix $\varepsilon>0$ and define bounded continuous functions
\begin{equation}\label{eq:feps_def}
\begin{aligned}
f_{\varepsilon}^+(x)
&\coloneqq  \frac{1}{2}\left(1+\frac{2}{\pi}\arctan\!\left(\frac{x}{\varepsilon}\right)\right),
\\
f_{\varepsilon}^-(x)
&\coloneqq  \frac{1}{2}\left(1-\frac{2}{\pi}\arctan\!\left(\frac{x}{\varepsilon}\right)\right)
=1-f_\varepsilon^+(x).
\end{aligned}
\end{equation}
Then $0\le f_\varepsilon^\pm\le 1$, and $f_\varepsilon^+(x)\to H\!\left(x\right)$, $f_\varepsilon^-(x)\to 1-H\!\left(x\right)$ pointwise as $\varepsilon\downarrow 0$, including at $x = 0$, where $f_\varepsilon^\pm(0) = \frac{1}{2} = H(0)$.

For $r>0$ set
\begin{align}\label{eq:T_eps_r_def}
T_{\varepsilon,r}
\coloneqq
\int_0^{r} f_\varepsilon^+(B-uA)\,du
-
\int_{-r}^{0} f_\varepsilon^-(B-uA)\,du.
\end{align}
Since $u\mapsto f_\varepsilon^\pm(B-uA)$ is continuous and uniformly bounded by $1$, the integrals exist as Bochner integrals in norm topology.

Moreover, for each fixed $u$ and $\xi \in \mathcal{H}$, the pointwise convergence $f_\varepsilon^\pm \to H, 1-H$ on $\mathbb{R}$ together with dominated convergence with respect to the spectral measure of $B-uA$ yields
\begin{align}
f_\varepsilon^+(B-uA)\xi \xrightarrow[\varepsilon \downarrow 0]{} H(B-uA)\xi,
\qquad
f_\varepsilon^-(B-uA)\xi \xrightarrow[\varepsilon \downarrow 0]{} \left(I - H(B-uA)\right)\xi.
\end{align}
Applying dominated convergence once more, now in the variable $u$ with the integrable bound $\Vert f_\varepsilon^\pm(B-uA)\xi\Vert \leq \Vert \xi\Vert$ on the compact interval $[-r,r]$, we obtain the strong limit
\begin{align}\label{eq:T_eps_to_proj}
T_{\varepsilon,r}\xrightarrow[\varepsilon\downarrow 0]{\rm s.o.t.}
\int_0^{r} H(B-uA)\,du
-
\int_{-r}^{0} \left(1-H(B-uA)\right)du.
\end{align}
Note that $f_\varepsilon^{\pm}$ cannot converge uniformly to $H, 1-H$, since a uniform limit of continuous functions is continuous; the convergence therefore only holds strongly.

Using the identity
\begin{align}
\arctan(t)=\frac{1}{2i}\left(\log(1+it)-\log(1-it)\right),
\end{align}
one checks that for real $x$
\begin{align}\label{eq:arctan_log_identity}
f_\varepsilon^+(x)
=
\frac{1}{2}
+\frac{1}{2\pi i}\left(\log(\varepsilon+ix)-\log(\varepsilon-ix)\right),
\qquad
f_\varepsilon^-(x)=1-f_\varepsilon^+(x).
\end{align}
By continuous functional calculus this lifts to selfadjoint operators, and in particular (using \eqref{eq:def_operator_log})
\begin{align}\label{eq:feps_log_operator}
f_\varepsilon^+(X)
=
\frac{1}{2}I
+\frac{1}{2\pi i}\left(\log(\varepsilon I+iX)-\log(\varepsilon I-iX)\right),
\qquad X=X^\ast.
\end{align}
Insert $X=B-uA$ and use that the $\frac{1}{2} I$ contributions cancel in \eqref{eq:T_eps_r_def}, since both domains of integration have length $r$, to obtain
\begin{align}\label{eq:T_eps_r_as_log_integrals}
T_{\varepsilon,r}
=
\frac{1}{2\pi i}
\left(
\int_{-r}^{r}\log\!\left(\varepsilon I+i(B-uA)\right)\,du
-
\int_{-r}^{r}\log\!\left(\varepsilon I-i(B-uA)\right)\,du
\right).
\end{align}

Consider the operator-valued functions of a complex parameter $z$,
\begin{align}
F_+(z)\coloneqq \log\!\left(\varepsilon I+i(B-zA)\right),
\qquad
F_-(z)\coloneqq \log\!\left(\varepsilon I-i(B-zA)\right).
\end{align}
We claim that $F_+$ is holomorphic on the strip
\begin{align}
\Sigma_+\coloneqq \left\{z\in\mathbb{C}:\ \Im z>-\frac{\varepsilon}{2\Vert A\Vert}\right\},
\end{align}
and $F_-$ is holomorphic on
\begin{align}
\Sigma_-\coloneqq \left\{z\in\mathbb{C}:\ \Im z<\frac{\varepsilon}{2\Vert A\Vert}\right\}.
\end{align}
Indeed, for $z=u+iv$ we have
\begin{align}
\varepsilon I+i(B-zA)=\varepsilon I+i(B-uA)+vA,
\end{align}
so
\begin{align}
\Re\!\left(\varepsilon I+i(B-zA)\right)=\varepsilon I+vA.
\end{align}
Since $A\ge 0$, we distinguish two cases. If $v\ge 0$, then $vA\ge 0$, hence
\begin{align}
\Re\!\left(\varepsilon I+i(B-zA)\right)\ge \varepsilon I\ge \frac{\varepsilon}{2}I.
\end{align}
If $-\varepsilon/(2\Vert A\Vert)<v<0$, then $A\le \Vert A\Vert I$ and therefore, because $v<0$,
\begin{align}
vA\ge v\Vert A\Vert I,
\end{align}
so
\begin{align}
\Re\!\left(\varepsilon I+i(B-zA)\right)
=\varepsilon I+vA
\ge (\varepsilon+v\Vert A\Vert)I
> \frac{\varepsilon}{2}I.
\end{align}
Hence for all $z$ with $\Im z=v>-\varepsilon/(2\Vert A\Vert)$ we have
\begin{align}
\Re\!\left(\varepsilon I+i(B-zA)\right)\ge \frac{\varepsilon}{2}I.
\end{align}
In particular,
\begin{align}
\sigma\!\left(\varepsilon I+i(B-zA)\right)\subset \{w\in\mathbb{C}:\Re w\ge \varepsilon/2\}\subset \mathbb{C}\setminus(-\infty,0],
\end{align}
so the principal-branch logarithm is well-defined and $F_+$ is holomorphic on $\Sigma_+$ by the holomorphic functional calculus.

Similarly,
\begin{align}
\varepsilon I-i(B-zA)=\varepsilon I-i(B-uA)-vA,
\end{align}
hence
\begin{align}
\Re\!\left(\varepsilon I-i(B-zA)\right)=\varepsilon I-vA.
\end{align}
If $v\le 0$, then $-vA\ge 0$ and thus
\begin{align}
\Re\!\left(\varepsilon I-i(B-zA)\right)\ge \varepsilon I\ge \frac{\varepsilon}{2}I.
\end{align}
If $0<v<\varepsilon/(2\Vert A\Vert)$, then $A\le \Vert A\Vert I$ implies
\begin{align}
-vA\ge -v\Vert A\Vert I,
\end{align}
and therefore
\begin{align}
\Re\!\left(\varepsilon I-i(B-zA)\right)
=\varepsilon I-vA
\ge (\varepsilon-v\Vert A\Vert)I
> \frac{\varepsilon}{2}I.
\end{align}
Hence for all $z$ with $\Im z=v<\varepsilon/(2\Vert A\Vert)$ we have
\begin{align}
\Re\!\left(\varepsilon I-i(B-zA)\right)\ge \frac{\varepsilon}{2}I,
\end{align}
which implies
\begin{align}
\sigma\!\left(\varepsilon I-i(B-zA)\right)\subset \{w\in\mathbb{C}:\Re w\ge \varepsilon/2\}\subset \mathbb{C}\setminus(-\infty,0].
\end{align}
Therefore the principal-branch logarithm is well-defined and $F_-$ is holomorphic on $\Sigma_-$.

Let $C_{r,+}\coloneqq \{re^{i\theta}:0\le \theta\le \pi\} \subseteq \Sigma_+$ and $C_{r,-}\coloneqq \{re^{i\theta}:-\pi\le \theta\le 0\} \subseteq \Sigma_-$.
By holomorphy and Cauchy's theorem applied to the closed contours formed by $[-r,r]\cup C_{r,+}$ inside $\Sigma_+$ and $[r,-r]\cup C_{r,-}$ inside $\Sigma_-$, we obtain
\begin{align}\label{eq:cauchy_deform}
\int_{-r}^{r}F_+(u)\,du
=
-\int_{C_{r,+}}F_+(z)\,dz,
\qquad
-\int_{-r}^{r}F_-(u)\,du
=
-\int_{C_{r,-}}F_-(z)\,dz.
\end{align}
Substituting into \eqref{eq:T_eps_r_as_log_integrals} yields
\begin{align}\label{eq:T_eps_r_arcs}
T_{\varepsilon,r}
=
\frac{1}{2\pi i}
\left(
-\int_{C_{r,+}}F_+(z)\,dz
-\int_{C_{r,-}}F_-(z)\,dz
\right).
\end{align}

Now invoke $A\ge \delta I$, and choose $r> \Vert B\Vert/\delta$ (fixed). Then $A$ is invertible and for every $z$ with $|z|=r$ we can write
\begin{align}
B-zA
=
-zA\left(I-z^{-1}A^{-1}B\right).
\end{align}
Since
\begin{align}
\left\Vert z^{-1}A^{-1}B\right\Vert
\le \frac{1}{r}\,\Vert A^{-1}\Vert\,\Vert B\Vert
\le \frac{\Vert B\Vert}{r\delta}
<1,
\end{align}
the operator $I-z^{-1}A^{-1}B$ is invertible by the Neumann series argument, hence $B-zA$ is invertible for all $|z|=r$. In particular,
\begin{align}
i(B-zA)\ \text{and}\ -i(B-zA)
\end{align}
are invertible on $C_{r,+}$ and $C_{r,-}$, respectively.

Moreover, if $z=u+iv\in C_{r,+}$, then $v\ge 0$ and
\begin{align}
\Re\!\left(i(B-zA)\right)=vA\ge 0.
\end{align}
Hence
\begin{align}
\sigma\!\left(i(B-zA)\right)\subset \{w\in\mathbb{C}:\Re w\ge 0\}.
\end{align}
Since $i(B-zA)$ is invertible and $z\mapsto (B-zA)^{-1}$ is continuous on the compact arc $C_{r,+}$, there exists $\eta_+>0$ such that
\begin{align}
\sigma\!\left(i(B-zA)\right)\subset \{w\in\mathbb{C}:\Re w\ge 0,\ |w|\ge \eta_+\}
\qquad(z\in C_{r,+}).
\end{align}
Similarly, for $z=u+iv\in C_{r,-}$ we have $v\le 0$ and
\begin{align}
\Re\!\left(-i(B-zA)\right)=-vA\ge 0,
\end{align}
and there exists $\eta_->0$ such that
\begin{align}
\sigma\!\left(-i(B-zA)\right)\subset \{w\in\mathbb{C}:\Re w\ge 0,\ |w|\ge \eta_-\}
\qquad(z\in C_{r,-}).
\end{align}

Set $\eta\coloneqq \min\{\eta_+,\eta_-\}>0$. For $w \in \mathbb{C}$ with $\Re w \geq 0$ and $\varepsilon \geq 0$ we have $|w+\varepsilon|^2 = |w|^2 + 2\varepsilon\Re w + \varepsilon^2 \geq |w|^2$, so the shift by $\varepsilon$ does not decrease the distance to the branch cut. Hence for all $\varepsilon\ge 0$,
\begin{align}
\sigma\!\left(\varepsilon I+i(B-zA)\right)
\subset \{w\in\mathbb{C}:\Re w\ge 0,\ |w|\ge \eta\}
\subset \mathbb{C}\setminus(-\infty,0]
\qquad(z\in C_{r,+}),
\end{align}
and
\begin{align}
\sigma\!\left(\varepsilon I-i(B-zA)\right)
\subset \{w\in\mathbb{C}:\Re w\ge 0,\ |w|\ge \eta\}
\subset \mathbb{C}\setminus(-\infty,0]
\qquad(z\in C_{r,-}).
\end{align}
Therefore the principal logarithm is defined also at $\varepsilon=0$, and by continuity of the holomorphic functional calculus uniformly on the compact arcs we have
\begin{align}
F_+(z)\longrightarrow \log\!\left(i(B-zA)\right)
\quad \text{for} \quad \varepsilon\to 0 \quad \text{uniformly in operator norm on }C_{r,+},
\end{align}
and
\begin{align}
F_-(z)\longrightarrow \log\!\left(-i(B-zA)\right) \quad \text{for} \quad \varepsilon\to 0 \quad
\text{uniformly in operator norm on }C_{r,-}.
\end{align}
Hence we may pass to the limit $\varepsilon\rightarrow 0$ in \eqref{eq:T_eps_r_arcs}, in operator norm, obtaining
\begin{align}\label{eq:T_0_r_arcs_1}
\lim_{\varepsilon\downarrow 0}T_{\varepsilon,r}
=
\frac{1}{2\pi i}
\left(
-\int_{C_{r,+}}\log\!\left(i(B-zA)\right)\,dz
-\int_{C_{r,-}}\log\!\left(-i(B-zA)\right)\,dz
\right).
\end{align}

We next simplify the right-hand side of \eqref{eq:T_0_r_arcs_1}.
For $|z|=r$, write $z=re^{i\theta}$. Since the choice of $r$ implies
\begin{align}
rI>A^{-1/2}BA^{-1/2}>-rI,
\end{align}
we have
\begin{align}
\Re(A-z^{-1}B)
&=
A-r^{-1}\cos(\theta)B \notag\\
&=
A^{1/2}\left(I-r^{-1}\cos(\theta)A^{-1/2}BA^{-1/2}\right)A^{1/2}>0 .
\end{align}
Hence, the spectrum of $A-z^{-1}B$ lies in the open right half-plane. Moreover, on $C_{r,+}$, the scalar $-zi$ lies in the closed right half-plane, while on $C_{r,-}$, the scalar $zi$ lies in the closed right half-plane. Therefore, by the scalar principal-logarithm identity and holomorphic functional calculus,
\begin{align}
\log\!\left(i(B-zA)\right)
&=
\log(-zi)I+\log(A-z^{-1}B),
\qquad z\in C_{r,+},
\label{eq:upper_arc_log_decomp}
\\
\log\!\left(-i(B-zA)\right)
&=
\log(zi)I+\log(A-z^{-1}B),
\qquad z\in C_{r,-}.
\label{eq:lower_arc_log_decomp}
\end{align}

Using \eqref{eq:upper_arc_log_decomp} and \eqref{eq:lower_arc_log_decomp}, we obtain
\begin{align}
&\int_{C_{r,+}}\log\!\left(i(B-zA)\right)\,dz
+
\int_{C_{r,-}}\log\!\left(-i(B-zA)\right)\,dz
\notag\\
&=
\int_{C_{r,+}}\log(A-z^{-1}B)\,dz
+
\int_{C_{r,-}}\log(A-z^{-1}B)\,dz
+
\int_{C_{r,+}}\log(-zi)I\,dz
+
\int_{C_{r,-}}\log(zi)I\,dz .
\end{align}
The scalar logarithmic terms cancel by the substitution $w=-z$:
\begin{align}
\int_{C_{r,+}}\log(-zi)I\,dz
+
\int_{C_{r,-}}\log(zi)I\,dz=\int_{C_{r,+}}\log(-zi)I\,dz
-\int_{C_{r,+}}\log(-wi)I\,dw=0.
\end{align}
Consequently,
\begin{align}
\int_{C_{r,+}}\log\!\left(i(B-zA)\right)\,dz
+
\int_{C_{r,-}}\log\!\left(-i(B-zA)\right)\,dz
=
\oint_{|z|=r}\log(A-z^{-1}B)\,dz .
\label{eq:arc_log_terms_combine}
\end{align}
Combining this with \eqref{eq:T_0_r_arcs_1}, we get
\begin{align}
\lim_{\varepsilon\downarrow0}T_{\varepsilon,r}
=
-\frac{1}{2\pi i}
\oint_{|z|=r}\log(A-z^{-1}B)\,dz .
\label{eq:T_0_r_full_circle}
\end{align}

It remains to evaluate the full-circle integral. Since
\begin{align}
\frac{d}{dz}\log(A-z^{-1}B)
=
\operatorname{D}\log(A-z^{-1}B)[B]\,z^{-2},
\end{align}
integration by parts gives
\begin{align}
\oint_{|z|=r}\log(A-z^{-1}B)\,dz
&=
-\oint_{|z|=r}
z^{-1}\operatorname{D}\log(A-z^{-1}B)[B]\,dz
\notag\\
&=
-\int_0^{2\pi}
\operatorname{D}\log\!\left(A-(re^{i\theta})^{-1}B\right)[B]\cdot i\,d\theta .
\label{eq:full_circle_ibp}
\end{align}
Therefore,
\begin{align}
\lim_{\varepsilon\downarrow0}T_{\varepsilon,r}
=
\frac{1}{2\pi i}
\int_0^{2\pi}
\operatorname{D}\log\!\left(A-(re^{i\theta})^{-1}B\right)[B]\cdot i\,d\theta .
\label{eq:T_0_r_derivative_average}
\end{align}

So far, we have
\begin{align}
\int_0^{r} H(B-uA)\,du
-
\int_{-r}^{0} \left(1-H(B-uA)\right)du=\frac{1}{2\pi i}\int_0^{2\pi} \operatorname{D}\log\!\left(A-(re^{i\theta})^{-1} B\right)[B] \cdot i\,d\theta.
\end{align}
Since $H(B-uA)$ and $1-H(B-uA)$ vanish for $u> r>\Vert B\Vert/\delta$ and $u<- r<-\Vert B\Vert/\delta$, respectively, the integrals on both sides are independent of $r>\Vert B\Vert/\delta$. We can take $r \rightarrow \infty$ and recall that the Fr\'echet derivative $\operatorname{D}\log(\,\cdot\,)[B]$ is continuous to arrive at
\begin{align}
\int_0^{\infty} H(B-uA)\,du
-
\int_{-\infty}^{0} \left(1-H(B-uA)\right)du
    =\frac{1}{2\pi i}\int_0^{2\pi} \operatorname{D}\log(A)[B] \cdot i\,d\theta
    =\operatorname{D}\log(A)[B]
\end{align}
as desired.
\end{proof}
Our first minor result is the layer cake representation of the relative entropy for strictly positive densities in a von Neumann algebra. This is to be compared with \cite[Prop.~4.1]{liu2025layercakerepresentationsquantum}, which shows the result for matrices.  

\begin{proposition}[Layer cake for $D(\rho\Vert\sigma)$ in the finite tracial case]\label{prop:layer_cake_relative_entropy_finite}
Let $(\mathcal{M},\tau)$ be a finite von Neumann algebra and let $\rho,\sigma\in\mathcal{M}_+$ be strictly positive, not necessarily normalized.
Then
\begin{align}\label{eq:relative_entropy_layer_cake}
D(\rho\Vert\sigma)
=
\int_{1}^{\infty}\frac{1}{\gamma}\,
\tau\left(\rho\left(H(\rho-\gamma\sigma)-H(\sigma-\gamma\rho)\right)\right)\,d\gamma,
\end{align}
where $H(X)$ denotes the Heaviside function applied to the selfadjoint operator $X$ via the Borel functional calculus.
\end{proposition}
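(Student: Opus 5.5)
Since $\rho,\sigma$ are strictly positive elements of the finite von Neumann algebra, I take this to mean $\rho,\sigma\ge\delta 1$ for some $\delta>0$. In particular $\log\rho$ and $\log\sigma$ are bounded, and by \eqref{eq:def_relative_entropy_trace}
\begin{align*}
D(\rho\Vert\sigma)=\tau\bigl(\rho(\log\rho-\log\sigma)\bigr).
\end{align*}
The plan is to write $\log\rho-\log\sigma$ as an integral of Gâteaux derivatives along the segment $A_t\coloneqq\sigma+t(\rho-\sigma)$, $t\in[0,1]$. Each $A_t\ge\delta 1$, so Theorem~\ref{thm:operator_layer_cake} applies at every $t$. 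We will apply it on the standard (separable, after restricting to a countably generated subalgebra containing $\rho,\sigma$ if needed) Hilbert space of $\mathcal{M}$. Note that $H(B-uA)\in\mathcal{M}$ by the Borel functional calculus, so all integrands stay in $\mathcal{M}$. By the fundamental theorem of calculus in norm,
\begin{align*}
\log\rho-\log\sigma=\int_0^1 \operatorname{D}\log(A_t)[\rho-\sigma]\,dt .
\end{align*}

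Next we insert the layer cake formula with $A=A_t$ and $B=\rho-\sigma$. The key algebraic observation is
\begin{align*}
B-uA_t=(1-ut)\rho-(1+u(1-t))\sigma ,
\end{align*}
which is a positive multiple of $\rho-\gamma\sigma$ or of $\gamma\rho-\sigma$, depending on the sign of $1-ut$. Because $\tau$ and the pairing with $\rho$ are normal, we may pull $\tau(\rho\,\cdot\,)$ through both the $u$- and $t$-integrals. The integrands are bounded projections-valued (modulo the $\tfrac12$ on kernels), the $u$-integral has compact effective range $|u|\le\|B\|/\delta$, and $\rho\in L^1(\tau)$. Fubini then gives
\begin{align*}
D(\rho\Vert\sigma)=\int_0^1\!\!\int_{\mathbb R}\tau\bigl(\rho\,K(t,u)\bigr)\,du\,dt,
\end{align*}
where $K$ is $H(B-uA_t)$ for $u>0$ and $-(1-H(B-uA_t))$ for $u<0$.

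We then change variables $(t,u)\mapsto\gamma$ on each region. For $u>0$ with $ut<1$, set $\gamma=(1+u(1-t))/(1-ut)\ge 1$, which gives $H(\rho-\gamma\sigma)$. For $u>0$ with $ut>1$, the sign flips and we get $H(\sigma-\gamma'\rho)$-type terms that vanish or contribute the negative part. For $u<0$, $1-H(B-uA_t)=H(-(B-uA_t))$ (up to the $\tfrac12$ on the kernel, which is consistent since $1-H(0)=H(0)$), and this becomes $H(\sigma-\gamma\rho)$ with $\gamma=(1-ut)/(1+u(1-t))$. Integrating out the remaining variable along each level curve of $\gamma$ should produce exactly the Jacobian weight $1/\gamma$. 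This is the scalar computation behind \cite[Prop.~4.1]{liu2025layercakerepresentationsquantum}, and since it only involves scalar functions multiplying projections that are fixed along the level set, it transfers verbatim. Concretely, I would check it by fixing $\gamma$ and computing $\int_0^1$ of $|\partial u/\partial\gamma|$ along $u(t,\gamma)$, which should equal $1/\gamma$.

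I expect the main obstacle to be this change of variables. The bookkeeping of the four sign regions must be done carefully, including the regions where $1-ut<0$ or $1+u(1-t)<0$, and one has to verify that these regions cancel or vanish rather than contribute extra terms. As a fallback, I would avoid the two-variable substitution and reduce to the matrix statement via a spectral/approximation argument. The difficulty is that $\rho$ and $\sigma$ need not commute, so the cleanest route remains the scalar identity above, justified pointwise in $(t,u)$.
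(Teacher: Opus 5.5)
Your proposal follows the paper's proof essentially step by step. The paper also uses the fundamental theorem of calculus along $A_t=(1-t)\sigma+t\rho$, applies the operator layer cake with $B=\rho-\sigma$, pulls $\tau(\rho\,\cdot\,)$ through the integrals, and for each fixed $t$ substitutes $u\mapsto\gamma$ before integrating out $t$.

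The bookkeeping you flag as the main obstacle is harmless, and you should state it definitively rather than hedge. For $u>0$ and $ut\ge 1$ one has $B-uA_t\le-(1+u(1-t))\sigma<0$, so the Heaviside term is $0$; it does not contribute a negative part. Symmetrically, $1-H(B-uA_t)=0$ for $u\le-1/(1-t)$, since there $B-uA_t\ge(1-ut)\rho>0$. The Jacobian is $\partial u/\partial\gamma=(1+t(\gamma-1))^{-2}$, and $\int_0^1(1+t(\gamma-1))^{-2}\,dt=1/\gamma$; the $u<0$ part is the same computation with $t\mapsto 1-t$.
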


\begin{proof}
Set $A_t\coloneqq (1-t)\sigma+t\rho$ for $t\in[0,1]$. Then $A_t$ is strictly positive for all $t$, since $\rho$ and $\sigma$ are; note that this and all subsequent steps use only strict positivity and at no point the normalization of $\rho$ or $\sigma$.
By the fundamental theorem of calculus and \eqref{eq:def_relative_entropy_trace},
\begin{align}\label{eq:FTC_relative_entropy}
D(\rho\Vert\sigma)
=
\tau(\rho\log\rho)-\tau(\rho\log\sigma)
=
\int_0^1 \frac{d}{dt}\tau(\rho\log A_t)\,dt
=
\int_0^1 \tau\!\left(\rho\,\operatorname{D}\log(A_t)[\rho-\sigma]\right)\,dt.
\end{align}
Apply Theorem~\ref{thm:operator_layer_cake} inside $\mathcal{M}$ with $A=A_t$ and $B=\rho-\sigma$:
\begin{align}\label{eq:apply_layer_cake_to_At}
\operatorname{D}\log(A_t)[\rho-\sigma]
=
\int_{0}^{\infty} H(\rho-\sigma-uA_t)\,du
-
\int_{-\infty}^{0} \bigl(1-H(\rho-\sigma-uA_t)\bigr)\,du.
\end{align}
Insert into \eqref{eq:FTC_relative_entropy} and use normality of $\tau$ to exchange $\tau$ with the (bounded, finite-range) strong integrals:
\begin{align}\label{eq:double_integral_u_t}
D(\rho\Vert\sigma)
&=
\int_0^1\!\int_0^\infty \tau\left(\rho\,H(\rho-\sigma-uA_t)\right)\,du\,dt
\notag\\
&\quad-
\int_0^1\!\int_{-\infty}^0 \tau\left(\rho\,(1-H(\rho-\sigma-uA_t))\right)\,du\,dt.
\end{align}

We now identify the Heaviside terms. For the first term,
\[
\rho-\sigma-uA_t
=
(\rho-\sigma)-u((1-t)\sigma+t\rho)
=
(1-ut)\rho-(1+u(1-t))\sigma.
\]
For fixed $t\in(0,1]$, the coefficient $1-ut$ is positive precisely when $0\le u<1/t$.
On this interval we may rewrite
\[
(1-ut)\rho-(1+u(1-t))\sigma
=
(1-ut)\bigl(\rho-\gamma\sigma\bigr),
\qquad
\gamma=\gamma(u,t)\coloneqq \frac{1+u(1-t)}{1-ut},
\]
and since $1-ut>0$ we have
\[
H\bigl((1-ut)\rho-(1+u(1-t))\sigma\bigr)
=
H(\rho-\gamma\sigma).
\]
For $u\ge 1/t$ the operator $(1-ut)\rho-(1+u(1-t))\sigma$ is strictly negative (because $\rho,\sigma$ are strictly positive), hence its Heaviside value is $0$.
Moreover, for fixed $t\in(0,1)$, the map $u\mapsto\gamma(u,t)$ is a $C^1$ bijection from $[0,1/t)$ onto $[1,\infty)$.
A direct computation gives the Jacobian
\begin{align}\label{eq:jacobian_du_dgamma}
u=\frac{\gamma-1}{\gamma t+(1-t)},
\qquad
\frac{\partial u}{\partial\gamma}
=\frac{1}{(\gamma t+(1-t))^2}.
\end{align}
Hence
\begin{align}\label{eq:first_term_change_of_variables}
\int_0^1\!\int_0^\infty \tau\left(\rho\,H(\rho-\sigma-uA_t)\right)\,du\,dt
=
\int_0^1\!\int_1^\infty \tau\left(\rho\,H(\rho-\gamma\sigma)\right)\,
\frac{1}{(\gamma t+(1-t))^2}\,d\gamma\,dt.
\end{align}
Now integrate in $t$ explicitly:
\[
\int_0^1 \frac{dt}{(\gamma t+(1-t))^2}
=
\int_0^1 \frac{dt}{(1+t(\gamma-1))^2}
=
\left[\,-\frac{1}{(\gamma-1)(1+t(\gamma-1))}\,\right]_{t=0}^{t=1}
=
\frac{1}{\gamma}.
\]
Therefore
\begin{align}\label{eq:first_term_becomes_gamma_integral}
\int_0^1\!\int_0^\infty \tau\left(\rho\,H(\rho-\sigma-uA_t)\right)\,du\,dt
=
\int_1^\infty \frac{1}{\gamma}\,\tau\left(\rho\,H(\rho-\gamma\sigma)\right)\,d\gamma.
\end{align}

For the second term in \eqref{eq:double_integral_u_t}, use $1-H(X)=H(-X)$ and then substitute $u=-v$ with $v\ge 0$:
\begin{align}
1-H(\rho-\sigma-uA_t)=H\bigl(-(\rho-\sigma-uA_t)\bigr),
\end{align}
so
\[
1-H(\rho-\sigma-uA_t)\Big|_{u=-v}
=
H(\sigma-\rho-vA_t).
\]
Repeating the same change-of-variables computation as above (with $\rho,\sigma$ interchanged) yields
\begin{align}\label{eq:second_term_becomes_gamma_integral}
\int_0^1\!\int_{-\infty}^0 \tau\left(\rho\,(1-H(\rho-\sigma-uA_t))\right)\,du\,dt
=
\int_1^\infty \frac{1}{\gamma}\,\tau\left(\rho\,H(\sigma-\gamma\rho)\right)\,d\gamma.
\end{align}
Subtracting \eqref{eq:second_term_becomes_gamma_integral} from \eqref{eq:first_term_becomes_gamma_integral} gives \eqref{eq:relative_entropy_layer_cake}.
\end{proof}

\subsection{The Frenkel--Jen\v{c}ov\'a integral formula}

In this subsection we derive from \autoref{prop:layer_cake_relative_entropy_finite} an integral representation of the Umegaki relative entropy for arbitrary positive normal functionals on arbitrary von Neumann algebras. This will be an extension of the integral formula by Frenkel \cite{Frenkel2023}, which was previously just known on matrix algebras and limits of matrix algebras.

We begin by recalling the Jordan decomposition for normal functionals on a von Neumann algebra. Let $\mathcal{M}$ be a von Neumann algebra and let $\phi\in\mathcal{M}^\star_{\mathrm{sa}}$ be a self-adjoint normal functional. Then $\phi$ admits a unique decomposition
\begin{align}
    \phi = \phi_+ - \phi_-,
\end{align}
where $\phi_+,\phi_-\in\mathcal{M}_+^\star$ are positive normal functionals with mutually orthogonal support projections (cf.~\cite[III.4.2]{Takesaki1979}). In particular,
\begin{align}\label{eq:positive_part_as_sup}
    \phi_+(1) = \sup_{p\in\mathcal{P}(\mathcal{M})}\phi(p) = \sup_{x\in\mathcal{M},\ 0\leq x\leq 1}\phi(x),
\end{align}
where $\mathcal{P}(\mathcal{M})$ denotes the lattice of projections in $\mathcal{M}$. For brevity we write
\begin{align}\label{eq:jordan_shorthand}
    \left(s\sigma - \rho\right)_+(1) \coloneqq \sup_{p\in\mathcal{P}(\mathcal{M})}\left(s\,\sigma(p)-\rho(p)\right)
\end{align}
for $\rho,\sigma\in\mathcal{M}_+^\star$ and $s\geq 0$. If $(\mathcal{M},\tau)$ carries a faithful normal trace, we identify $\rho\in\mathcal{M}_+$ with the functional $\tau\!\left(\rho\,\cdot\right)\in\mathcal{M}_+^\star$; in this identification \eqref{eq:positive_part_as_sup} reads
\begin{align}\label{eq:variational_positive_part_tracial}
    \tau\!\left(X_+\right) = \sup_{p\in\mathcal{P}(\mathcal{M})}\tau\!\left(pX\right), \qquad X = X^\star \in \mathcal{M},
\end{align}
with the supremum attained at the support projection $\mathds{1}_{(0,\infty)}(X)$ of $X_+$. We further remark that if $A,B\in\mathcal{M}$ satisfy the assumptions of \autoref{thm:operator_layer_cake}, then all spectral projections of $B-uA$ belong to $\mathcal{M}$ by Borel functional calculus, and the strong integrals in \eqref{eq:operator_layer_cake} converge to an element of $\mathcal{M}$. In the following lemma we show that the value of the Heavyside function at $0$ does not affect the integrals.

\begin{lemma}\label{lem:vanishing_terms}
    Let $(\mathcal{M},\tau)$ be a von Neumann algebra with faithful normal semifinite trace. Let $A,B \in \mathcal{M}$ with $A\geq \delta \mathds{1}$, $\delta >0$, and $B = B^\star$, and let $\rho \in L^1\!\left(\mathcal{M},\tau\right)_+$. Then the set
    \begin{align}
        \left\{u\in \mathbb{R} \ \middle\vert \ \tau\!\left(\rho\,\mathds{1}_{\{0\}}\!\left(B-uA\right)\right)>0\right\}
    \end{align}
    is countable, hence a Lebesgue null set. In particular, defining $H_{\theta}(T) \coloneqq \mathds{1}_{(0,\infty)}(T) + \theta\,\mathds{1}_{\{0\}}(T)$ and
    \begin{align}
        \Phi_\theta \coloneqq \int_0^\infty H_\theta\!\left(B-uA\right) du - \int_{-\infty}^0 \left(1-H_\theta\!\left(B-uA\right)\right) du, \quad \theta \in [0,1],
    \end{align}
    the value $\tau\!\left(\rho \, \Phi_\theta\right)$ is independent of $\theta$.
\end{lemma}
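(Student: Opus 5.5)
The idea is to show that the kernels $K_u\coloneqq \ker(B-uA)$, i.e.\ the ranges of the projections $P_u\coloneqq \mathds{1}_{\{0\}}(B-uA)\in\mathcal{M}$, are mutually orthogonal for distinct $u$ in a suitable weighted sense. The family $\{\tau(\rho P_u)\}_u$ is then summable, and only countably many terms can be nonzero.

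First, we reduce to an orthogonal family of projections. Since $A\geq\delta\mathds{1}$ is invertible, $(B-uA)\eta=0$ if and only if $A^{-1/2}BA^{-1/2}(A^{1/2}\eta)=u\,A^{1/2}\eta$. Set $C\coloneqq A^{-1/2}BA^{-1/2}=C^\star\in\mathcal{M}$ and let $Q_u\coloneqq\mathds{1}_{\{u\}}(C)$, which are pairwise orthogonal spectral projections in $\mathcal{M}$. Then $\operatorname{ran}P_u=A^{-1/2}\operatorname{ran}Q_u$. In particular $\operatorname{ran}P_u\neq 0$ only for $u$ in the point spectrum of $C$.

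Second, we bound $\tau(\rho P_u)$ against something summable. I would avoid the oblique geometry directly. Let $R_u$ denote the range projection of $A^{-1/2}Q_u$; then $R_u=P_u$. Using the polar decomposition $A^{-1/2}Q_u=V_u|A^{-1/2}Q_u|$ with $V_uV_u^\star=P_u$, we get
\begin{align}
P_u\leq \Vert A^{1/2}\Vert^2\,A^{-1/2}Q_uA^{-1/2},
\end{align}
since $A^{-1/2}Q_uA^{-1/2}\geq \Vert A\Vert^{-1}P_u$ on $\operatorname{ran}P_u$. This follows because $\Vert A^{1/2}\eta\Vert^2\leq\Vert A\Vert\,\Vert\eta\Vert^2$ for $\eta=A^{-1/2}\zeta$ with $\zeta\in\operatorname{ran}Q_u$. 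Hence
\begin{align}
\tau(\rho P_u)\leq\Vert A\Vert\,\tau\!\left(\rho A^{-1/2}Q_uA^{-1/2}\right)=\Vert A\Vert\,\tau(\tilde\rho\,Q_u),\qquad \tilde\rho\coloneqq A^{-1/2}\rho A^{-1/2}\in L^1(\mathcal{M},\tau)_+ .
\end{align}
Since the $Q_u$ are orthogonal, for any finite set $F$ we have $\sum_{u\in F}\tau(\tilde\rho Q_u)=\tau(\tilde\rho\,\sum_{F}Q_u)\leq\tau(\tilde\rho)<\infty$. So $\{u:\tau(\rho P_u)>0\}$ is a countable union over $n$ of the finite sets $\{u:\tau(\tilde\rho Q_u)>1/n\}$, hence countable and Lebesgue-null. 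I expect the operator inequality comparing $P_u$ with $A^{-1/2}Q_uA^{-1/2}$ to be the only delicate point. A simpler fallback is to note that $\tau(\rho P_u)>0$ forces $P_u\neq0$, hence $Q_u\neq0$, and in a von Neumann algebra with faithful normal semifinite trace the semifiniteness alone does not make the set of $u$ with $Q_u\neq 0$ countable. The weighted bound above is therefore the route I would actually take.

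Third, we deduce the $\theta$-independence. We have $\Phi_\theta-\Phi_{\theta'}=(\theta-\theta')\int_{\mathbb{R}}P_u\,du$ as a strong integral. The integrand vanishes for $|u|>\Vert B\Vert/\delta$, since then $B-uA$ is invertible. Pairing with $\rho$ and exchanging $\tau(\rho\,\cdot)$ with the strong integral over a compact interval (by normality, as in the proof of \autoref{prop:layer_cake_relative_entropy_finite}) gives $\int\tau(\rho P_u)\,du=0$, because the integrand is supported on a null set. Hence $\tau(\rho\Phi_\theta)$ does not depend on $\theta$.
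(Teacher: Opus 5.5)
Your proof is correct. It shares the paper's skeleton: pass to $C=A^{-1/2}BA^{-1/2}$, use $\operatorname{ran}P_u=A^{-1/2}\operatorname{ran}Q_u$, and control $\tau(\rho P_u)$ through weighted eigenprojections of $C$. The key comparison step, however, is different.

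The paper works in $L^2(\mathcal{M},\tau)$ with $\xi=\rho^{1/2}$ and $\zeta=A^{-1/2}\xi$, and proves only the qualitative implication $P_u\xi\neq0\Rightarrow Q_u\zeta\neq0$. The reason is that for $\eta=P_u\xi$ one has $A^{1/2}\eta\in\operatorname{ran}Q_u$ and $\langle A^{1/2}\eta,\zeta\rangle=\langle\eta,\xi\rangle=\Vert P_u\xi\Vert_2^2>0$. Countability then follows from the atoms of the finite measure $\Omega\mapsto\langle\zeta,\mathds{1}_{\Omega}(C)\zeta\rangle$. Since $\Vert Q_u\zeta\Vert_2^2=\tau(\tilde\rho\,Q_u)$, this is exactly the family you sum, so both arguments end at the same finite measure.

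Your operator inequality $P_u\le\Vert A\Vert\,A^{-1/2}Q_uA^{-1/2}$ buys more: it gives $\sum_u\tau(\rho P_u)\le\Vert A\Vert\,\tau(\rho A^{-1})\le(\Vert A\Vert/\delta)\,\tau(\rho)$, i.e.\ summability rather than mere countability. The price is a polar-decomposition step that the paper's one-line non-orthogonality argument avoids.

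That step is right but should be written out explicitly. Put $X=A^{-1/2}Q_u=V|X|$. Then $V^\star V=Q_u$, $VV^\star=P_u$, and $X^\star X=Q_uA^{-1}Q_u\ge\Vert A\Vert^{-1}Q_u$. Hence $XX^\star=V(X^\star X)V^\star\ge\Vert A\Vert^{-1}VQ_uV^\star=\Vert A\Vert^{-1}P_u$. As you phrased it (``on $\operatorname{ran}P_u$, because $\Vert A^{1/2}\eta\Vert^2\le\Vert A\Vert\Vert\eta\Vert^2$''), your estimate bounds $X$ from below on $\operatorname{ran}Q_u$. It does not directly bound the quadratic form $\eta\mapsto\Vert Q_uA^{-1/2}\eta\Vert^2$ of $XX^\star$; the transfer goes through $V$.

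Your remark that mere nonvanishing of $Q_u$ cannot give countability without $\sigma$-finiteness is correct. It is precisely why a weight is needed: $\tilde\rho$ in your argument, $\zeta$ in the paper's.
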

\begin{proof}
    Define $P(u)\coloneqq \mathds{1}_{\{0\}}\!\left(B-uA\right) \in \mathcal{M}$. For a self-adjoint $T$ we have
    \begin{align}
        H_{\theta_1}\!\left(T\right) - H_{\theta_0}\!\left(T\right) = \left(\theta_1-\theta_0\right) \mathds{1}_{\{0\}}\!\left(T\right),
    \end{align}
    hence
    \begin{align}
        \Phi_{\theta_1} - \Phi_{\theta_0} = \left(\theta_1-\theta_0\right) \int_{\mathbb{R}} P(u) \, du.
    \end{align}
    By positivity and monotone convergence we may interchange the integration with $\tau\!\left(\rho\,\cdot\right)$, so the second claim follows from the first, since a countable set has Lebesgue measure zero and thus $\int_{\mathbb{R}} \tau\!\left(\rho P(u)\right) du = 0$.

    For the first claim consider the GNS-representation of $\tau$ on $L^2\!\left(\mathcal{M},\tau\right)$ and define $\xi \coloneqq \rho^{1/2} \in L^2\!\left(\mathcal{M},\tau\right)$. Since $P(u)$ is a projection, we have
    \begin{align}
        \tau\!\left(\rho P(u)\right) = \langle \xi, P(u)\xi\rangle = \Vert P(u)\xi\Vert_2^2.
    \end{align}
    As $A\geq \delta \mathds{1}$, we can define the bounded and self-adjoint operator
    \begin{align}
        C \coloneqq A^{-1/2} B A^{-1/2} \in \mathcal{M}.
    \end{align}
    Then $\zeta \coloneqq A^{-1/2}\xi \in L^2\!\left(\mathcal{M},\tau\right)$ and we define the finite Borel measure
    \begin{align}
        \mu(\Omega) \coloneqq \langle \zeta,\mathds{1}_{\Omega}(C) \zeta\rangle, \quad \Omega \subseteq \mathbb{R} \ \text{Borel}.
    \end{align}
    Since $\mu$ is finite, its set of atoms
    \begin{align}
        S \coloneqq \left\{u\in \mathbb{R} \ \middle\vert \ \mu\!\left(\{u\}\right) >0\right\} = \left\{u \in \mathbb{R} \ \middle\vert \ \Vert \mathds{1}_{\{u\}}(C)\zeta\Vert_2 >0\right\}
    \end{align}
    is countable. We prove that $P(u)\xi \neq 0$ implies $\mathds{1}_{\{u\}}(C) \zeta \neq 0$. Assume $P(u)\xi \neq 0$ and set $\eta \coloneqq P(u)\xi$. Then $\eta \in \ker\!\left(B-uA\right)$, where $B-uA$ acts on $L^2\!\left(\mathcal{M},\tau\right)$ via left-multiplication, i.e.
    \begin{align}
        (B-uA)\eta = 0.
    \end{align}
    Indeed, as $A$ has a bounded inverse, we have
    \begin{align}
        0 = A^{-1/2} (B-uA)\eta = \left(A^{-1/2}BA^{-1/2} -u\right)A^{1/2} \eta = (C-u) A^{1/2}\eta,
    \end{align}
    so $A^{1/2}\eta \in \ker (C-u) = \operatorname{Ran}\!\left(\mathds{1}_{\{u\}}(C)\right)$. Now compute
    \begin{align}
        \langle A^{1/2}\eta, \zeta\rangle = \langle A^{1/2} \eta, A^{-1/2}\xi \rangle = \langle \eta,\xi\rangle = \langle P(u) \xi,\xi \rangle = \Vert P(u)\xi\Vert_2^2 >0
    \end{align}
    by assumption, so $A^{1/2}\eta$ is not orthogonal to $\zeta$, which forces $\mathds{1}_{\{u\}}(C)\zeta \neq 0$, i.e.~$u \in S$. Hence $\left\{u \ \middle\vert \ \tau\!\left(\rho P(u)\right)>0\right\} \subseteq S$ is countable.
\end{proof}
Taking \autoref{lem:vanishing_terms} and the operator layer cake formula \autoref{thm:operator_layer_cake} together yields the following integral formula for the Umegaki relative entropy on finite von Neumann algebras. 

\begin{proposition}[Frenkel's formula in the finite tracial case]\label{prop:frenkel_finite}
Let $(\mathcal{M},\tau)$ be a finite von Neumann algebra with faithful normal tracial state $\tau$ and let $\rho,\sigma\in\mathcal{M}_+$ with $s(\rho)\leq s(\sigma)$, not necessarily normalized or invertible. Then
\begin{align}\label{eq:frenkel_jencova_finite}
D(\rho\Vert\sigma)
=
\tau\!\left(\rho-\sigma\right)
+
\int_0^1 \frac{1}{s}\,\tau\!\left(\left(s\sigma-\rho\right)_+\right)ds
+
\int_1^\infty \frac{1}{s}\,\tau\!\left(\left(\rho-s\sigma\right)_+\right)ds,
\end{align}
with both sides possibly infinite.
\end{proposition}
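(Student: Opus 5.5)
Our plan is to first prove \eqref{eq:frenkel_jencova_finite} for strictly positive $\rho,\sigma$, starting from \autoref{prop:layer_cake_relative_entropy_finite}. After that we remove the invertibility assumption by a regularization argument.

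\textbf{Step 1: the invertible case.} Let $\rho,\sigma\geq\delta\mathds{1}$. By \autoref{lem:vanishing_terms}, the value of $H$ at $0$ plays no role, so we may replace $H$ by the spectral projections $\mathds{1}_{(0,\infty)}$. Write $P_\gamma\coloneqq\mathds{1}_{(0,\infty)}(\rho-\gamma\sigma)$ and $Q_\gamma\coloneqq\mathds{1}_{(0,\infty)}(\sigma-\gamma\rho)$. Then $\tau(\rho P_\gamma)=\tau((\rho-\gamma\sigma)_+)+\gamma\tau(\sigma P_\gamma)$, and analogously for $Q_\gamma$. The goal is to rewrite the layer-cake integrand $\tfrac1\gamma\tau(\rho(P_\gamma-Q_\gamma))$ in terms of the functions $g(\gamma)\coloneqq\tau((\rho-\gamma\sigma)_+)$ and $h(\gamma)\coloneqq\tau((\sigma-\gamma\rho)_+)$.

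The key identity we need is the derivative formula $g'(\gamma)=-\tau(\sigma P_\gamma)$ for almost every $\gamma$. This holds because $g$ is convex, and by \eqref{eq:variational_positive_part_tracial} we have $g(\gamma)=\sup_p\tau(p(\rho-\gamma\sigma))$, a supremum of affine functions; at differentiability points the derivative is given by the maximizer $P_\gamma$. Likewise $h'(\gamma)=-\tau(\rho Q_\gamma)$. Substituting, the integrand becomes
\begin{align*}
\tfrac1\gamma g(\gamma)-g'(\gamma)+\tfrac1\gamma h'(\gamma).
\end{align*}
We then integrate on $[1,R]$ and integrate by parts, using that $g$ vanishes for $\gamma>\|\sigma^{-1/2}\rho\sigma^{-1/2}\|$ and $h$ vanishes similarly. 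This gives $\int_1^\infty\tfrac1\gamma g\,d\gamma+g(1)+\int_1^\infty \gamma^{-2}h\,d\gamma-h(1)$.

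Since $g(1)-h(1)=\tau(\rho-\sigma)$, it remains to convert the $h$-term. Under the substitution $s=1/\gamma$ we have $h(1/s)=\tfrac1s\tau((s\sigma-\rho)_+)$, so $\int_1^\infty\gamma^{-2}h(\gamma)\,d\gamma=\int_0^1\tfrac1s\tau((s\sigma-\rho)_+)\,ds$. This is exactly \eqref{eq:frenkel_jencova_finite}.

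\textbf{Step 2: general $\rho,\sigma$ with $s(\rho)\le s(\sigma)$.} We first reduce to the case where $\sigma$ is faithful. To do this, compress to the corner $s(\sigma)\mathcal{M}s(\sigma)$ with the trace restricted to it. This compression is again a finite von Neumann algebra, and both sides of the formula are unchanged by it, because every positive part involved is supported under $s(\sigma)$. The relative entropy with infinite values is handled consistently by the definition \eqref{eq:def_relative_entropy_trace}.

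Next we regularize by setting $\rho_\varepsilon=\rho+\varepsilon\mathds{1}$ and $\sigma_\varepsilon=\sigma+\varepsilon\mathds{1}$, apply Step 1, and let $\varepsilon\downarrow0$. On the left-hand side, $D(\rho_\varepsilon\Vert\sigma_\varepsilon)\to D(\rho\Vert\sigma)$. We would argue this via lower semicontinuity together with joint convexity, in the form $D(\rho_\varepsilon\Vert\sigma_\varepsilon)\le D(\rho\Vert\sigma)+D(\varepsilon\mathds{1}\Vert\varepsilon\mathds{1})=D(\rho\Vert\sigma)$. Alternatively, we can use monotone convergence in the spectral representation of $\tau(\rho\log\rho)$ and $\tau(\rho\log\sigma)$ for commuting regularizations.

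On the right-hand side, $(\rho_\varepsilon-s\sigma_\varepsilon)_+=(\rho-s\sigma+\varepsilon(1-s))_+$. We get pointwise convergence in $s$ because $X\mapsto\tau(X_+)$ is $1$-Lipschitz in operator norm for a tracial state. For domination, on $s\ge1$ the perturbation $\varepsilon(1-s)\le0$ gives $\tau((\rho_\varepsilon-s\sigma_\varepsilon)_+)\le\tau((\rho-s\sigma)_+)$, so dominated or monotone convergence applies. On $s\in(0,1)$ the sign is reversed, and $\tau((s\sigma_\varepsilon-\rho_\varepsilon)_+)\le\tau((s\sigma-\rho)_+)+\varepsilon(1-s)$. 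The extra term $\varepsilon(1-s)/s$ is not integrable at $0$, however.

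\textbf{Main obstacle.} This non-integrability at $s=0$ is the step we expect to be hardest. Our first attempt would be to perturb only $\sigma$ in a way adapted to the sign: regularize $\rho_\varepsilon=\rho+\varepsilon\sigma$ and $\sigma_\varepsilon=\sigma+\varepsilon\mathds{1}$, or use $\varepsilon s(\sigma)$, keeping the $(s\sigma-\rho)_+$ terms monotone. Failing that, we would split at small $s$ and use the bound $\tau((s\sigma-\rho)_+)\le s\tau(\sigma)$, which makes the integrand $\le\tau(\sigma)$ and hence integrable on $(0,1)$ uniformly in $\varepsilon$. Dominated convergence then applies with dominating function $\tau(\sigma)+\tau(\mathds{1})$ for $\varepsilon\le1$. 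In other words, we replace the naive bound by $\tfrac1s\tau((s\sigma_\varepsilon-\rho_\varepsilon)_+)\le\tau(\sigma_\varepsilon)$. If the right-hand side is infinite, we use Fatou's lemma together with lower semicontinuity to match the infinite values on both sides.
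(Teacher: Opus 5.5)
Your proposal is correct and follows essentially the same route as the paper. For the invertible case you go from the layer-cake formula via convexity of $\gamma\mapsto\tau((\rho-\gamma\sigma)_+)$ and its a.e.\ derivative; your integration by parts with $h'(\gamma)=-\tau(\rho Q_\gamma)$ is a slightly cleaner variant of the paper's mass-balance step, since it avoids the second kernel argument. For the general case you regularize by $\varepsilon\mathds{1}$, using joint convexity plus lower semicontinuity on the left, and on the right monotone convergence on $[1,\infty)$ together with the bound $\frac1s\tau((s\sigma_\varepsilon-\rho_\varepsilon)_+)\le\tau(\sigma_\varepsilon)$ on $(0,1)$, exactly as the paper does. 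That last bound already resolves the obstacle you flagged, so the corner compression to $s(\sigma)\mathcal{M}s(\sigma)$ and the alternative regularizations you float are unnecessary, though harmless.
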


\begin{proof}
Assume first that $\rho,\sigma$ are strictly positive. By \autoref{prop:layer_cake_relative_entropy_finite} and \autoref{lem:vanishing_terms}, applied with the functional $\tau\!\left(\rho\,\cdot\right)$ and the pairs $(A,B)=(\sigma,\rho)$ and $(A,B)=(\rho,\sigma)$ to replace the convention $H(0)=\tfrac{1}{2}$ by the strict spectral projections on a set of full measure, we have
\begin{align}\label{eq:layercake_strict}
D(\rho\Vert\sigma)
=
\int_1^\infty \frac{1}{s}\,\tau\!\left(\rho\left(p_s-q_s\right)\right)ds,
\qquad
p_s\coloneqq \mathds{1}_{(0,\infty)}\!\left(\rho-s\sigma\right),
\quad
q_s\coloneqq \mathds{1}_{(0,\infty)}\!\left(\sigma-s\rho\right).
\end{align}
Since $p_s$ and $q_s$ are the support projections of $\left(\rho-s\sigma\right)_+$ and $\left(\sigma-s\rho\right)_+$, we have $\tau\!\left(\left(\rho-s\sigma\right)_+\right)=\tau\!\left(\rho p_s\right)-s\,\tau\!\left(\sigma p_s\right)$ and $\tau\!\left(\left(\sigma-s\rho\right)_+\right)=\tau\!\left(\sigma q_s\right)-s\,\tau\!\left(\rho q_s\right)$, whence
\begin{align}\label{eq:integrand_split}
\frac{1}{s}\,\tau\!\left(\rho\left(p_s-q_s\right)\right)
=
\frac{1}{s}\,\tau\!\left(\left(\rho-s\sigma\right)_+\right)
+\frac{1}{s^2}\,\tau\!\left(\left(\sigma-s\rho\right)_+\right)
+\tau\!\left(\sigma p_s\right)-\frac{1}{s^2}\,\tau\!\left(\sigma q_s\right).
\end{align}
The substitution $s\mapsto 1/s$ together with $\left(\sigma-s^{-1}\rho\right)_+=s^{-1}\left(s\sigma-\rho\right)_+$ turns the second term into the first integral of \eqref{eq:frenkel_jencova_finite},
\begin{align}\label{eq:substitution_second_term}
\int_1^\infty \frac{1}{s^2}\,\tau\!\left(\left(\sigma-s\rho\right)_+\right)ds
=
\int_0^1 \frac{1}{s}\,\tau\!\left(\left(s\sigma-\rho\right)_+\right)ds,
\end{align}
so it remains to show
\begin{align}\label{eq:mass_balance}
\int_1^\infty \tau\!\left(\sigma p_s\right)ds-\int_1^\infty \frac{1}{s^2}\,\tau\!\left(\sigma q_s\right)ds
=
\tau\!\left(\rho-\sigma\right).
\end{align}
Consider $F(s)\coloneqq \tau\!\left(\left(\rho-s\sigma\right)_+\right)$ for $s\geq 0$. By the variational formula $\tau\!\left(X_+\right)=\sup_{p\in\mathcal{P}(\mathcal{M})}\tau\!\left(pX\right)$, $F$ is a supremum of affine functions of $s$, hence convex, with $F(0)=\tau\!\left(\rho\right)$ and $F(s)=0$ for $s\geq\left\Vert\sigma^{-1/2}\rho\,\sigma^{-1/2}\right\Vert$. Evaluating the variational formula at the projections $p_s$ and $p_{s+h}$, which are optimal at $s$ and $s+h$ respectively, gives for $h>0$
\begin{align}
-\tau\!\left(\sigma p_s\right)
\leq
\frac{F(s+h)-F(s)}{h}
\leq
-\tau\!\left(\sigma p_{s+h}\right),
\end{align}
so $s\mapsto\tau\!\left(\sigma p_s\right)$ is non-increasing and $F'(s)=-\tau\!\left(\sigma p_s\right)$ at almost every $s$. As a finite convex function, $F$ is locally absolutely continuous, and the fundamental theorem of calculus yields
\begin{align}\label{eq:ftc_F}
\int_a^b \tau\!\left(\sigma p_s\right)ds=F(a)-F(b), \qquad 0\leq a\leq b,
\end{align}
in particular $\int_1^\infty \tau\!\left(\sigma p_s\right)ds=F(1)$. For the second term in \eqref{eq:mass_balance}, the substitution $s\mapsto 1/s$ and the fact that $\sigma-s^{-1}\rho$ and $s\sigma-\rho$ have the same strictly positive spectral projection give
\begin{align}
\int_1^\infty \frac{1}{s^2}\,\tau\!\left(\sigma q_s\right)ds
=
\int_0^1 \tau\!\left(\sigma\,\mathds{1}_{(0,\infty)}\!\left(s\sigma-\rho\right)\right)ds.
\end{align}
Since $\mathds{1}_{(0,\infty)}\!\left(s\sigma-\rho\right)=\mathds{1}-p_s-\mathds{1}_{\{0\}}\!\left(\rho-s\sigma\right)$ and, by \autoref{lem:vanishing_terms} applied with the functional $\tau\!\left(\sigma\,\cdot\right)$, the kernel term vanishes for almost every $s$, we obtain with \eqref{eq:ftc_F}
\begin{align}
\int_1^\infty \frac{1}{s^2}\,\tau\!\left(\sigma q_s\right)ds
=
\int_0^1 \left(\tau\!\left(\sigma\right)-\tau\!\left(\sigma p_s\right)\right)ds
=
\tau\!\left(\sigma\right)-\tau\!\left(\rho\right)+F(1),
\end{align}
which proves \eqref{eq:mass_balance}. Integrating \eqref{eq:integrand_split} over $s\in[1,\infty)$ and inserting \eqref{eq:layercake_strict}, \eqref{eq:substitution_second_term} and \eqref{eq:mass_balance} establishes \eqref{eq:frenkel_jencova_finite} for strictly positive $\rho,\sigma$; note that all integrals are finite in this case, since the integrands vanish outside a compact subset of $(0,\infty)$ and are bounded.

Now let $\rho,\sigma\in\mathcal{M}_+$ be arbitrary with $s(\rho)\leq s(\sigma)$, and set
\begin{align}
\rho_\varepsilon \coloneqq \frac{\rho+\varepsilon\mathds{1}}{1+\varepsilon},
\qquad
\sigma_\varepsilon \coloneqq \frac{\sigma+\varepsilon\mathds{1}}{1+\varepsilon},
\qquad \varepsilon>0,
\end{align}
which are strictly positive, so that \eqref{eq:frenkel_jencova_finite} holds for $\left(\rho_\varepsilon,\sigma_\varepsilon\right)$. We now justify the passage to the limit \(\varepsilon\downarrow0\) in the
entropy term. We use joint convexity and lower semicontinuity of the relative
entropy for positive normal functionals, which follow from Kosaki's variational
formula \cite{kosaki_1986_variationalformula}. Put
\[
\lambda_\varepsilon:=\frac{1}{1+\varepsilon}.
\]
Then, 
\[
(\rho_\varepsilon,\sigma_\varepsilon)
=
\lambda_\varepsilon(\rho,\sigma)
+
(1-\lambda_\varepsilon)(\mathds{1},\mathds{1}).
\]
Since \(D(\mathds{1}\Vert\mathds{1})=0\), joint convexity gives
\[
D(\rho_\varepsilon\Vert\sigma_\varepsilon)
\le
\lambda_\varepsilon D(\rho\Vert\sigma).
\]
Consequently,
\[
\limsup_{\varepsilon\downarrow0}
D(\rho_\varepsilon\Vert\sigma_\varepsilon)
\le
D(\rho\Vert\sigma)
\quad\text{in }(-\infty,+\infty].
\]
Moreover,
\[
\tau\bigl(|\rho_\varepsilon-\rho|\bigr)
=
\frac{\varepsilon}{1+\varepsilon}\tau\bigl(|\mathds{1}-\rho|\bigr)
\le
\frac{\varepsilon}{1+\varepsilon}\bigl(\tau(\mathds{1})+\tau(\rho)\bigr)
\to0,
\]
and similarly
\[
\tau\bigl(|\sigma_\varepsilon-\sigma|\bigr)
\le
\frac{\varepsilon}{1+\varepsilon}\bigl(\tau(\mathds{1})+\tau(\sigma)\bigr)
\to0.
\]
Thus the positive normal functionals represented by \(\rho_\varepsilon\) and
\(\sigma_\varepsilon\) converge in norm to those represented by \(\rho\) and
\(\sigma\), respectively. Hence lower semicontinuity gives
\[
D(\rho\Vert\sigma)
\le
\liminf_{\varepsilon\downarrow0}
D(\rho_\varepsilon\Vert\sigma_\varepsilon).
\]
Combining the two inequalities yields
\[
D(\rho_\varepsilon\Vert\sigma_\varepsilon)
\to
D(\rho\Vert\sigma)
\quad\text{in }(-\infty,+\infty].
\]
For the right-hand side, $\tau\!\left(\rho_\varepsilon-\sigma_\varepsilon\right)=\tau\!\left(\rho-\sigma\right)/(1+\varepsilon)\to\tau\!\left(\rho-\sigma\right)$, and since $X\mapsto\tau\!\left(X_+\right)$ satisfies $\left\vert\tau\!\left(X_+\right)-\tau\!\left(Y_+\right)\right\vert\leq\left\Vert X-Y\right\Vert_1$, both integrands converge pointwise in $s$. On $(0,1]$ the integrands are uniformly bounded, since $\frac{1}{s}\,\tau\!\left(\left(s\sigma_\varepsilon-\rho_\varepsilon\right)_+\right)\leq\tau\!\left(\sigma_\varepsilon\right)\leq\tau\!\left(\sigma\right)+\varepsilon\tau\!\left(\mathds{1}\right)$, so dominated convergence applies. On $[1,\infty)$ we have $\rho_\varepsilon-s\sigma_\varepsilon\leq\left(\rho-s\sigma\right)/(1+\varepsilon)$, hence by monotonicity of $X\mapsto\tau\!\left(X_+\right)$
\begin{align}
\frac{1}{s}\,\tau\!\left(\left(\rho_\varepsilon-s\sigma_\varepsilon\right)_+\right)
\leq
\frac{1}{s}\,\tau\!\left(\left(\rho-s\sigma\right)_+\right).
\end{align}
If the right-hand side is integrable on $[1,\infty)$, dominated convergence applies; if not, Fatou's lemma forces the integrals to diverge along $\varepsilon\downarrow 0$ as well. In either case the right-hand side of \eqref{eq:frenkel_jencova_finite} for $\left(\rho_\varepsilon,\sigma_\varepsilon\right)$ converges in $(-\infty,+\infty]$ to the right-hand side for $\left(\rho,\sigma\right)$, and the identity extends.
\end{proof}

To generalize the formula in \autoref{prop:frenkel_finite} to arbitrary von Neumann algebras, we use the concrete form of Haagerup's reduction theorem. The point requiring care is that the finite approximating states must have bounded densities with respect to the finite traces before \autoref{prop:frenkel_finite} can be applied; this is ensured below by choosing the reference state in the reduction to dominate the two states under consideration.
\begin{theorem}[Frenkel formula for general von Neumann algebras]\label{thm:frenkel_general}
Let $\mathcal{M}$ be a von Neumann algebra and let $\rho,\sigma\in\mathcal{M}_+^\star$ with $s_{\mathcal{M}}(\rho)\leq s_{\mathcal{M}}(\sigma)$. Then
\begin{align}\label{eq:frenkel_jencova_general}
D(\rho\Vert\sigma)
=
\left(\rho-\sigma\right)(\mathds{1})
+
\int_0^1 \frac{1}{s}\,\left(s\sigma-\rho\right)_+(\mathds{1})\,ds
+
\int_1^\infty \frac{1}{s}\,\left(\rho-s\sigma\right)_+(\mathds{1})\,ds,
\end{align}
with both sides possibly infinite.
\end{theorem}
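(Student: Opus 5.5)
We reduce \autoref{thm:frenkel_general} to the finite tracial case \autoref{prop:frenkel_finite} via the Haagerup reduction in the concrete form of \cite{Haagerup2009,Fawzi_2025}. First, by homogeneity we normalize: if $\rho=0$ both sides vanish trivially (the right-hand side becomes $-\sigma(1)+\int_0^1\sigma(1)\,ds=0$), and otherwise all three terms scale compatibly under $(\rho,\sigma)\mapsto(\lambda\rho,\lambda\sigma)$. By Lemma~\ref{lem:araki_order_scaling}(ii) and the substitution $s\mapsto s$ in the Jordan parts, separate rescaling of $\sigma$ shifts both sides by $-\rho(1)\log\lambda$. So we may assume $\rho,\sigma$ are states. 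Moreover, both sides depend only on the restriction to the corner $s\mathcal{M}s$ with $s=s_{\mathcal M}(\sigma)$: the left side by standard properties of Araki entropy, and the Jordan parts because $\rho,\sigma$ vanish off this corner. Hence we may assume $\sigma$ is faithful.

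Next we fix the reference state $\varphi\coloneqq\tfrac12(\rho+\sigma)$, which is faithful and dominates both $\rho\le 2\varphi$ and $\sigma\le2\varphi$. We apply Haagerup's construction relative to $\varphi$. This gives the crossed product $\mathcal R=\mathcal M\rtimes_{\sigma^\varphi}\Gamma$ with $\Gamma=\bigcup_n 2^{-n}\mathbb Z$, an increasing family of finite subalgebras $(\mathcal R_n,\tau_n)$ with faithful normal conditional expectations $\mathcal E_n:\mathcal R\to\mathcal R_n$, and the canonical expectation $\mathcal E:\mathcal R\to\mathcal M$. We set $\hat\rho=\rho\circ\mathcal E$, $\hat\sigma=\sigma\circ\mathcal E$, and $\rho_n=\hat\rho|_{\mathcal R_n}$, $\sigma_n=\hat\sigma|_{\mathcal R_n}$. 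The point of choosing $\varphi$ is that $\hat\varphi|_{\mathcal R_n}=\tau_n(h_n\cdot)$ with $h_n$ bounded and boundedly invertible. Since $\rho_n,\sigma_n\le 2\hat\varphi|_{\mathcal R_n}$, their densities are bounded elements of $\mathcal R_n$, which is exactly what \autoref{prop:frenkel_finite} requires. For $n$ large the support condition holds (faithfulness of $\hat\sigma$). Hence \eqref{eq:frenkel_jencova_finite} holds for $(\rho_n,\sigma_n)$ on $(\mathcal R_n,\tau_n)$, the Jordan parts there being computed as $\tau_n((\cdot)_+)$ via \eqref{eq:variational_positive_part_tracial}.

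Then we pass to the limit $n\to\infty$ term by term. On the left, Proposition~\ref{prop:Haagerup_reduction} gives $D(\rho_n\Vert\sigma_n)\to D(\hat\rho\Vert\hat\sigma)$. This limit equals $D(\rho\Vert\sigma)$, since $\mathcal E$ has a right inverse given by the inclusion, so monotonicity gives equality. The term $(\rho_n-\sigma_n)(1)=(\rho-\sigma)(1)$ is constant. For the integrands we use the variational formula \eqref{eq:positive_part_as_sup}:
\[
  (\rho_n-s\sigma_n)_+(1)=\sup_{0\le x\le1,\,x\in\mathcal R_n}(\hat\rho-s\hat\sigma)(x).
\]
This is nondecreasing in $n$, because the $\mathcal R_n$ increase. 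Its limit is $(\hat\rho-s\hat\sigma)_+(1)$, by weak$^*$ density of $\bigcup_n\mathcal R_n$ in $\mathcal R$ together with Kaplansky density and normality. That in turn equals $(\rho-s\sigma)_+(1)$, since $\mathcal E$ maps $[0,1]$ into $[0,1]$ and $\mathcal M\subseteq\mathcal R$. The same holds for $(s\sigma-\rho)_+$. Monotone convergence then handles both integrals: on $[1,\infty)$ the integrand is nonnegative and increasing, and on $(0,1]$ it is bounded by $\sigma(1)$. Both sides then converge in $(-\infty,\infty]$, giving \eqref{eq:frenkel_jencova_general}.

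The main obstacle is the second step. We must verify that the Haagerup approximants $\rho_n,\sigma_n$ really have bounded densities in $\mathcal R_n$ and satisfy the support hypothesis. This requires the explicit form of the approximation, in which $\hat\varphi$ restricts to $\tau_n$ twisted by a bounded invertible density coming from the modular automorphism generator. We also need the identification of the limit in Proposition~\ref{prop:Haagerup_reduction} with $D(\rho\Vert\sigma)$ for non-faithful $\rho$. If the cited statement only covers faithful states, we would first perturb, $\rho\mapsto(1-\varepsilon)\rho+\varepsilon\sigma$. We would then control $\varepsilon\to0$ on the left by joint convexity and lower semicontinuity, as in the proof of \autoref{prop:frenkel_finite}. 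On the right we would use the Lipschitz bound $|X_+(1)-Y_+(1)|\le\|X-Y\|$ together with dominated or Fatou convergence.
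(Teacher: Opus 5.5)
Your proposal is correct and follows essentially the same route as the paper's proof: rescaling to states, cutting down to the corner $s_{\mathcal M}(\sigma)\mathcal M s_{\mathcal M}(\sigma)$, running Haagerup's reduction relative to $\varphi=\tfrac12(\rho+\sigma)$ so that $\rho_n,\sigma_n$ have bounded densities, applying \autoref{prop:frenkel_finite} at each level, and passing to the limit by monotone convergence of the Jordan terms. The differences are cosmetic. You get the support condition from faithfulness of $\hat\sigma$, and the integrand limit from Kaplansky density rather than the identity $\omega_+(\mathds{1})=\tfrac12(\Vert\omega\Vert+\omega(\mathds{1}))$. One correction: your ``substitution $s\mapsto s$'' should read $s\mapsto\lambda s$, combined with the Jordan identity $(w\sigma-\rho)_+(\mathds{1})-(\rho-w\sigma)_+(\mathds{1})=w\sigma(\mathds{1})-\rho(\mathds{1})$.
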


\begin{proof}
If $\rho=0$, then $D(0\Vert\sigma)=0$ and the right-hand side of \eqref{eq:frenkel_jencova_general} is
$-\sigma(\mathds{1})+\int_0^1 \sigma(\mathds{1})\,ds=0$. We therefore assume $\rho\neq0$. The support assumption then implies $\sigma\neq0$.

For a unital von Neumann algebra $\mathcal A$ and positive normal functionals $\alpha,\beta\in\mathcal A_+^\star$, write
\begin{align*}
F_{\mathcal A}(\alpha,\beta)
&\coloneqq
(\alpha-\beta)(\mathds{1}_{\mathcal A})
+
\int_0^1 \frac{1}{s}\,(s\beta-\alpha)_+(\mathds{1}_{\mathcal A})\,ds
+
\int_1^\infty \frac{1}{s}\,(\alpha-s\beta)_+(\mathds{1}_{\mathcal A})\,ds.
\end{align*}
We first reduce the identity to the case of states. Put
\begin{align*}
    a\coloneqq\rho(\mathds{1}),\qquad b\coloneqq\sigma(\mathds{1}),\qquad
    \bar\rho\coloneqq a^{-1}\rho,\qquad \bar\sigma\coloneqq b^{-1}\sigma,
    \qquad c\coloneqq b/a.
\end{align*}
The scaling in the first argument follows directly from the standard-form definition of Araki relative entropy: $\xi_{a\bar\rho}=a^{1/2}\xi_{\bar\rho}$ and $\Delta(\sigma,a\bar\rho)=a^{-1}\Delta(\sigma,\bar\rho)$, hence
$D(a\bar\rho\Vert\sigma)=aD(\bar\rho\Vert\sigma)+a\log a$. Combining this with the scaling in the second argument from \autoref{lem:araki_order_scaling} gives
\begin{align}\label{eq:frenkel_general_lhs_scaling}
    D(\rho\Vert\sigma)=aD(\bar\rho\Vert\bar\sigma)+a\log\frac{a}{b}.
\end{align}
On the right-hand side, the substitution $w=cs$ and the homogeneity of the positive part give
\begin{align*}
F_{\mathcal M}(\rho,\sigma)
&=a-b
+a\left(
\int_0^{c}\frac{1}{w}\,(w\bar\sigma-\bar\rho)_+(\mathds{1})\,dw
+
\int_{c}^{\infty}\frac{1}{w}\,(\bar\rho-w\bar\sigma)_+(\mathds{1})\,dw
\right).
\end{align*}
For every $w>0$, the Jordan decomposition identity
$\omega_+(\mathds{1})-(-\omega)_+(\mathds{1})=\omega(\mathds{1})$, applied to $\omega=w\bar\sigma-\bar\rho$, gives
\begin{align*}
(w\bar\sigma-\bar\rho)_+(\mathds{1})-(\bar\rho-w\bar\sigma)_+(\mathds{1})=w-1.
\end{align*}
Thus, with the integral over $[1,c]$ interpreted as an oriented integral when $c<1$,
\begin{align}\label{eq:frenkel_general_rhs_scaling}
F_{\mathcal M}(\rho,\sigma)-aF_{\mathcal M}(\bar\rho,\bar\sigma)
&=a-b+a\int_1^c\frac{w-1}{w}\,dw
=a\log\frac{a}{b}.
\end{align}
It is therefore enough, by \eqref{eq:frenkel_general_lhs_scaling} and \eqref{eq:frenkel_general_rhs_scaling}, to prove the formula when $\rho$ and $\sigma$ are states.

Assume from now on that $\rho,\sigma\in\mathcal S(\mathcal M)$. Let $p=s_{\mathcal M}(\sigma)$. Since $s_{\mathcal M}(\rho)\le p$, both functionals are supported on $p$. Passing from $\mathcal M$ to the corner $p\mathcal M p$ does not change either side of \eqref{eq:frenkel_jencova_general}: for the Jordan terms, if $\omega$ is any self-adjoint normal functional supported on $p$, then
\begin{align*}
\omega_+^{\mathcal M}(\mathds{1})
=\sup_{0\le x\le \mathds{1}_{\mathcal M}}\omega(x)
=\sup_{0\le y\le p}\omega(y)
=\omega_+^{p\mathcal M p}(p),
\end{align*}
and the equality of Araki relative entropy under this support reduction is standard from the definition in standard form \cite{Araki1975,Takesaki1979}. After this replacement, and relabelling $p\mathcal M p$ as $\mathcal M$, we may suppose that $\sigma$ is faithful; in particular $\mathcal M$ is $\sigma$-finite. Set
\begin{align*}
    \varphi\coloneqq\frac{1}{2}(\rho+\sigma).
\end{align*}
Then $\varphi$ is a faithful normal state and $\rho,\sigma\le 2\varphi$.

Apply Haagerup's reduction theorem to $(\mathcal M,\varphi)$ in its concrete crossed-product form \cite[Thm.~2.1, Lem.~2.4]{Haagerup2009}. Thus $\mathcal M$ embeds into an ambient von Neumann algebra $\widetilde{\mathcal M}$, there is a normal faithful conditional expectation $E:\widetilde{\mathcal M}\to\mathcal M$, and there is an increasing sequence of finite von Neumann subalgebras $(\mathcal M_n)_{n\ge1}$ whose union is $\sigma$-weakly dense in $\widetilde{\mathcal M}$. Let
\begin{align*}
    \widetilde\rho\coloneqq\rho\circ E,
    \qquad
    \widetilde\sigma\coloneqq\sigma\circ E,
    \qquad
    \widetilde\varphi\coloneqq\varphi\circ E,
    \qquad
    \rho_n\coloneqq\widetilde\rho\vert_{\mathcal M_n},
    \qquad
    \sigma_n\coloneqq\widetilde\sigma\vert_{\mathcal M_n}.
\end{align*}
The same concrete reduction writes $\mathcal M_n$ as the centralizer of a normal faithful positive functional $\varphi_n$, whose restriction
$\tau_n\coloneqq\varphi_n\vert_{\mathcal M_n}$ is a normal faithful finite trace, and there is a bounded positive element $a_n\in\mathcal M_n$ such that
\begin{align*}
    \varphi_n(x)=\widetilde\varphi(e^{-a_n}x),\qquad 0\le a_n\le \alpha_n\mathds{1},\qquad \alpha_n\coloneqq 2^{n+1}\pi.
\end{align*}
Since $\widetilde\rho,\widetilde\sigma\le 2\widetilde\varphi$, for $x\in(\mathcal M_n)_+$ we obtain
\begin{align}\label{eq:frenkel_general_bounded_density_bound}
\rho_n(x)
&\le 2\widetilde\varphi(x)
=2\tau_n(e^{a_n}x)
=2\tau_n(e^{a_n/2}xe^{a_n/2})
\le 2e^{\alpha_n}\tau_n(x),
\end{align}
and the same bound holds for $\sigma_n$. Hence, by the Radon--Nikodym theorem for normal functionals dominated by a finite trace \cite{Takesaki1979}, the functionals $\rho_n$ and $\sigma_n$ are represented by bounded positive densities in $\mathcal M_n$ with respect to $\tau_n$. Moreover, $s_{\mathcal M_n}(\rho_n)\le s_{\mathcal M_n}(\sigma_n)$: if $x\in(\mathcal M_n)_+$ and $\sigma_n(x)=0$, then $\sigma(E(x))=0$, whence $\rho(E(x))=0$ by $s_{\mathcal M}(\rho)\le s_{\mathcal M}(\sigma)$, and therefore $\rho_n(x)=0$.

We may now apply \autoref{prop:frenkel_finite} inside $(\mathcal M_n,\tau_n)$ to these bounded densities. Since $\rho_n$ and $\sigma_n$ are states, the mass term vanishes and
\begin{align}\label{eq:frenkel_general_level_n}
D(\rho_n\Vert\sigma_n)
=
\int_0^1 \frac{1}{s}\,(s\sigma_n-\rho_n)_+(\mathds{1})\,ds
+
\int_1^\infty \frac{1}{s}\,(\rho_n-s\sigma_n)_+(\mathds{1})\,ds.
\end{align}
By the entropy approximation part of Haagerup reduction \cite[Prop.~2.2]{Fawzi_2025},
\begin{align}\label{eq:frenkel_general_haagerup_entropy_limit}
    \lim_{n\to\infty}D(\rho_n\Vert\sigma_n)
    =D(\widetilde\rho\Vert\widetilde\sigma)
    =D(\rho\Vert\sigma),
\end{align}
where the last equality follows because $E$ is a conditional expectation onto the embedded copy of $\mathcal M$.

It remains to pass to the limit in the two integrals in \eqref{eq:frenkel_general_level_n}. For a self-adjoint normal functional $\omega$ on a von Neumann algebra, the Jordan decomposition gives
\begin{align}\label{eq:frenkel_general_jordan_norm_identity}
    \omega_+(\mathds{1})=\frac{1}{2}\left(\Vert\omega\Vert+\omega(\mathds{1})\right)
\end{align}
(cf.~\cite[III.4.2]{Takesaki1979}). Fix $s>0$ and put $\omega_s\coloneqq s\widetilde\sigma-\widetilde\rho$. Since $\omega_s$ is normal and the unit ball of $\bigcup_n\mathcal M_n$ is $\sigma$-weakly dense in the unit ball of $\widetilde{\mathcal M}$, we have
\begin{align}\label{eq:frenkel_general_norm_convergence}
\Vert s\sigma_n-\rho_n\Vert
&=
\sup_{x\in\mathcal M_n,\,\Vert x\Vert\le1}\left\vert\omega_s(x)\right\vert
\nearrow
\Vert\omega_s\Vert_{\widetilde{\mathcal M}_\star}.
\end{align}
Furthermore, because $\omega_s=(s\sigma-\rho)\circ E$ and $E\vert_{\mathcal M}=\operatorname{id}_{\mathcal M}$, contractivity of $E$ gives
\begin{align*}
    \Vert\omega_s\Vert_{\widetilde{\mathcal M}_\star}=\Vert s\sigma-\rho\Vert_{\mathcal M_\star}.
\end{align*}
Using \eqref{eq:frenkel_general_jordan_norm_identity}, together with
$(s\sigma_n-\rho_n)(\mathds{1})=s-1$ and $(\rho_n-s\sigma_n)(\mathds{1})=1-s$, we get pointwise monotone convergence of both integrands:
\begin{align*}
    (s\sigma_n-\rho_n)_+(\mathds{1})&\nearrow(s\sigma-\rho)_+(\mathds{1}),\\
    (\rho_n-s\sigma_n)_+(\mathds{1})&\nearrow(\rho-s\sigma)_+(\mathds{1}).
\end{align*}
The monotone convergence theorem therefore applies to both non-negative integrals in \eqref{eq:frenkel_general_level_n}. Combining the resulting limit with \eqref{eq:frenkel_general_haagerup_entropy_limit} proves \eqref{eq:frenkel_jencova_general} for states. The scaling reduction at the beginning of the proof proves the asserted formula for arbitrary positive normal functionals satisfying the support condition.
\end{proof}

Similarly to \cite[Cor.~1]{Jencova2024}, we prove the following corollary.
\begin{corollary}[Finite-range integral formula]\label{coro:jencova_finite_range}
Let $\mathcal{M}$ be a von Neumann algebra and let $\rho,\sigma\in\mathcal{M}_+^\star$. Suppose there exist $\mu\ge 0$ and $\lambda> 0$ such that
\begin{align}\label{eq:sandwich_condition}
    \mu\,\sigma \le \rho \le \lambda\,\sigma
\end{align}
in the ordering of $\mathcal{M}_+^\star$. Then
\begin{align}\label{eq:jencova_finite_range}
    D(\rho\Vert\sigma)
    =
    \left(\rho-\sigma\right)(1)
    +
    \int_\mu^\lambda \frac{1}{s}\,
    \left(s\,\sigma-\rho\right)_+(1)\,ds
    +
    \rho(1)\,\ln\lambda
    -
    (\lambda-1)\,\sigma(1).
\end{align}
\end{corollary}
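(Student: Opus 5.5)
The plan is to deduce \eqref{eq:jencova_finite_range} directly from \autoref{thm:frenkel_general}, using the sandwich condition \eqref{eq:sandwich_condition} to truncate the two integrals and the Jordan decomposition identity to merge them. First, the support hypothesis of \autoref{thm:frenkel_general} holds. Indeed, $\rho\le\lambda\sigma$ implies that $\sigma(1-s_{\mathcal M}(\sigma))=0$ forces $\rho(1-s_{\mathcal M}(\sigma))=0$, hence $s_{\mathcal M}(\rho)\le s_{\mathcal M}(\sigma)$. So \eqref{eq:frenkel_jencova_general} applies, with both sides a priori in $(-\infty,+\infty]$.

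Next I truncate the integrals. For $s\ge\lambda$ we have $\rho-s\sigma\le \rho-\lambda\sigma\le 0$ as functionals. By \eqref{eq:jordan_shorthand}, the supremum over projections of a non-positive functional is attained at $p=0$, so $(\rho-s\sigma)_+(1)=0$. Likewise, for $0\le s\le\mu$ we have $s\sigma-\rho\le \mu\sigma-\rho\le0$, hence $(s\sigma-\rho)_+(1)=0$. Consequently the first integral in \eqref{eq:frenkel_jencova_general} reduces to $\int_\mu^1$ and the second to $\int_1^\lambda$, both over bounded ranges with bounded integrands. In particular $D(\rho\Vert\sigma)<\infty$. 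I interpret integrals with reversed limits as oriented integrals, which covers the cases $\lambda<1$ or $\mu>1$ uniformly. This needs a short check: for instance, if $\lambda<1$ then the integrand of $\int_1^\infty$ vanishes identically, which agrees with the oriented $\int_1^\lambda$ of the vanishing quantity. Note also that $\mu\le\lambda$ whenever $\sigma\neq0$, since $\mu\sigma\le\lambda\sigma$; the case $\sigma=0$ forces $\rho=0$ and is trivial.

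The key step is to rewrite $\int_1^\lambda \frac1s(\rho-s\sigma)_+(1)\,ds$ in terms of $(s\sigma-\rho)_+$. For that I use the Jordan identity $\omega_+(1)-(-\omega)_+(1)=\omega(1)$ with $\omega=\rho-s\sigma$, exactly as in the scaling step of the proof of \autoref{thm:frenkel_general}. It gives
\begin{align*}
\int_1^\lambda\frac{1}{s}(\rho-s\sigma)_+(1)\,ds
=\int_1^\lambda\frac{1}{s}(s\sigma-\rho)_+(1)\,ds+\int_1^\lambda\Bigl(\frac{\rho(1)}{s}-\sigma(1)\Bigr)ds,
\end{align*}
and the last integral equals $\rho(1)\ln\lambda-(\lambda-1)\sigma(1)$. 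Adding the truncated first integral $\int_\mu^1\frac1s(s\sigma-\rho)_+(1)\,ds$ merges the two into $\int_\mu^\lambda$, which yields \eqref{eq:jencova_finite_range}.

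The only delicate point I expect is bookkeeping rather than analysis. When $\mu=0$, the integral $\int_0^\lambda$ must be shown to converge; this holds because $\frac1s(s\sigma-\rho)_+(1)\le\sigma(1)$. Beyond that, one has to confirm that the oriented-integral convention gives the same formula in the cases $\lambda<1$ and $\mu>1$, which amounts to redoing the elementary integral above on $[\lambda,1]$ or $[1,\mu]$, where one of the positive parts vanishes.
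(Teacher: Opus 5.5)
Your proposal is correct and follows the paper's proof essentially step for step. Both derive the support condition from $\rho\le\lambda\sigma$, truncate the two integrals using the sandwich condition, apply the Jordan identity to $\rho-s\sigma$ on $[1,\lambda]$, and evaluate the elementary integral; your oriented-integral handling of $\lambda<1$ or $\mu>1$ and your bound $\frac{1}{s}(s\sigma-\rho)_+(1)\le\sigma(1)$ for $\mu=0$ add care the paper's write-up omits, since it tacitly assumes $\mu\le 1\le\lambda$ and its stated bound does not control the integrand near $s=0$.
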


\begin{proof}
We first note that \eqref{eq:sandwich_condition} implies $s_{\mathcal{M}}(\rho)\le s_{\mathcal{M}}(\sigma)$: if $x\in\mathcal{M}_+$ with $\sigma(x)=0$, then $0\le\rho(x)\le\lambda\,\sigma(x)=0$. Hence \autoref{thm:frenkel_general} applies and we start from the integral formula \eqref{eq:frenkel_jencova_general}.

Since $\mu\,\sigma\le\rho$, for every $s\in[0,\mu]$ one has $s\,\sigma\le\mu\,\sigma\le\rho$, whence $\left(s\,\sigma-\rho\right)_+(1)=0$. Thus the first integral in \eqref{eq:frenkel_jencova_general} reduces to $\int_\mu^1$. Likewise, $\rho\le\lambda\,\sigma$ implies $\left(\rho-s\,\sigma\right)_+(1)=0$ for all $s\ge\lambda$, and the second integral reduces to $\int_1^\lambda$. In particular, both remaining integrands are bounded by $\left(s+1\right)\left(\rho(1)+\sigma(1)\right)/s$ on compact subsets of $(0,\infty)$, so all integrals below are finite.

For any self-adjoint normal functional $\phi\in\mathcal{M}_\star^{\mathrm{sa}}$, the Jordan decomposition gives
\begin{align}
    \phi_+(1) = \phi(1) + \phi_-(1)
    = \phi(1) + (-\phi)_+(1).
\end{align}
Applied to $\phi = \rho - s\,\sigma$ with $s\in[1,\lambda]$, this yields
\begin{align}\label{eq:positive_negative_identity}
    \left(\rho - s\,\sigma\right)_+(1)
    =
    \rho(1) - s\,\sigma(1)
    +
    \left(s\,\sigma - \rho\right)_+(1).
\end{align}
Substituting \eqref{eq:positive_negative_identity} into the truncated second integral gives
\begin{align}
    \int_1^\lambda \frac{1}{s}\,\left(\rho - s\,\sigma\right)_+(1)\,ds
    &=
    \int_1^\lambda \frac{1}{s}\,
    \left(s\,\sigma - \rho\right)_+(1)\,ds
    +
    \int_1^\lambda \frac{\rho(1) - s\,\sigma(1)}{s}\,ds.
\end{align}
The elementary integral evaluates to
\begin{align}\label{eq:elementary_integral}
    \int_1^\lambda \frac{\rho(1) - s\,\sigma(1)}{s}\,ds
    =
    \rho(1)\,\ln\lambda - (\lambda - 1)\,\sigma(1).
\end{align}
Combining the truncated first integral over $[\mu,1]$ with the rewritten second integral over $[1,\lambda]$ collapses the two $\left(s\,\sigma-\rho\right)_+(1)$-integrals into a single integral over $[\mu,\lambda]$. Together with \eqref{eq:elementary_integral} and the $\left(\rho-\sigma\right)(1)$ term from \eqref{eq:frenkel_jencova_general}, this yields \eqref{eq:jencova_finite_range}.

Finally, the result is independent of the particular choice of $\mu$ and $\lambda$ satisfying \eqref{eq:sandwich_condition}: if $\lambda'\ge\lambda$, the integrand $\left(s\,\sigma-\rho\right)_+(1)$ vanishes on $(\lambda,\lambda']$, and the change in $\rho(1)\ln\lambda'-(\lambda'-1)\sigma(1)$ is exactly compensated by the additional piece of the elementary integral \eqref{eq:elementary_integral}. An analogous argument applies at the lower limit $\mu$.
\end{proof}
\section{Approximations and NPA-programs}\label{sec:approx}
In this section we aim to combine the tools from \autoref{sec:operator_layer_cake} with the remaining optimization problem for a security proof in DIQKD. We recall the optimization problem after applying the dilation argument from \autoref{thm:dilation_theorem} in the following (cf.~\autoref{eq:commuting_operator_entropy_program_gns})
\begin{equation}\label{eq:commuting_operator_entropy_program_repeat}
    \begin{aligned}
        \inf\ & H(Z|E)_{\omega_{ZE}} \\
        \text{s.t. }&
        \psi \in \mathcal S\!\bigl(\mathcal P_A\otimes_{\max}\mathcal P_B\bigr),\\
        &\psi\!\left(p_{a|x}\otimes q_{b|y}\right)=p(a,b|x,y),
        \qquad
        a,b,x,y,\\
        &\mathcal M_E \subseteq \mathcal M_{AB,\psi}',\\
        &\omega_{ZE}\in
        \mathcal S\!\bigl(\ell^\infty(\mathcal Z)\,\overline\otimes\,\mathcal M_E\bigr),\\
        &\omega_{ZE}(p_z\otimes c)
        =
        \bigl\langle \xi_\psi,\,
        \pi_\psi(p_{z|\tilde x})\,c\,\xi_\psi
        \bigr\rangle,
        \qquad
        z\in\mathcal Z,\ c\in \mathcal M_E.
    \end{aligned}
\end{equation}
The idea is to replace the conditional von Neumann entropy with a relative entropy and to apply then subsequently the integral formula from \autoref{coro:jencova_finite_range}.
\subsection{Discretization and upper bounds for relative entropy}

The integral in \autoref{coro:jencova_finite_range} can be approximated from above by a finite sum of suprema over projections as shown in \cite{Komann2026,kossmann2025reliableentropyestimationobserved}. The resulting upper bound for the relative entropy translates directly into a lower bound for the conditional von Neumann entropy and, ultimately, into a semidefinite program via the NPA hierarchy.

\begin{theorem}[Upper bounds for relative entropy]\label{thm:upper_bound_relative_entropy}
\rausdamit{Let $\mathcal{M}$ be a von Neumann algebra and let $\rho,\sigma\in\mathcal{M}_+^\star$ with
\begin{align}
    \mu\,\sigma \le \rho \le \lambda\,\sigma
\end{align}
for some $\mu\ge 0$ and $\lambda>0$. Let $r\ge 1$ and fix a partition $\mu = t_1 < t_2 < \cdots < t_r = \lambda$. Then there exist real numbers $\alpha_0,\ldots,\alpha_r$ and $\beta_0,\ldots,\beta_r$, \rs{(given in \eqref{def:coeffs})} depending only on $\mu,\lambda$ and the partition, such that
}

\rs{Let $\mathcal{M}$ be a von Neumann algebra and let $\rho,\sigma\in\mathcal{M}_+^\star$ and $\mu,\lambda$ a pair of cut off parameters with
\begin{align}
    \rho \le \lambda\,\sigma \text{ and } \lambda>\mu>0
\end{align}. Let $r\ge 1$ and fix a partition $\mu = t_1 < t_2 < \cdots < t_r = \lambda$. Then there exist real numbers $\alpha_0,\ldots,\alpha_r$ and $\beta_0,\ldots,\beta_r$, (given in \eqref{def:coeffs}) depending only on $\mu,\lambda$ and the partition, such that
}
\begin{align}\label{eq:integral_upper_bound}
    \int_0^\lambda \frac{1}{s}\,
    \left(s\,\sigma - \rho\right)_+(1)\,ds
    \le
    \sum_{k=0}^r
    \sup_{p_k\in\mathcal{P}(\mathcal{M})}
    \left(\alpha_k\,\rho + \beta_k\,\sigma\right)(p_k).
\end{align}
In particular,
\begin{align}\label{eq:relative_entropy_upper_bound}
    D(\rho\Vert\sigma)
    \le
    \left(\rho-\sigma\right)(1)
    +
    \sum_{k=0}^r
    \sup_{p_k\in\mathcal{P}(\mathcal{M})}
    \left(\alpha_k\,\rho + \beta_k\,\sigma\right)(p_k)
    +
    \rho(1)\,\ln\lambda
    -
    (\lambda-1)\,\sigma(1).
\end{align}
Moreover, the sequence of upper bounds converges to $D(\rho\Vert\sigma)$ when the mesh of the partition tends to cover the interval $(0,\lambda)$.

\end{theorem}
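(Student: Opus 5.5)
The function $g(s)\coloneqq (s\sigma-\rho)_+(1)=\sup_{p\in\mathcal P(\mathcal M)}\bigl(s\,\sigma(p)-\rho(p)\bigr)$ is a supremum of affine functions of $s$. It is therefore convex and non-decreasing on $[0,\infty)$, with $g(0)=0$ and $g(s)\le s\,\sigma(1)$. Because of convexity I would not take a Riemann sum: it can undershoot the integral. Instead I would bound $g$ from above on each cell by the chord through its endpoints, which stays above a convex function, and then rewrite each chord value as a single supremum over projections.

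\textbf{Step 1 (the interval $[0,\mu]$).} Convexity and $g(0)=0$ give $g(s)\le \frac{s}{\mu}\,g(\mu)$ for $s\in[0,\mu]$. Hence
\begin{align*}
\int_0^\mu \frac{g(s)}{s}\,ds\le g(\mu)=\sup_{p_0}\bigl(\mu\,\sigma-\rho\bigr)(p_0),
\end{align*}
which is the $k=0$ term with $\alpha_0=-1$ and $\beta_0=\mu$.

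\textbf{Step 2 (the cells $[t_j,t_{j+1}]$).} On each cell I would use the chord bound $g(s)\le \frac{t_{j+1}-s}{h_j}\,g(t_j)+\frac{s-t_j}{h_j}\,g(t_{j+1})$, where $h_j=t_{j+1}-t_j$. Integrating against $1/s$ gives the weights
\begin{align*}
w_j^-=\int_{t_j}^{t_{j+1}}\frac{t_{j+1}-s}{h_j\,s}\,ds,\qquad
w_j^+=\int_{t_j}^{t_{j+1}}\frac{s-t_j}{h_j\,s}\,ds ,
\end{align*}
both explicit in terms of logarithms and both non-negative. Collecting terms node by node yields weights $c_k\ge 0$ with $\int_\mu^\lambda g(s)/s\,ds\le \sum_k c_k\,g(t_k)$. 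Since $c_k\ge0$, each term satisfies $c_k\,g(t_k)=\sup_{p_k}\bigl(c_k t_k\,\sigma-c_k\,\rho\bigr)(p_k)$. This fixes $\alpha_k=-c_k$ and $\beta_k=c_k t_k$, and these coefficients would be recorded as \eqref{def:coeffs}. Summing Steps 1 and 2 proves \eqref{eq:integral_upper_bound}.

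\textbf{Step 3 (the entropy bound).} The hypothesis $\rho\le\lambda\sigma$ together with $\mu\sigma\le\rho$ for $\mu=0$ gives the sandwich condition of \autoref{coro:jencova_finite_range}, with lower end $0$. That corollary expresses $D(\rho\Vert\sigma)$ as $(\rho-\sigma)(1)+\int_0^\lambda \frac1s g(s)\,ds+\rho(1)\ln\lambda-(\lambda-1)\sigma(1)$. Inserting the bound from Steps 1 and 2 gives \eqref{eq:relative_entropy_upper_bound}.

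\textbf{Step 4 (convergence).} Along a sequence of partitions with $\mu\to 0$ and mesh tending to $0$, the Step 1 error is at most $g(\mu)\le\mu\,\sigma(1)\to0$. On $[\mu,\lambda]$ the chord error in each cell is at most $\omega_g(h_j)$, where $\omega_g$ is the modulus of continuity of $g$. Indeed, $g$ is Lipschitz with constant $\sigma(1)$, since it is a supremum of functions that are $\sigma(1)$-Lipschitz in $s$. Integrating $\sigma(1)\,h_j/s$ over the cells gives a total error of at most $\sigma(1)\cdot\mathrm{mesh}\cdot\ln(\lambda/\mu)$. The rate therefore has to be coupled: choosing the mesh to be $o(1/\ln(1/\mu))$ makes the error vanish.

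The step I expect to be the main obstacle is \textbf{Step 4}, since the statement's phrase ``mesh tends to cover $(0,\lambda)$'' is vague and must be made precise as a joint limit in $\mu$ and the mesh. I would handle it with the Lipschitz estimate above and refine to a sharper second-order chord bound only if it is needed. The other delicate point is ensuring $c_k\ge0$, so that the weights can be pulled inside the suprema. This holds because both chord weights are non-negative.
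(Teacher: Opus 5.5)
Your proof is correct, and for the bound itself it follows the paper's argument. Both use the convexity estimate $g(s)\le\frac{s}{\mu}\,g(\mu)$ on $[0,\mu]$, which produces $(\alpha_0,\beta_0)=(-1,\mu)$. Both use the chord majorant on $[\mu,\lambda]$, with non-negative node weights pulled into the suprema by positive homogeneity. Both then appeal to \autoref{coro:jencova_finite_range} with lower sandwich constant $0$.

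One thing to be aware of concerns the weights. Integrating the chord weights as you propose gives
$c_1=\frac{t_2}{t_2-t_1}\ln\frac{t_2}{t_1}-1$,
$c_k=\frac{t_{k+1}}{t_{k+1}-t_k}\ln\frac{t_{k+1}}{t_k}-\frac{t_{k-1}}{t_k-t_{k-1}}\ln\frac{t_k}{t_{k-1}}$ for $2\le k\le r-1$,
and the printed $c_r$. The printed $c_1$ and interior $c_k$ differ from these, and they do not give a valid bound. Take $\mathcal M=\mathbb C$, $\sigma=1$, $\rho=0$, $\lambda=10$ and the partition $1<2<10$. The printed weights give a right-hand side of about $9.09$ in \eqref{eq:integral_upper_bound}, while the left-hand side is $10$. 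Your weights reproduce $10$ exactly, as they must for linear $g$. The paper's convergence argument rewrites the bound as $F(\mu)+\int_\mu^\lambda L_{\mathcal T}F(s)/s\,ds$, which tacitly uses your weights. So derive the weights yourself rather than matching the printed formulas.

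The genuine difference is in the convergence step. The paper takes an iterated, purely qualitative limit: at fixed $\mu$ the interpolant of the continuous $F$ converges uniformly on $[\mu,\lambda]$, and afterwards $\mu\downarrow0$. Your Lipschitz estimate instead gives the explicit error bound $\sigma(1)\bigl(\mu+\mathrm{mesh}\cdot\ln(\lambda/\mu)\bigr)$. This implies the iterated limit, and it depends on $(\rho,\sigma)$ only through $\sigma(1)$. That is exactly the uniformity over feasible states that part (c) of \autoref{thm:equivalence_convergence} invokes (there $\sigma(1)=|\mathcal Z|$ and $\lambda=1$), and the paper's iterated argument does not provide it by itself.

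The coupling $\mathrm{mesh}=o(1/\ln(1/\mu))$ is only an artifact of your crude estimate. Since $g(t_k)\le\sigma(1)\,t_k$ and the chord of a linear function is the function itself, the chord interpolant satisfies $Lg(s)\le\sigma(1)\,s$. Hence a single cell contributes at most $\int_{t_j}^{t_{j+1}}Lg(s)/s\,ds\le\sigma(1)\,h_j$ to the error. Apply this to the cells with $t_j<\mathrm{mesh}$ and your Lipschitz bound to the others. The total error is then at most $\sigma(1)\bigl(\mu+2\,\mathrm{mesh}+\mathrm{mesh}\cdot\ln(\lambda/\mathrm{mesh})\bigr)$, which tends to zero in the joint limit with no relation needed between $\mu$ and the mesh.
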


\begin{proof} 
Define the function
\begin{align}
    F(s)
    \coloneqq
    \left(s\,\sigma - \rho\right)_+(1)
    =
    \sup_{p\in\mathcal{P}(\mathcal{M})}
    \left(s\,\sigma(p) - \rho(p)\right),
    \qquad s\ge 0.
\end{align}
Since $F$ is the pointwise supremum of a family of affine functions in $s$, it is convex and continuous on $[0,\infty)$. \rausdamit{Moreover, $F(0)=0$, as $\rho\ge 0$, and $F(s)=0$ for $s\in[0,\mu]$, since $\mu\,\sigma\le\rho$.}
\rs{Moreover, $F(0)=0$, since $\rho$ is positive. We emphasize that no lower domination assumption of the form $\mu\,\sigma\le\rho$ is used here; the parameter $\mu>0$ is only an auxiliary lower cut-off which keeps the logarithmic kernel away from the singular endpoint.}
By convexity and the boundary values $F(0)=0$ and $F(\mu)\ge 0$, for every $s\in[0,\mu]$,
\begin{align}
    F(s)
    \le
    \left(1 - \frac{s}{\mu}\right)F(0)
    +
    \frac{s}{\mu}\,F(\mu)
    =
    \frac{s}{\mu}\,F(\mu).
\end{align}
Hence
\begin{align}\label{eq:lower_tail_bound}
    \int_0^\mu \frac{F(s)}{s}\,ds
    \le
    \int_0^\mu \frac{F(\mu)}{\mu}\,ds
    =
    F(\mu)
    =
    \sup_{p_0\in\mathcal{P}(\mathcal{M})}
    \left(\mu\,\sigma(p_0) - \rho(p_0)\right),
\end{align}
which corresponds to the choice $\alpha_0 \coloneqq -1$ and $\beta_0 \coloneqq \mu$. For each grid point $t_k$, define
\begin{align}
    y_k
    \coloneqq
    F(t_k)
    =
    \sup_{p\in\mathcal{P}(\mathcal{M})}
    \left(t_k\,\sigma(p) - \rho(p)\right).
\end{align}
Since $s\mapsto F(s)/s$ is the ratio of a convex function to a linear function on $(0,\infty)$, the piecewise-linear interpolant of the values $(t_k,y_k)$ yields, by the methods of \cite{kossmann2025reliableentropyestimationobserved}, the estimate
\begin{align}\label{eq:quadrature_estimate}
    \int_\mu^\lambda \frac{F(s)}{s}\,ds
    \le
    \sum_{k=1}^r c_k\,y_k,
\end{align}
where the quadrature weights $c_k\ge 0$ are given by
\begin{align}
    c_1
    &\coloneqq
    1 + \frac{t_1}{t_2-t_1}
    \left(\ln\frac{t_2}{t_1} - 1\right),
    \\
    c_r
    &\coloneqq
    1 - \frac{t_{r-1}}{t_r - t_{r-1}}
    \ln\frac{t_r}{t_{r-1}},
    \\
    c_k
    &\coloneqq
    1
    + \frac{t_k}{t_{k+1}-t_k}
    \ln\frac{t_{k+1}}{t_k}
    - \frac{t_{k-1}}{t_k - t_{k-1}}
    \ln\frac{t_k}{t_{k-1}},
    \qquad
    2\le k\le r-1.
\end{align}
Combining \eqref{eq:lower_tail_bound} and \eqref{eq:quadrature_estimate}, and writing $y_k = \sup_{p_k}\left(t_k\,\sigma(p_k) - \rho(p_k)\right)$, we obtain
\begin{align}
    \int_0^\lambda \frac{F(s)}{s}\,ds
    \le
    \sup_{p_0}\left(-\rho(p_0)+\mu\,\sigma(p_0)\right)
    +
    \sum_{k=1}^r
    \sup_{p_k}\left(-c_k\,\rho(p_k) + c_k t_k\,\sigma(p_k)\right).
\end{align}
Defining
\begin{align}\label{def:coeffs}
    \alpha_k
    &\coloneqq
    \begin{cases}
    -1, & k=0,\\
    -c_k, & 1\le k\le r,
    \end{cases}
    &
    \beta_k
    &\coloneqq
    \begin{cases}
    \mu, & k=0,\\
    c_k\,t_k, & 1\le k\le r,
    \end{cases}
\end{align}
yields \eqref{eq:integral_upper_bound}. \rausdamit{ Inserting this into \autoref{coro:jencova_finite_range} produces \eqref{eq:relative_entropy_upper_bound}. Convergence as the mesh tends to zero follows from the fact that the piecewise-linear upper bound for $F$ converges uniformly on compact sets to the convex function $F$.}

\rs{`
Inserting this into \autoref{coro:jencova_finite_range} produces \eqref{eq:relative_entropy_upper_bound}. It remains to explain the convergence as the lower cut-off is removed.
Fix $\mu>0$ and let $\mathcal{T}=\{\mu=t_1<t_2<\cdots<t_r=\lambda\}$ be a partition of $[\mu,\lambda]$. Denote by $L_{\mathcal{T}}F$ the piecewise affine interpolant of $F$ with respect to this partition. Since $F$ is convex, $L_{\mathcal{T}}F\ge F$ on $[\mu,\lambda]$, and the bound obtained above can be written as
\begin{align}
    B_{\mu,\mathcal{T}}
    =
    F(\mu)
    +
    \int_\mu^\lambda \frac{L_{\mathcal{T}}F(s)}{s}\,ds .
\end{align}
For fixed $\mu>0$, the kernel $1/s$ is bounded on $[\mu,\lambda]$, and $L_{\mathcal{T}}F$ converges uniformly to $F$ as the mesh size $|\mathcal{T}|$ tends to zero. Hence
\begin{align}
    \lim_{|\mathcal{T}|\to0} B_{\mu,\mathcal{T}}
    =
    F(\mu)
    +
    \int_\mu^\lambda \frac{F(s)}{s}\,ds .
\end{align}
It remains to let $\mu\downarrow0$. By convexity and $F(0)=0$, for $0\le s\le\mu$ one has
\begin{align}
    F(s)\le \frac{s}{\mu}F(\mu),
\end{align}
and therefore
\begin{align}
    0
    \le
    F(\mu)
    -
    \int_0^\mu \frac{F(s)}{s}\,ds
    \le
    F(\mu).
\end{align}
Since $F$ is continuous at $0$ and $F(0)=0$, the right-hand side tends to zero as $\mu\downarrow0$. Consequently,
\begin{align}
    \lim_{\mu\downarrow0}
    \lim_{|\mathcal{T}|\to0}
    B_{\mu,\mathcal{T}}
    =
    \int_0^\lambda \frac{F(s)}{s}\,ds .
\end{align}
Finally, since $\rho\le\lambda\sigma$, the upper tail in the finite-range representation of the relative entropy vanishes for $s\ge\lambda$. Thus \autoref{coro:jencova_finite_range} gives
\begin{align}
    D(\rho\Vert\sigma)
    =
    \left(\rho-\sigma\right)(1)
    +
    \int_0^\lambda \frac{F(s)}{s}\,ds
    +
    \rho(1)\ln\lambda
    -
    (\lambda-1)\sigma(1).
\end{align}
This proves that the upper bounds in \eqref{eq:relative_entropy_upper_bound} converge to $D(\rho\Vert\sigma)$ by first refining the partition of $[\mu,\lambda]$ and then taking $\mu\downarrow0$.
}
\end{proof}

\subsection{Application to the secrecy program}\label{subsec:npo_formulation}

We now apply \autoref{thm:upper_bound_relative_entropy} to the optimization problem \eqref{eq:commuting_operator_entropy_program_gns}, which, captures the asymptotic secrecy of a DIQKD protocol without loss of generality: the GNS reduction eliminates the optimization over arbitrary purifications, and the dilation theorem (\autoref{thm:dilation_theorem}) together with \autoref{cor:povm_pvm_equality} justifies the formulation in terms of the universal PVM algebras. Recall that, for a state $\psi\in\mathcal{S}\!\left(\mathcal{P}_A\otimes_{\max}\mathcal{P}_B\right)$ with GNS triple $\left(\pi_\psi,\mathcal{H}_\psi,\xi_\psi\right)$ and von Neumann algebra $\mathcal{M}_{AB,\psi} \coloneqq\pi_\psi\!\left(\mathcal{P}_A\otimes_{\max}\mathcal{P}_B\right)''$, the conditional von Neumann entropy is given by
\begin{align}
    H(Z|E)_\omega
    =
    -D\!\left(\omega_{ZE}\,\middle\Vert\,\tau_{\mathcal{Z}}\otimes\omega_E\right),
\end{align}
where $\omega_{ZE}\in\mathcal{S}\!\left(\ell^\infty(\mathcal{Z})\,\overline\otimes\,\mathcal{M}_E\right)$ is the post-measurement state, $\mathcal{M}_E\subseteq\mathcal{M}_{AB,\psi}'$ is the adversary's algebra, and $\omega_E$ denotes the marginal on $\mathcal{M}_E$.

By \autoref{lem:domination_and_entropy_bounds_cq}, the domination $\omega_{ZE}\le\tau_{\mathcal{Z}}\otimes\omega_E$ holds. Setting $\rho\coloneqq\omega_{ZE}$ and $\sigma\coloneqq\tau_{\mathcal{Z}}\otimes\omega_E$, we have $\lambda=1$ and $\mu=0$, so that \autoref{coro:jencova_finite_range} yields
\begin{align}\label{eq:relative_entropy_cq_explicit}
    D\!\left(\omega_{ZE}\,\middle\Vert\,\tau_{\mathcal{Z}}\otimes\omega_E\right)
    =
    \left(1-|\mathcal{Z}|\right)
    +
    \int_0^1 \frac{1}{s}\,
    \left(s\left(\tau_{\mathcal{Z}}\otimes\omega_E\right) - \omega_{ZE}\right)_+(1)
    \,ds,
\end{align}
since $\rho(1)\ln 1 = 0$ and $(\lambda-1)\sigma(1)=0$. Replacing the integral by the discretization of \autoref{thm:upper_bound_relative_entropy} turns \eqref{eq:commuting_operator_entropy_program_gns} into
\begin{equation}\label{eq:commuting_operator_entropy_program_discretized}
    \begin{aligned}
        \inf\ & \left(|\mathcal{Z}|-1\right) -  \sum_{k=0}^r
    \sup_{p_k\in\mathcal{P}\left(\ell^\infty(\mathcal{Z})\,\overline\otimes\,\mathcal{M}_E\right)}
    \left(\alpha_k\,\omega_{ZE} + \beta_k\left(\tau_{\mathcal{Z}}\otimes\omega_E\right)\right)(p_k)\\
        \text{s.t. }&
        \psi \in \mathcal{S}\!\left(\mathcal{P}_A\otimes_{\max}\mathcal{P}_B\right),\\
        &\psi\!\left(p_{a|x}\otimes q_{b|y}\right)=p(a,b|x,y),
        \qquad
        a,b,x,y,\\
        &\mathcal{M}_E \subseteq \mathcal{M}_{AB,\psi}',\\
        &\omega_{ZE}\in
        \mathcal{S}\!\left(\ell^\infty(\mathcal{Z})\,\overline\otimes\,\mathcal{M}_E\right),\\
        &\omega_{ZE}(p_z\otimes c)
        =
        \left\langle \xi_\psi,\,
        \pi_\psi(p_{z|\tilde x})\,c\,\xi_\psi
        \right\rangle,
        \qquad
        z\in\mathcal{Z},\ c\in \mathcal{M}_E.
    \end{aligned}
\end{equation}

\begin{remark}
The relaxation in $(228)$ is monotone in Eve's algebra. If $\mathcal M_E\subseteq \mathcal M'_{AB,\psi}$, then $\operatorname{Proj}(\ell^\infty(Z),\bar\otimes,M_E)\subseteq \operatorname{Proj}(\ell^\infty(Z),\bar\otimes,M'*{AB,\psi})$. Thus the projection suprema in $(228)$ can only increase when $M_E$ is enlarged to the full commutant. Since these terms appear with a minus sign, the objective can only decrease. Hence the infimum over all $\mathcal M_E\subseteq \mathcal M'_{AB,\psi}$ is unchanged by taking $\mathcal M_E=\mathcal M'_{AB,\psi}$.
\end{remark}

\begin{theorem}[Equivalence and convergence of the entropy hierarchy]\label{thm:equivalence_convergence}
Let $p(a,b\vert x,y)$ be a conditional distribution, $\tilde x\in\mathcal{X}$ a key-generation input, and $\mathcal{Z}\coloneqq\mathcal{A}_{\tilde x}$. For each $r\ge 1$, fix a partition $0<\mu_r=t_1<t_2<\cdots<t_r=1$ with mesh $\delta_r\coloneqq\max_{1\le k\le r-1}(t_{k+1}-t_k)$, and let $\alpha_0,\ldots,\alpha_r$, $\beta_0,\ldots,\beta_r$ be the coefficients from \autoref{thm:upper_bound_relative_entropy}. Define the following four quantities:
\begin{enumerate}
\item[\emph{(I)}] The \emph{conditional entropy value} in the commuting-operator model:
\begin{align}
    h^\star
    \coloneqq
    \text{value of } \eqref{eq:commuting_operator_entropy_program_gns}.
\end{align}

\item[\emph{(II)}] The \emph{discretized commuting-operator program} at level $r$:
\begin{align}
    h_r
    \coloneqq
    \text{value of } \eqref{eq:commuting_operator_entropy_program_discretized}.
\end{align}

\item[\emph{(III)}] The \emph{universal $\mathcal{C}^\star$-algebraic NPO} at level $r$: let 
\begin{align}
    \mathcal{Q}_r\coloneqq\mathcal{C}^\star\!\left(\{p_k^{(z)}\}_{0\le k\le r,\,z\in\mathcal{Z}}\;\middle\vert\;(p_k^{(z)})^2=p_k^{(z)}=(p_k^{(z)})^\star\right)  
\end{align}
be the universal $\mathcal{C}^\star$-algebra generated by $(r+1)\cdot|\mathcal{Z}|$ projections. Set
\begin{equation}\label{eq:universal_npo}
    \begin{aligned}
        h_r^{\operatorname{NPO}}
        \coloneqq
        \inf\ &
        |\mathcal{Z}|-1
        +
        \sum_{k=0}^r\sum_{z\in\mathcal{Z}}
        \omega\!\left(
        \left(-\alpha_k\,p_{z|\tilde x}-\beta_k\right)p_k^{(z)}
        \right)
        \\
        \text{s.t. }
        &\omega\in\mathcal{S}\!\left(
        \mathcal{P}_A\otimes_{\max}\mathcal{P}_B
        \otimes_{\max}\mathcal{Q}_r\right),\\
        &\omega\!\left(p_{a|x}\,q_{b|y}\right)=p(a,b|x,y),
        \qquad a,b,x,y.
    \end{aligned}
\end{equation}

\item[\emph{(IV)}] The \emph{NPA relaxation} at level $n$ of the program \eqref{eq:universal_npo}:
\begin{align}
    h_r^{(n)}
    \coloneqq
    \text{level-$n$ NPA outer relaxation of }
    \eqref{eq:universal_npo}.
\end{align}
\end{enumerate}
Then the following hold:
\begin{enumerate}
    \item[\emph{(a)}] $h_r = h_r^{\operatorname{NPO}}$. Moreover, $h_r^{\operatorname{NPO}}$ coincides with the value of the Hilbert-space NPO from \cite[Thm.~II.2]{kossmann2025reliableentropyestimationobserved}, formulated with commuting operators in place of tensor products.

    \item[\emph{(b)}] $h_r\le h^\star$ for every $r\ge 1$.

    \item[\emph{(c)}] $\displaystyle\lim_{r\to\infty}  \rs{\lim_{ \mu_r\to 0} h_r} = h^\star$.

    \item[\emph{(d)}] $h_r^{(n)}\le h_r$ for every $n\ge 1$, and $\displaystyle\lim_{n\to\infty} h_r^{(n)} = h_r$.

    \item[\emph{(e)}] $\displaystyle\lim_{r\to\infty}\lim_{n\to\infty} \rs{\lim_{ \mu_r\to 0}}   h_r^{(n)} = h^\star$.
\end{enumerate}
\end{theorem}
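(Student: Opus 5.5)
I would organize the argument around the chain $h_r^{(n)}\le h_r^{\operatorname{NPO}}=h_r\le h^\star$ and then prove the two limit statements separately.

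\emph{Part (a).} For the inequality $h_r^{\operatorname{NPO}}\le h_r$ I take a feasible point of \eqref{eq:commuting_operator_entropy_program_discretized}. By the Remark, Eve's algebra may be taken to be the full commutant $\mathcal M_{AB,\psi}'$. Each supremum over projections of $\ell^\infty(\mathcal Z)\overline\otimes\mathcal M_E$ is approximated by projections of the form $p_k=\sum_z p_z\otimes P_k^{(z)}$ with $P_k^{(z)}\in\mathcal M_E$, since every projection of the cq algebra has this block form. Evaluating, the objective contribution becomes
\begin{align*}
\sum_z\langle\xi_\psi,(\alpha_k\pi_\psi(p_{z|\tilde x})+\beta_k)P_k^{(z)}\xi_\psi\rangle .
\end{align*}
The $P_k^{(z)}$ commute with $\pi_\psi(\mathcal P_A\otimes_{\max}\mathcal P_B)$, so the universal properties of $\mathcal Q_r$ and of $\otimes_{\max}$ (\autoref{prop:universal_property_maxtensorproduct}) give a representation of $\mathcal P_A\otimes_{\max}\mathcal P_B\otimes_{\max}\mathcal Q_r$. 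The vector state of $\xi_\psi$ is then feasible for \eqref{eq:universal_npo} with objective at most the discretized value plus $\epsilon$. (Here I note the sign convention $-\alpha_k,-\beta_k$ in \eqref{eq:universal_npo}, where the infimum absorbs the minus sign in front of the suprema.)

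For the converse $h_r\le h_r^{\operatorname{NPO}}$ I take any feasible state $\omega$ and its GNS triple, and restrict to the algebra generated by $\mathcal P_A\otimes_{\max}\mathcal P_B$, whose represented von Neumann algebra has commutant containing the images of the $p_k^{(z)}$. The vector $\xi_\omega$ is a purification of the restricted state. By purification independence, as used in the proof of \autoref{cor:povm_pvm_equality}, together with monotonicity in Eve's algebra, this point dominates some GNS-feasible point of \eqref{eq:commuting_operator_entropy_program_discretized}. The supremum over projections is at least the value at the specific $p_k=\sum_z p_z\otimes\pi(p_k^{(z)})$, which yields the inequality. I expect the main obstacle to be this purification-independence step for the discretized objective, since that objective is not itself an entropy. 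I would handle it via the dilation/commutant embedding argument of \autoref{thm:dilation_theorem}: the commutant of the GNS representation maps into every other purification's commutant, and the suprema behave monotonically under such embeddings. The identification with \cite[Thm.~II.2]{kossmann2025reliableentropyestimationobserved} is then a relabelling, with tensor products replaced by commuting operators.

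\emph{Part (b).} For a fixed feasible point, \autoref{lem:domination_and_entropy_bounds_cq} gives $\rho=\omega_{ZE}\le\sigma=\tau_{\mathcal Z}\otimes\omega_E$ with $\lambda=1$. \autoref{thm:upper_bound_relative_entropy} then yields $D\le$ the discretized expression, hence $-D\ge$ the discretized objective, pointwise over feasible points. Taking infima gives $h_r\le h^\star$.

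\emph{Part (c).} I need $\lim h_r\ge h^\star$. The per-point error is
\begin{align*}
B_{\mu,\mathcal T}-\int_0^1F/s\le F(\mu)+\int_\mu^1(L_{\mathcal T}F-F)/s .
\end{align*}
For this I would derive bounds uniform in feasible points. First, $F(\mu)\le\mu\,\sigma(1)=\mu|\mathcal Z|$. Second, $F$ is convex and $|\mathcal Z|$-Lipschitz because $\sigma(1)=|\mathcal Z|$ and $\rho(1)=1$, so the interpolation error is at most $|\mathcal Z|\delta_r$ and the second term is at most $|\mathcal Z|\delta_r\ln(1/\mu)$. Choosing $\mu\to0$ with $\delta_r\ln(1/\mu)\to0$, the error is uniform, so the infima converge.

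\emph{Parts (d) and (e).} The NPA hierarchy is an outer relaxation of \eqref{eq:universal_npo}, which gives $h_r^{(n)}\le h_r^{\operatorname{NPO}}$. Convergence follows from \cite[Lem.~4]{Ligthart_2023}: the NPA levels converge to the commuting-operator (maximal tensor product) value for polynomial objectives in finitely many projections under linear constraints, using compactness of the state space and boundedness of the moment matrices. Part (e) then follows by combining (a), (c) and (d).
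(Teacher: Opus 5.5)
Your overall architecture matches the paper's. Both directions of (a) go through the universal properties of $\mathcal Q_r$ and $\otimes_{\max}$, (b) is a pointwise bound, (c) uses a uniform discretization error, and (d) uses \cite[Lem.~4]{Ligthart_2023}. The direction $h_r^{\operatorname{NPO}}\le h_r$ of (a), part (b) and part (d) are fine. Two steps do not work as written.

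\textbf{The direction $h_r\le h_r^{\operatorname{NPO}}$ in (a).} You rightly flag a point the paper's proof treats only implicitly. For a feasible $\omega$, the vector $\xi_\omega$ is a purification of $\psi=\omega\vert_{\mathcal P_A\otimes_{\max}\mathcal P_B}$ that is in general not the GNS one. Moreover, the $\pi_\omega(p_k^{(z)})$ lie in $\pi_\omega(\mathcal P_A\otimes_{\max}\mathcal P_B)'$ on $\mathcal H_\omega$, not in $\mathcal M_{AB,\psi}'$. Write $V_\omega\coloneqq\sum_{k,z}\langle\xi_\omega,(\alpha_k\pi_\omega(p_{z|\tilde x})+\beta_k)\pi_\omega(p_k^{(z)})\xi_\omega\rangle$, and let $S_{\mathrm{GNS}}$ be the sum of the suprema in \eqref{eq:commuting_operator_entropy_program_discretized} with $\mathcal M_E=\mathcal M_{AB,\psi}'$. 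What you need is $V_\omega\le S_{\mathrm{GNS}}$.

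The mechanism you propose is that the GNS commutant embeds into every other purification's commutant and the suprema are monotone under embeddings. That gives $S_{\mathrm{GNS}}\le S_{\mathrm{other}}$, which is the opposite inequality. An injective $\star$-homomorphism in the needed direction does not exist in general: for a pure $\psi$ the GNS commutant is $\mathbb C$.

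The missing idea is compression rather than embedding:
\begin{itemize}
\item Let $P$ be the projection onto $\overline{\pi_\omega(\mathcal P_A\otimes_{\max}\mathcal P_B)\xi_\omega}$. It lies in $\pi_\omega(\mathcal P_A\otimes_{\max}\mathcal P_B)'$, and the subrepresentation on $P\mathcal H_\omega$ is the GNS representation of $\psi$.
\item $P\pi_\omega(p_k^{(z)})P\vert_{P\mathcal H_\omega}$ is a positive contraction in the commutant of that subrepresentation.
\item The expectation values are unchanged, because $P\xi_\omega=\xi_\omega$ and $P$ commutes with $\pi_\omega(p_{z|\tilde x})$.
\item The supremum over projections equals the supremum over $0\le x\le 1$, by \eqref{eq:positive_part_as_sup} applied in $\ell^\infty(\mathcal Z)\,\overline\otimes\,\mathcal M_{AB,\psi}'$.
\end{itemize}
Together these give $V_\omega\le S_{\mathrm{GNS}}$.

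\textbf{Part (c).} Your bound $\mu|\mathcal Z|+|\mathcal Z|\delta_r\ln(1/\mu)$ diverges as $\mu\to0$ at fixed $r$. It therefore only gives convergence along coupled schedules with $\delta_r\ln(1/\mu_r)\to0$, and cannot handle the inner limit $\mu_r\to0$ in (c) and (e).

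The logarithm disappears once you use $F(s)\le s\,\sigma(1)=s|\mathcal Z|$. Then also $L_{\mathcal T}F(s)\le s|\mathcal Z|$. Combined with your monotone-plus-Lipschitz estimate $(L_{\mathcal T}F-F)(s)\le|\mathcal Z|\delta_r$, this gives
\[
0\le\frac{(L_{\mathcal T}F-F)(s)}{s}\le|\mathcal Z|\min\{1,\delta_r/s\},
\]
and hence
\begin{align*}
    \int_\mu^1\frac{L_{\mathcal T}F(s)-F(s)}{s}\,ds\le|\mathcal Z|\,\delta_r\bigl(1+\ln(1/\delta_r)\bigr)
\end{align*}
uniformly in $\mu$ and in the feasible point. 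Together with your tail bound $\mu|\mathcal Z|$, this is exactly the $\mu$-independent error $\varepsilon(\delta_r,|\mathcal Z|)$ the paper asserts. It yields (c) and (e) in the stated order of limits.
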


\begin{proof}
We first show $h_r = h_r^{\operatorname{NPO}}$. Let $\omega\in\mathcal{S}\!\left(\mathcal{P}_A\otimes_{\max}\mathcal{P}_B\otimes_{\max}\mathcal{Q}_r\right)$ be feasible for \eqref{eq:universal_npo}. By the universal property of $\otimes_{\max}$ (\autoref{prop:universal_property_maxtensorproduct}), the GNS representation $\left(\pi_\omega,\mathcal{H}_\omega,\xi_\omega\right)$ yields commuting $\star$-representations of $\mathcal{P}_A$, $\mathcal{P}_B$, and $\mathcal{Q}_r$ on $\mathcal{H}_\omega$. Setting $\mathcal{M}_{AB}\coloneqq\pi_\omega\!\left(\mathcal{P}_A\otimes_{\max}\mathcal{P}_B\right)''$, the images $\pi_\omega(p_k^{(z)})$ lie in $\mathcal{M}_{AB}'$ and are projections. With $\psi\coloneqq\omega\vert_{\mathcal{P}_A\otimes_{\max}\mathcal{P}_B}$ and $\mathcal{M}_E\coloneqq W^\star\!\left(\{\pi_\omega(p_k^{(z)})\}_{k,z}\right)\subseteq\mathcal{M}_{AB}'$, the tuple $\left(\psi,\mathcal{M}_E,\{\pi_\omega(p_k^{(z)})\}\right)$ is feasible for \eqref{eq:commuting_operator_entropy_program_discretized}. The block decomposition gives
\begin{align}
    \sup_{p_k\in\mathcal{P}\left(\ell^\infty(\mathcal{Z})\,\overline\otimes\,\mathcal{M}_E\right)}
    \left(\alpha_k\,\omega_{ZE}+\beta_k\left(\tau_{\mathcal{Z}}\otimes\omega_E\right)\right)(p_k)
    =
    \sum_{z\in\mathcal{Z}}
    \sup_{p_k^{(z)}\in\mathcal{P}(\mathcal{M}_E)}
    \left(\alpha_k\,\omega_E^z+\beta_k\,\omega_E\right)(p_k^{(z)}).
\end{align}
The supremum on the right is attained at the support projection of $\left(\alpha_k\,\omega_E^z+\beta_k\,\omega_E\right)_+$, which exists in $\mathcal{M}_E$ by the Jordan decomposition \eqref{eq:positive_part_as_sup}, and the resulting value equals
\begin{align}
    -\omega\!\left(
    \left(-\alpha_k\,p_{z|\tilde x}-\beta_k\right)p_k^{(z)}\right)
\end{align}
precisely when $p_k^{(z)}$ is the optimal projection. Since the program \eqref{eq:commuting_operator_entropy_program_discretized} optimizes over all projections in the commutant, while \eqref{eq:universal_npo} optimizes over all states on the maximal tensor product, both infima range over the same set of expectation values, and hence $h_r=h_r^{\operatorname{NPO}}$. Conversely, any feasible tuple for \eqref{eq:commuting_operator_entropy_program_discretized} defines, via the universal property of $\otimes_{\max}$, a state on $\mathcal{P}_A\otimes_{\max}\mathcal{P}_B\otimes_{\max}\mathcal{Q}_r$ that is feasible for \eqref{eq:universal_npo} with the same objective value. The identification with \cite[Thm.~II.2]{kossmann2025reliableentropyestimationobserved} follows from the same universal property: the commuting-operator constraints in \cite[eq.~(16)]{kossmann2025reliableentropyestimationobserved}
\begin{align}
    \left[M_{a|x},N_{b|y}\right]
    =\left[M_{a|x},P_k^{(z)}\right]
    =\left[N_{b|y},P_k^{(z)}\right]=0
\end{align}
are exactly the relations encoded by the maximal tensor product $\mathcal{P}_A\otimes_{\max}\mathcal{P}_B\otimes_{\max}\mathcal{Q}_r$, and the passage from POVMs to PVMs is justified by \autoref{thm:dilation_theorem} together with \autoref{cor:povm_pvm_equality}. This proves (a).

For (b), let $\psi$ be feasible. By \autoref{lem:domination_and_entropy_bounds_cq} we have $\omega_{ZE}\le\tau_{\mathcal{Z}}\otimes\omega_E$, so $\mu=0$ and $\lambda=1$ are admissible in \autoref{thm:upper_bound_relative_entropy}. By \eqref{eq:relative_entropy_upper_bound},
\begin{align}
    D\!\left(\omega_{ZE}\,\middle\Vert\,\tau_{\mathcal{Z}}\otimes\omega_E\right)
    \le
    \left(1-|\mathcal{Z}|\right)
    +
    \sum_{k=0}^r
    \sup_{p_k\in\mathcal{P}\left(\ell^\infty(\mathcal{Z})\,\overline\otimes\,\mathcal{M}_E\right)}
    \left(\alpha_k\,\omega_{ZE}+\beta_k\left(\tau_{\mathcal{Z}}\otimes\omega_E\right)\right)(p_k),
\end{align}
whence $H(Z|E)_\omega=-D\!\left(\omega_{ZE}\,\middle\Vert\,\tau_{\mathcal{Z}}\otimes\omega_E\right)$ satisfies
\begin{align}
    H(Z|E)_\omega
    \ge
    \left(|\mathcal{Z}|-1\right)
    -
    \sum_{k=0}^r
    \sup_{p_k}
    \left(\alpha_k\,\omega_{ZE}+\beta_k\left(\tau_{\mathcal{Z}}\otimes\omega_E\right)\right)(p_k).
\end{align}
Since this holds for every feasible $\psi$, taking the infimum over $\psi$ on both sides yields $h_r\le h^\star$.

For (c), the discretization error for a single feasible state $\psi$ is
\begin{align}\label{eq:discretization_error}
    0
    \le
    H(Z|E)_\psi - L_r(\psi)
    =
    \sum_{k=0}^r c_k\,F_\psi(t_k)
    -
    \int_{\mu_r}^1\frac{F_\psi(s)}{s}\,ds,
\end{align}
where $L_r(\psi)$ denotes the objective of \eqref{eq:commuting_operator_entropy_program_discretized} evaluated at $\psi$, and
\begin{align}
    F_\psi(s)
    \coloneqq
    \left(s\left(\tau_{\mathcal{Z}}\otimes\omega_E\right)
    -\omega_{ZE}\right)_+(1),
    \qquad s\in[0,1],
\end{align}
is the function appearing in the Frenkel--Jen\v{c}ov\'a formula \eqref{eq:frenkel_jencova_general}. We claim that the right-hand side of \eqref{eq:discretization_error} is bounded uniformly over all feasible $\psi$. 
\rs{Indeed, for any feasible $\psi$, the variational characterization of the positive part of a self-adjoint normal functional gives
$$
F_\psi(s)=\sup_{0\leq q\leq 1}\bigl(s(\tau_Z\otimes\omega_E)-\omega_{ZE}\bigr)(q).
$$
Hence, for $s\geq s'$,
$$
F_\psi(s)\leq F_\psi(s')+(s-s')(\tau_Z\otimes\omega_E)(1)=F_\psi(s')+(s-s')|\mathcal Z|.
$$
Interchanging $s$ and $s'$ gives
$$
|F_\psi(s)-F_\psi(s')|\leq |s-s'|\,|\mathcal Z|
$$
for all $s,s'\in[0,1]$ and all feasible $\psi$. Thus the family $\{F_\psi:[0,1]\to\mathbb R_{\geq0}\mid \psi\ \mathrm{feasible}\}$ is uniformly bounded and equi-Lipschitz.
}

\rausdamit{
Indeed, for any feasible $\psi$, the function $F_\psi$ is convex with $F_\psi(0)=0$. Moreover, for every $s\in[0,1]$ and every projection $p$,
\begin{align}
   s\left(\tau_{\mathcal{Z}}\otimes\omega_E\right)(p)
    -\omega_{ZE}(p)
    \le
    s\left(\tau_{\mathcal{Z}}\otimes\omega_E\right)(p)
    \le
    s\,|\mathcal{Z}|,
\end{align}
so $F_\psi(s)\le s\,|\mathcal{Z}|$ for all $s\in[0,1]$. Therefore, the family
\begin{align}
    \mathscr{F}
    \coloneqq
    \left\{F_\psi : [0,1]\to\mathbb{R}_{\ge 0}
    \ \middle\vert\
    \psi\ \text{feasible}\right\}
\end{align}
is uniformly bounded and equi-Lipschitz:
\begin{align}\label{eq:equi_lipschitz}
    |F_\psi(s)-F_\psi(s')|
    \le
    |s-s'|\cdot|\mathcal{Z}|
    \qquad
    \text{for all }s,s'\in[0,1]
    \ \text{and all feasible }\psi.
\end{align}
}
Since the quadrature from \autoref{thm:upper_bound_relative_entropy} provides an upper bound for $\int_0^1 F(s)/s\,ds$ via a piecewise-linear interpolant of any convex $F$, and the quality of this approximation depends only on the mesh $\delta_r$ and the Lipschitz constant of $F$ (cf.~\cite{kossmann2025reliableentropyestimationobserved}), there exists a function $\varepsilon(\delta_r,|\mathcal{Z}|)\ge 0$ with $\varepsilon(\delta_r,|\mathcal{Z}|)\to 0$ as $\delta_r\to 0$, such that
\begin{align}
    \sup_{\psi\ \text{feasible}}
    \left(H(Z|E)_\psi - L_r(\psi)\right)
    \le
    \varepsilon(\delta_r,|\mathcal{Z}|).
\end{align}
From this uniform bound we conclude: on one hand, $h_r\le h^\star$ by (b); on the other, $L_r(\psi)\ge H(Z|E)_\psi - \varepsilon(\delta_r,|\mathcal{Z}|)$ for every feasible $\psi$, so
\begin{align}
    h_r
    =
    \inf_\psi L_r(\psi)
    \ge
    \inf_\psi H(Z|E)_\psi
    - \varepsilon(\delta_r,|\mathcal{Z}|)
    =
    h^\star - \varepsilon(\delta_r,|\mathcal{Z}|).
\end{align}
Therefore $|h^\star - h_r|\le\varepsilon(\delta_r,|\mathcal{Z}|)\to 0$ as $r\to\infty$ and \rs{$\mu_r\to 0 $}, which proves (c).

For (d), by (a) the value $h_r = h_r^{\operatorname{NPO}}$ is the value of a non-commutative polynomial optimization over states on the $\mathcal{C}^\star$-algebra $\mathcal{P}_A\otimes_{\max}\mathcal{P}_B\otimes_{\max}\mathcal{Q}_r$. Since all generators are contractions, this algebra is Archimedean, and \cite[Lem.~4]{Ligthart_2023} (cf.~also~\cite{Navascus2008}) guarantees that the NPA hierarchy at level $n$ provides outer approximations satisfying $h_r^{(n)}\le h_r$ with $h_r^{(n)}\nearrow h_r$ as $n\to\infty$.

Finally, (e) is immediate from (c) and (d):
\begin{align}
    h^\star =  \lim_{r\to\infty} \rs{\lim_{\mu_r\to 0}} h_r =  \lim_{r\to\infty} \lim_{\mu_r\to 0} \lim_{n\to\infty} h_r^{(n)}.
\end{align}
\end{proof}

\section{Outlook}
In this work, we have provided decisive tools for a possible route towards a fully operator-algebraic formulation of device-independent quantum key distribution. At the same time, the results presented here should be understood only as a first step. They show that the secrecy-relevant entropy optimization can be formulated consistently in the commuting-operator framework and connected to NPA-type relaxations, but they also point to several directions in which the operator-algebraic approach could be developed further.

A first natural direction concerns the cryptographic reduction from many rounds to a single-round entropy quantity. The asymptotic equipartition theorem used in this work is available in the von Neumann algebraic setting and therefore fits naturally into the commuting-operator framework. However, modern DIQKD security proofs often rely on more refined tools, such as generalized entropy accumulation theorems and techniques based on Rényi $\alpha$-norms. The language of von Neumann algebras and noncommutative $L_p$-spaces appears well suited for formulating such quantities beyond the tensor-product Hilbert-space setting. In particular, an analogue of the leftover-hash lemma with the required Rényi-type one-shot quantities in the commuting-operator framework does not seem to be available. Establishing such results would be an important step towards finite-size DIQKD security proofs that are fully intrinsic to the commuting-operator model.

A second direction is the development of parallel-repetition and entropy-accumulation principles directly at the level of commuting-operator strategies. In the language of non-local games, parallel repetition is a central tool for understanding how correlations behave under independent repetitions of an experiment. For DIQKD, an operator-algebraic parallel-repetition theorem would provide a structural bridge between the single-round Bell-type description and the multi-round cryptographic protocol. Such a theorem could clarify to what extent the tensor-product assumptions usually imposed between rounds can be replaced by intrinsic algebraic independence or product-state assumptions.

A third perspective is obtained by reversing the logic used in the present paper. We started from the commuting-operator model and showed how to recover an entropy optimization that can be represented on concrete Hilbert spaces when needed. More generally, any theorem proved at the level of universal $C^*$-algebras or von Neumann algebras immediately yields corresponding device-dependent statements by choosing a concrete representation on Hilbert spaces. This suggests a possible route towards unified security proofs: rather than proving separate theorems for each concrete physical model, one proves a single representation-independent theorem at the algebraic level and obtains the usual Hilbert-space versions as representations. Such an approach could also make security proofs more transparent and certifiable, since the assumptions entering the proof are encoded explicitly in the algebra on which the theorem is formulated.

More broadly, the $C^*$-algebraic viewpoint offers a systematic way of comparing different sets of security assumptions. Each model of a cryptographic experiment corresponds to an algebra generated by the observables that are assumed to exist, together with the relations they are assumed to satisfy. Stronger or weaker security assumptions then become relations between algebras, for instance through quotients, embeddings, or tensor-product choices. From this perspective, a minimal requirement for a mathematically transparent security proof would be that the proof specifies at the outset the algebra in which it is carried out. Once this algebra is made explicit, different device-dependent, semi-device-independent, and device-independent assumptions can be compared and ordered within a common mathematical language.

We therefore view the present work not as the conclusive treatment of the commuting-operator approach to DIQKD, but rather as an initial step towards a unified and systematic view on quantum security.

\section*{Acknowledgements}
\autoref{thm:frenkel_general} has been shown independently by Correa da Silva, Fr{\"o}b, Lechner, and L. Sangaletti \cite{CorreaDaSilva_fDivergences}. 
GK thanks Alexander Stottmeister for a discussion at an early stage of the project. GK acknowledges support from the Excellence Cluster - Matter and Light for Quantum Computing (ML4Q-2) and by the European Research Council (ERC Grant Agreement No. 948139). R.S.\ is supported  by the DFG under Germany's Excellence Strategy - EXC-2123 QuantumFrontiers - 390837967 and  SFB 1227 (DQ-mat), the Quantum Valley Lower Saxony, and the BMBF projects CBQD, SEQUIN, Quics and ATIQ.  
PL and HC are supported by the Emerging Young Scholars Program of the National Science and Technology Council, Taiwan (R.O.C.) under Grant numbers~NSTC 114-2628-E-002-006, NSTC 114-2119-M-001-002, and NSTC 114-2124-M-002-003, by the Yushan Young Scholar Program of the Ministry of Education, Taiwan (R.O.C.) under Grant number~NTU-114V2016-1, and by the research project `Forefront Quantum Computing, Learning, and Engineering in Noisy Intermediate-Scale Quantum Era’ of National Taiwan University under Grant NTU-114L895005. H.-C.~Cheng acknowledges the support from the `Center for Advanced Computing and Imaging in Biomedicine (NTU-115L900702)' through The Featured Areas Research Center Program within the framework of the Higher Education Sprout Project by the Ministry of Education (MOE) in Taiwan.
The authors acknowledge the use of \cite{claude2026}.
\bibliography{main}

\bibliographystyle{ultimate}
\appendix

\end{document}

%% file: abstract2.tex
\begin{abstract}
    Device-independent quantum key distribution (DIQKD) is arguably the gold standard for secure quantum communication, as it aims to rely only on observed input-output statistics of an uncharacterized device which is only assumption to obey the laws of quantum physics. A corresponding security analysis hence demands a description of a quantum experiment from a most general perspective.  Under close inspection, existing proof techniques do not always meet this goal as they tend to rely on subtile assumptions on a tensor product structure of the underlying Hilbert space and a 'hidden but finite' dimensionality.  
    In this work, we collect the tools needed for a full analysis of DIQKD in the commuting operator framework, which avoids these subtilities and provides the arguably most general view on a quantum experiment. 
    We rigorously proof the common assumption that in DIQKD measurements can be w.l.o.g. assumed to be projective. Furthermore, we show that task of computing key rates can be casted as a non-commutative polynomial optimization (NPO) problem to which the Navascu\'es--Pironio--Ac\'in (NPA) hierarchy gives a correct and converging relaxation. 
    
    As a tool, we generalize the integral representation for the relative entropy by \emph{Frenkel} \cite{Frenkel2023} to general von Neumann algebras and apply techniques from \emph{Ko\ss{}mann and Schwonnek} \cite{kossmann2025reliableentropyestimationobserved} for the approximation in an NPO program. 
\end{abstract}

%% file: figure-qkd.tex
\begin{figure}
    \centering
\begin{tikzpicture}[
    >=stealth,
    box/.style={
        draw, thick, rounded corners=2pt,
        minimum width=2.5cm, minimum height=2cm,
        fill=teal!8, draw=teal!70!black,
    },
    src/.style={
        draw, thick, rounded corners=2pt,
        minimum width=1.4cm, minimum height=1cm,
        fill=yellow!12, draw=yellow!70!black,
    },
    rng/.style={
        draw, thick, rounded corners=2pt,
        minimum width=1.2cm, minimum height=0.8cm,
        fill=cyan!12, draw=cyan!70!black,
    },
]
\node[box] (A) at (-3.5, 0) {Alice};
\node[box] (B) at ( 3.5, 0) {Bob};
\node[src] (S) at (0, 0) {source};
\node[rng] (RA) at (-5.2, 2.2) {RNG};
\node[rng] (RB) at ( 5.2, 2.2) {RNG};
\draw[->, thick, decorate, decoration={snake, amplitude=1mm, segment length=4mm, post length=2mm}]
    (S.west) -- (A.east);
\draw[->, thick, decorate, decoration={snake, amplitude=1mm, segment length=4mm, post length=2mm}]
    (S.east) -- (B.west);
\draw[->, thick] (RA.south) |- ++(0,-0.6) -| (A.north);
\draw[->, thick] (RB.south) |- ++(0,-0.6) -| (B.north);
\node[font=\small] at (-4.2, 1.4) {$x \in \mathcal{X}$};
\node[font=\small] at ( 4.2, 1.4) {$y \in \mathcal{Y}$};
\draw[->, thick] (A.south) -- ([yshift=-1cm]A.south)
    node[midway, right, font=\small] {$a \in \mathcal{A}$};
\draw[->, thick] (B.south) -- ([yshift=-1cm]B.south)
    node[midway, right, font=\small] {$b \in \mathcal{B}$};
\draw[<->, thick, teal!60!black]
    ([yshift=-1.6cm]A.south east) -- ([yshift=-1.6cm]B.south west)
    node[midway, below, font=\small] {$\rho_{AB}$};
\end{tikzpicture}
    \caption{The figure illustrates a device-independent quantum key distribution (DIQKD) experiment. In contrast to the quantum game depicted in \autoref{fig:quantum_game}, the referee is replaced by local random number generators (RNGs), one on each side. While sharing a quantum state $\rho_{AB}$, Alice and Bob use their respective RNGs to draw private and local random inputs, feed them into their devices, and collect the resulting output statistics. }
    \label{fig:diqkd_sketch}
\end{figure}

%% file: figure-games.tex
\begin{figure}
   \centering
\begin{tikzpicture}[
    >=stealth,
    box/.style={
        draw, thick, rounded corners=2pt,
        minimum width=2.5cm, minimum height=2cm,
        fill=teal!8, draw=teal!70!black,
    },
    src/.style={
        draw, thick, rounded corners=2pt,
        minimum width=1.4cm, minimum height=1cm,
        fill=yellow!12, draw=yellow!70!black,
    },
    ref/.style={
        draw, thick, rounded corners=2pt,
        minimum width=2cm, minimum height=1cm,
        fill=cyan!12, draw=cyan!70!black,
    },
]
\node[box] (A) at (-3.5, 0) {Alice};
\node[box] (B) at ( 3.5, 0) {Bob};
\node[src] (S) at (0, 0) {source};
\node[ref] (R) at (0, 2.5) {Referee};
\draw[->, thick, decorate, decoration={snake, amplitude=1mm, segment length=4mm, post length=2mm}]
    (S.west) -- (A.east);
\draw[->, thick, decorate, decoration={snake, amplitude=1mm, segment length=4mm, post length=2mm}]
    (S.east) -- (B.west);
\draw[->, thick] (R.west) -- (A.north);
\draw[->, thick] (R.east) -- (B.north);
\node[font=\small] at (-3, 1.9) {$x \in \mathcal{X}$};
\node[font=\small] at ( 3, 1.9) {$y \in \mathcal{Y}$};
\draw[->, thick] (A.south) -- ([yshift=-1cm]A.south)
    node[midway, right, font=\small] {$a \in \mathcal{A}$};
\draw[->, thick] (B.south) -- ([yshift=-1cm]B.south)
    node[midway, right, font=\small] {$b \in \mathcal{B}$};
\draw[<->, thick, teal!60!black]
    ([yshift=-1.6cm]A.south east) -- ([yshift=-1.6cm]B.south west)
    node[midway, below, font=\small] {$\rho_{AB}$};
\end{tikzpicture}
    \caption{The figure illustrates a non-local quantum game. A source prepares a bipartite state $\rho_{AB}$ and distributes it between Alice and Bob. Following a predetermined strategy, specified by the shared state together with their local measurements, Alice and Bob receive questions $x \in \mathcal{X}$ and $y \in \mathcal{Y}$ from a referee, feed them as inputs into their respective devices, and respond with the outcomes $a \in \mathcal{A}$ and $b \in \mathcal{B}$ produced by their local boxes.}
    \label{fig:quantum_game}
\end{figure}

%% file: main.bib
@article{Palazuelos2016,
  title = {Survey on nonlocal games and operator space theory},
  volume = {57},
  ISSN = {1089-7658},
  url = {http://dx.doi.org/10.1063/1.4938052},
  DOI = {10.1063/1.4938052},
  number = {1},
  journal = {Journal of Mathematical Physics},
  publisher = {AIP Publishing},
  author = {Palazuelos,  Carlos and Vidick,  Thomas},
  year = {2016},
  month = jan 
}

@article{Bell1964,
  title = {On the Einstein Podolsky Rosen paradox},
  volume = {1},
  ISSN = {0554-128X},
  url = {http://dx.doi.org/10.1103/PhysicsPhysiqueFizika.1.195},
  DOI = {10.1103/physicsphysiquefizika.1.195},
  number = {3},
  journal = {Physics Physique},
  publisher = {American Physical Society (APS)},
  author = {Bell,  J. S.},
  year = {1964},
  month = nov,
  pages = {195–200}
}

@article{Ekert1991,
  title = {Quantum cryptography based on Bell’s theorem},
  volume = {67},
  ISSN = {0031-9007},
  url = {http://dx.doi.org/10.1103/PhysRevLett.67.661},
  DOI = {10.1103/physrevlett.67.661},
  number = {6},
  journal = {Physical Review Letters},
  publisher = {American Physical Society (APS)},
  author = {Ekert,  Artur K.},
  year = {1991},
  month = aug,
  pages = {661–663}
}

@inproceedings{Mayers,
  series = {SFCS-98},
  title = {Quantum cryptography with imperfect apparatus},
  url = {http://dx.doi.org/10.1109/SFCS.1998.743501},
  DOI = {10.1109/sfcs.1998.743501},
  booktitle = {Proceedings 39th Annual Symposium on Foundations of Computer Science (Cat. No.98CB36280)},
  publisher = {IEEE Comput. Soc},
  author = {Mayers,  D. and Yao,  A.},
  pages = {503–509},
  collection = {SFCS-98},
  year = {1998}
}

@article{Barrett2005,
  title = {No Signaling and Quantum Key Distribution},
  volume = {95},
  ISSN = {1079-7114},
  url = {http://dx.doi.org/10.1103/PhysRevLett.95.010503},
  DOI = {10.1103/physrevlett.95.010503},
  number = {1},
  journal = {Physical Review Letters},
  publisher = {American Physical Society (APS)},
  author = {Barrett,  Jonathan and Hardy,  Lucien and Kent,  Adrian},
  year = {2005},
  month = jun 
}

@article{Acn2006_1,
  title = {From Bell’s Theorem to Secure Quantum Key Distribution},
  volume = {97},
  ISSN = {1079-7114},
  url = {http://dx.doi.org/10.1103/PhysRevLett.97.120405},
  DOI = {10.1103/physrevlett.97.120405},
  number = {12},
  journal = {Physical Review Letters},
  publisher = {American Physical Society (APS)},
  author = {Acín,  Antonio and Gisin,  Nicolas and Masanes,  Lluis},
  year = {2006},
  month = sep 
}

@article{Acn2006_2,
  title = {Efficient quantum key distribution secure against no-signalling eavesdroppers},
  volume = {8},
  ISSN = {1367-2630},
  url = {http://dx.doi.org/10.1088/1367-2630/8/8/126},
  DOI = {10.1088/1367-2630/8/8/126},
  number = {8},
  journal = {New Journal of Physics},
  publisher = {IOP Publishing},
  author = {Acín,  Antonio and Massar,  Serge and Pironio,  Stefano},
  year = {2006},
  month = aug,
  pages = {126–126}
}

@article{Pironio2009,
  title = {Device-independent quantum key distribution secure against collective attacks},
  volume = {11},
  ISSN = {1367-2630},
  url = {http://dx.doi.org/10.1088/1367-2630/11/4/045021},
  DOI = {10.1088/1367-2630/11/4/045021},
  number = {4},
  journal = {New Journal of Physics},
  publisher = {IOP Publishing},
  author = {Pironio,  Stefano and Acín,  Antonio and Brunner,  Nicolas and Gisin,  Nicolas and Massar,  Serge and Scarani,  Valerio},
  year = {2009},
  month = apr,
  pages = {045021}
}

@article{Vazirani2014,
  title = {Fully Device-Independent Quantum Key Distribution},
  volume = {113},
  ISSN = {1079-7114},
  url = {http://dx.doi.org/10.1103/PhysRevLett.113.140501},
  DOI = {10.1103/physrevlett.113.140501},
  number = {14},
  journal = {Physical Review Letters},
  publisher = {American Physical Society (APS)},
  author = {Vazirani,  Umesh and Vidick,  Thomas},
  year = {2014},
  month = sep 
}

@article{Acn2007,
  title = {Device-Independent Security of Quantum Cryptography against Collective Attacks},
  volume = {98},
  ISSN = {1079-7114},
  url = {http://dx.doi.org/10.1103/PhysRevLett.98.230501},
  DOI = {10.1103/physrevlett.98.230501},
  number = {23},
  journal = {Physical Review Letters},
  publisher = {American Physical Society (APS)},
  author = {Acín,  Antonio and Brunner,  Nicolas and Gisin,  Nicolas and Massar,  Serge and Pironio,  Stefano and Scarani,  Valerio},
  year = {2007},
  month = jun 
}

@article{Lobo2024,
   title={Certifying long-range quantum correlations through routed Bell tests},
   volume={8},
   ISSN={2521-327X},
   url={http://dx.doi.org/10.22331/q-2024-05-02-1332},
   DOI={10.22331/q-2024-05-02-1332},
   journal={Quantum},
   publisher={Verein zur Forderung des Open Access Publizierens in den Quantenwissenschaften},
   author={Lobo, Edwin Peter and Pauwels, Jef and Pironio, Stefano},
   year={2024},
   month=may, pages={1332} }

@article{Boca1991,
  title = {Free products of completely positive maps and spectral sets},
  volume = {97},
  ISSN = {0022-1236},
  url = {http://dx.doi.org/10.1016/0022-1236(91)90001-L},
  DOI = {10.1016/0022-1236(91)90001-l},
  number = {2},
  journal = {Journal of Functional Analysis},
  publisher = {Elsevier BV},
  author = {Boca,  Florin},
  year = {1991},
  month = may,
  pages = {251–263}
}

@article{Hensen2015,
  title = {Loophole-free Bell inequality violation using electron spins separated by 1.3 kilometres},
  volume = {526},
  ISSN = {1476-4687},
  url = {http://dx.doi.org/10.1038/nature15759},
  DOI = {10.1038/nature15759},
  number = {7575},
  journal = {Nature},
  publisher = {Springer Science and Business Media LLC},
  author = {Hensen,  B. and Bernien,  H. and Dréau,  A. E. and Reiserer,  A. and Kalb,  N. and Blok,  M. S. and Ruitenberg,  J. and Vermeulen,  R. F. L. and Schouten,  R. N. and Abellán,  C. and Amaya,  W. and Pruneri,  V. and Mitchell,  M. W. and Markham,  M. and Twitchen,  D. J. and Elkouss,  D. and Wehner,  S. and Taminiau,  T. H. and Hanson,  R.},
  year = {2015},
  month = oct,
  pages = {682–686}
}

@book{Blackadar2006,
  title = {Operator Algebras},
  ISBN = {9783540285175},
  ISSN = {0938-0396},
  url = {http://dx.doi.org/10.1007/3-540-28517-2},
  DOI = {10.1007/3-540-28517-2},
  journal = {Encyclopaedia of Mathematical Sciences},
  publisher = {Springer Berlin Heidelberg},
  author = {Blackadar,  Bruce},
  year = {2006}
}

@article{Davidson2018,
  title = {A proof of Boca’s Theorem},
  volume = {149},
  ISSN = {1473-7124},
  url = {http://dx.doi.org/10.1017/prm.2018.50},
  DOI = {10.1017/prm.2018.50},
  number = {04},
  journal = {Proceedings of the Royal Society of Edinburgh: Section A Mathematics},
  publisher = {Cambridge University Press (CUP)},
  author = {Davidson,  Kenneth R. and Kakariadis,  Evgenios T. A.},
  year = {2018},
  month = dec,
  pages = {869–876}
}

@article{Tsirelson1993,
  author  = {Tsirelson, B. S.},
  title   = {Some results and problems on quantum {B}ell-type inequalities},
  journal = {Hadronic Journal Supplement},
  year    = {1993},
  volume  = {8},
  pages   = {329--345},
}

@inproceedings{umegaki1962conditional,
  title={Conditional expectation in an operator algebra, IV (entropy and information)},
  author={Umegaki, Hisaharu},
  booktitle={Kodai Mathematical Seminar Reports},
  volume={14},
  pages={59--85},
  year={1962},
  organization={Department of Mathematics, Tokyo Institute of Technology}
}

@misc{claude2026,
  author       = {{Anthropic}},
  title        = {Claude (Claude Fable 5) [Large language model]},
  year         = {2026},
  howpublished = {\url{https://claude.ai}},
  note         = {Zugriff am 11. Juni 2026}
}

@article{kosaki_1986_variationalformula,
 ISSN = {03794024, 18417744},
 URL = {http://www.jstor.org/stable/24714805},
 author = {HIDEKI KOSAKI},
 journal = {Journal of Operator Theory},
 number = {2},
 pages = {335--348},
 publisher = {Theta Foundation},
 title = {Relative entropy of states: a variational expression},
 urldate = {2025-06-12},
 volume = {16},
 year = {1986}
}

@book{Ohya1993,
  title = {Quantum Entropy and Its Use},
  ISBN = {9783642579974},
  url = {http://dx.doi.org/10.1007/978-3-642-57997-4},
  DOI = {10.1007/978-3-642-57997-4},
  publisher = {Springer Berlin Heidelberg},
  author = {Ohya,  Masanori and Petz,  Dénes},
  year = {1993}
}

@misc{MIPstar=RE,
  doi = {10.48550/ARXIV.2001.04383},
  url = {https://arxiv.org/abs/2001.04383},
  author = {Ji,  Zhengfeng and Natarajan,  Anand and Vidick,  Thomas and Wright,  John and Yuen,  Henry},
  title = {MIP*=RE},
  publisher = {arXiv},
  year = {2020},
  copyright = {arXiv.org perpetual,  non-exclusive license}
}

@book{Takesaki1979,
  author    = {Masamichi Takesaki},
  title     = {Theory of Operator Algebras I},
  ISBN      = {9781461261889},
  url       = {http://dx.doi.org/10.1007/978-1-4612-6188-9},
  DOI       = {10.1007/978-1-4612-6188-9},
  publisher = {Springer New York},
  year      = {1979}
}

@book{Takesaki2003,
  title = {Theory of Operator Algebras II},
  ISBN = {9783662104514},
  ISSN = {0938-0396},
  url = {http://dx.doi.org/10.1007/978-3-662-10451-4},
  DOI = {10.1007/978-3-662-10451-4},
  journal = {Encyclopaedia of Mathematical Sciences},
  publisher = {Springer Berlin Heidelberg},
  author = {Takesaki,  Masamichi},
  year = {2003}
}

@misc{cheng2025errorexponentsquantumpacking,
      title={Error Exponents for Quantum Packing Problems via An Operator Layer Cake Theorem}, 
      author={Hao-Chung Cheng and Po-Chieh Liu},
      year={2025},
      eprint={2507.06232},
      archivePrefix={arXiv},
      primaryClass={quant-ph},
      url={https://arxiv.org/abs/2507.06232}, 
}

@misc{liu2025layercakerepresentationsquantum,
      title={Layer Cake Representations for Quantum Divergences}, 
      author={Po-Chieh Liu and Christoph Hirche and Hao-Chung Cheng},
      year={2025},
      eprint={2507.07065},
      archivePrefix={arXiv},
      primaryClass={quant-ph},
      url={https://arxiv.org/abs/2507.07065}, 
}

@article{Araki1975,
  author  = {Huzihiro Araki},
  title   = {Relative Entropy of States of von Neumann Algebras},
  journal = {Publications of the Research Institute for Mathematical Sciences},
  volume  = {11},
  number  = {3},
  pages   = {809--833},
  year    = {1975},
  doi     = {10.2977/PRIMS/1195191148},
  url     = {https://ems.press/journals/prims/articles/2800}
}

@article{Bennett1992a,
  author = {Bennett, Charles H. and Bessette, Fran\c{c}ois and Brassard, Gilles and Salvail, Louis and Smolin, John},
  title = {Experimental quantum cryptography},
  journal = {Journal of Cryptology},
  year = {1992},
  volume = {5},
  number = {1},
  pages = {3–28},
  publisher = {Springer Science and Business Media LLC},
  doi = {10.1007/bf00191318},
  url = {http://dx.doi.org/10.1007/BF00191318},
  issn = {1432-1378},
  month = {jan}
}

@article{Bennett1992b,
  title = {Quantum cryptography without Bell’s theorem},
  volume = {68},
  ISSN = {0031-9007},
  url = {http://dx.doi.org/10.1103/PhysRevLett.68.557},
  DOI = {10.1103/physrevlett.68.557},
  number = {5},
  journal = {Physical Review Letters},
  publisher = {American Physical Society (APS)},
  author = {Bennett,  Charles H. and Brassard,  Gilles and Mermin,  N. David},
  year = {1992},
  month = feb,
  pages = {557–559}
}

@article{Scarani2009,
  title = {The security of practical quantum key distribution},
  volume = {81},
  ISSN = {1539-0756},
  url = {http://dx.doi.org/10.1103/RevModPhys.81.1301},
  DOI = {10.1103/revmodphys.81.1301},
  number = {3},
  journal = {Reviews of Modern Physics},
  publisher = {American Physical Society (APS)},
  author = {Scarani,  Valerio and Bechmann-Pasquinucci,  Helle and Cerf,  Nicolas J. and Dušek,  Miloslav and L\"{u}tkenhaus,  Norbert and Peev,  Momtchil},
  year = {2009},
  month = sep,
  pages = {1301–1350}
}

@book{vonNeumann1996,
  title = {Mathematische Grundlagen der Quantenmechanik},
  ISBN = {9783642614095},
  url = {http://dx.doi.org/10.1007/978-3-642-61409-5},
  DOI = {10.1007/978-3-642-61409-5},
  publisher = {Springer Berlin Heidelberg},
  author = {von Neumann,  John},
  year = {1996}
}

@book{BaezSegalZhou1992,
  author    = {John C. Baez and Irving E. Segal and Zhengfang Zhou},
  title     = {Introduction to Algebraic and Constructive Quantum Field Theory},
  publisher = {Princeton University Press},
  address   = {Princeton, NJ},
  year      = {1992},
  isbn      = {978-0-691-08546-3},
  isbn13    = {9780691085463}
}

@article{RENNER2008,
  author = {Renner, Renato},
  title = {SECURITY OF QUANTUM KEY DISTRIBUTION},
  journal = {International Journal of Quantum Information},
  year = {2008},
  volume = {06},
  number = {01},
  pages = {1–127},
  publisher = {World Scientific Pub Co Pte Lt},
  doi = {10.1142/s0219749908003256},
  url = {http://dx.doi.org/10.1142/S0219749908003256},
  issn = {1793-6918},
  month = {feb}
}

@article{Navascus2008,
  author = {Navascués, Miguel and Pironio, Stefano and Acín, Antonio},
  title = {A convergent hierarchy of semidefinite programs characterizing the set of quantum correlations},
  journal = {New Journal of Physics},
  year = {2008},
  volume = {10},
  number = {7},
  pages = {073013},
  publisher = {IOP Publishing},
  doi = {10.1088/1367-2630/10/7/073013},
  url = {http://dx.doi.org/10.1088/1367-2630/10/7/073013},
  issn = {1367-2630},
  month = {jul}
}

@article{Navascus2007,
  author = {Navascués, Miguel and Pironio, Stefano and Acín, Antonio},
  title = {Bounding the Set of Quantum Correlations},
  journal = {Physical Review Letters},
  year = {2007},
  volume = {98},
  number = {1},
  publisher = {American Physical Society (APS)},
  doi = {10.1103/physrevlett.98.010401},
  url = {http://dx.doi.org/10.1103/PhysRevLett.98.010401},
  issn = {1079-7114},
  month = {jan}
}

@misc{ji2022mipre,
      title={MIP*=RE}, 
      author={Zhengfeng Ji and Anand Natarajan and Thomas Vidick and John Wright and Henry Yuen},
      year={2022},
      eprint={2001.04383},
      archivePrefix={arXiv},
      primaryClass={quant-ph},
      url={https://arxiv.org/abs/2001.04383}, 
}

@article{OZAWA2004,
  title = {ABOUT THE QWEP CONJECTURE},
  volume = {15},
  ISSN = {1793-6519},
  url = {http://dx.doi.org/10.1142/S0129167X04002417},
  DOI = {10.1142/s0129167x04002417},
  number = {05},
  journal = {International Journal of Mathematics},
  publisher = {World Scientific Pub Co Pte Ltd},
  author = {OZAWA,  NARUTAKA},
  year = {2004},
  month = jul,
  pages = {501–530}
}

@article{Kirchberg1993,
  title = {On non-semisplit extensions,  tensor products and exactness of groupC *-algebras},
  volume = {112},
  ISSN = {1432-1297},
  url = {http://dx.doi.org/10.1007/BF01232444},
  DOI = {10.1007/bf01232444},
  number = {1},
  journal = {Inventiones Mathematicae},
  publisher = {Springer Science and Business Media LLC},
  author = {Kirchberg,  Eberhard},
  year = {1993},
  month = dec,
  pages = {449–489}
}

@misc{CorreaDaSilva_fDivergences,
  author = {R. Correa da Silva and M. B. Fr{\"o}b and G. Lechner and L. Sangaletti},
  title  = {Integral representations of $f$-divergences for general {von Neumann} algebras},
  year   = {2026},
  note   = {The paper appears on the same day on the arXiv}
}

@article{CerveroMartn2025,
  title = {Device independent security of quantum key distribution from monogamy-of-entanglement games},
  volume = {9},
  ISSN = {2521-327X},
  url = {http://dx.doi.org/10.22331/q-2025-03-05-1652},
  DOI = {10.22331/q-2025-03-05-1652},
  journal = {Quantum},
  publisher = {Verein zur Forderung des Open Access Publizierens in den Quantenwissenschaften},
  author = {Cervero-Martin,  Enrique and Tomamichel,  Marco},
  year = {2025},
  month = mar,
  pages = {1652}
}

@article{SLOFSTRA_2019_not_closed, 
title={THE SET OF QUANTUM CORRELATIONS IS NOT CLOSED}, 
volume={7}, 
DOI={10.1017/fmp.2018.3},
journal={Forum of Mathematics, Pi}, 
author={SLOFSTRA, WILLIAM}, 
year={2019}, 
pages={e1}}

@article{Arveson1969Subalgebras,
  author  = {Arveson, W. B.},
  title   = {Subalgebras of ${C}^*$-algebras},
  journal = {Acta Mathematica},
  volume  = {123},
  year    = {1969},
  pages   = {141--224}
}

@article{Paulsen2016,
  title = {Estimating quantum chromatic numbers},
  volume = {270},
  ISSN = {0022-1236},
  url = {http://dx.doi.org/10.1016/j.jfa.2016.01.010},
  DOI = {10.1016/j.jfa.2016.01.010},
  number = {6},
  journal = {Journal of Functional Analysis},
  publisher = {Elsevier BV},
  author = {Paulsen,  Vern I. and Severini,  Simone and Stahlke,  Daniel and Todorov,  Ivan G. and Winter,  Andreas},
  year = {2016},
  month = mar,
  pages = {2188–2222}
}

@article{PrezGarca2008,
  title = {Unbounded Violation of Tripartite Bell Inequalities},
  volume = {279},
  ISSN = {1432-0916},
  url = {http://dx.doi.org/10.1007/s00220-008-0418-4},
  DOI = {10.1007/s00220-008-0418-4},
  number = {2},
  journal = {Communications in Mathematical Physics},
  publisher = {Springer Science and Business Media LLC},
  author = {Pérez-García,  D. and Wolf,  M. M. and Palazuelos,  C. and Villanueva,  I. and Junge,  M.},
  year = {2008},
  month = feb,
  pages = {455–486}
}

@article{Junge2010,
  title = {Operator Space Theory: A Natural Framework for Bell Inequalities},
  volume = {104},
  ISSN = {1079-7114},
  url = {http://dx.doi.org/10.1103/PhysRevLett.104.170405},
  DOI = {10.1103/physrevlett.104.170405},
  number = {17},
  journal = {Physical Review Letters},
  publisher = {American Physical Society (APS)},
  author = {Junge,  M. and Palazuelos,  C. and Pérez-García,  D. and Villanueva,  I. and Wolf,  M. M.},
  year = {2010},
  month = apr 
}

@article{Slofstra_2019,
   title={Tsirelson’s problem and an embedding theorem for groups arising from non-local games},
   volume={33},
   ISSN={1088-6834},
   url={http://dx.doi.org/10.1090/jams/929},
   DOI={10.1090/jams/929},
   number={1},
   journal={Journal of the American Mathematical Society},
   publisher={American Mathematical Society (AMS)},
   author={Slofstra, William},
   year={2019},
   month=sep, pages={1–56} }

@article{Ligthart_2023,
   title={A Convergent Inflation Hierarchy for Quantum Causal Structures},
   volume={401},
   ISSN={1432-0916},
   url={http://dx.doi.org/10.1007/s00220-023-04697-7},
   DOI={10.1007/s00220-023-04697-7},
   number={3},
   journal={Communications in Mathematical Physics},
   publisher={Springer Science and Business Media LLC},
   author={Ligthart, Laurens T. and Gachechiladze, Mariami and Gross, David},
   year={2023},
   month=jun, pages={2673–2714} }

@book{Tomamichel2016,
  title = {Quantum Information Processing with Finite Resources},
  ISBN = {9783319218915},
  ISSN = {2197-1765},
  url = {http://dx.doi.org/10.1007/978-3-319-21891-5},
  DOI = {10.1007/978-3-319-21891-5},
  journal = {SpringerBriefs in Mathematical Physics},
  publisher = {Springer International Publishing},
  author = {Tomamichel,  Marco},
  year = {2016}
}

@article{Uhlmann1976,
  title = {The “transition probability” in the state space of a $\star$-algebra},
  volume = {9},
  ISSN = {0034-4877},
  url = {http://dx.doi.org/10.1016/0034-4877(76)90060-4},
  DOI = {10.1016/0034-4877(76)90060-4},
  number = {2},
  journal = {Reports on Mathematical Physics},
  publisher = {Elsevier BV},
  author = {Uhlmann,  A.},
  year = {1976},
  month = apr,
  pages = {273–279}
}

@article{vanLuijk2026,
  title = {Uniqueness of Purifications Is Equivalent to Haag Duality},
  volume = {136},
  ISSN = {1079-7114},
  url = {http://dx.doi.org/10.1103/d7nm-gx37},
  DOI = {10.1103/d7nm-gx37},
  number = {4},
  journal = {Physical Review Letters},
  publisher = {American Physical Society (APS)},
  author = {van Luijk,  Lauritz and Stottmeister,  Alexander and Wilming,  Henrik},
  year = {2026},
  month = jan 
}

@article{Ekert2014,
  title = {The ultimate physical limits of privacy},
  volume = {507},
  ISSN = {1476-4687},
  url = {http://dx.doi.org/10.1038/nature13132},
  DOI = {10.1038/nature13132},
  number = {7493},
  journal = {Nature},
  publisher = {Springer Science and Business Media LLC},
  author = {Ekert,  Artur and Renner,  Renato},
  year = {2014},
  month = mar,
  pages = {443–447}
}

@inproceedings{Pfitzmann2000,
  series = {CCS00},
  title = {Composition and integrity preservation of secure reactive systems},
  url = {http://dx.doi.org/10.1145/352600.352639},
  DOI = {10.1145/352600.352639},
  booktitle = {Proceedings of the 7th ACM conference on Computer and Communications Security},
  publisher = {ACM},
  author = {Pfitzmann,  Birgit and Waidner,  Michael},
  year = {2000},
  month = nov,
  pages = {245–254},
  collection = {CCS00}
}

@misc{ferradini2025definingsecurityquantumkey,
      title={Defining Security in Quantum Key Distribution}, 
      author={Carla Ferradini and Martin Sandfuchs and Ramona Wolf and Renato Renner},
      year={2025},
      eprint={2509.13405},
      archivePrefix={arXiv},
      primaryClass={quant-ph},
      url={https://arxiv.org/abs/2509.13405}, 
}

@article{Tomamichel2011,
  title = {Leftover Hashing Against Quantum Side Information},
  volume = {57},
  ISSN = {1557-9654},
  url = {http://dx.doi.org/10.1109/TIT.2011.2158473},
  DOI = {10.1109/tit.2011.2158473},
  number = {8},
  journal = {IEEE Transactions on Information Theory},
  publisher = {Institute of Electrical and Electronics Engineers (IEEE)},
  author = {Tomamichel,  Marco and Schaffner,  Christian and Smith,  Adam and Renner,  Renato},
  year = {2011},
  month = aug,
  pages = {5524–5535}
}

@article{Dupuis2023,
  title = {Privacy Amplification and Decoupling Without Smoothing},
  volume = {69},
  ISSN = {1557-9654},
  url = {http://dx.doi.org/10.1109/TIT.2023.3301812},
  DOI = {10.1109/tit.2023.3301812},
  number = {12},
  journal = {IEEE Transactions on Information Theory},
  publisher = {Institute of Electrical and Electronics Engineers (IEEE)},
  author = {Dupuis,  Frédéric},
  year = {2023},
  month = dec,
  pages = {7784–7792}
}

@misc{regula2026rethinkingquantumsmoothentropies,
      title={Rethinking quantum smooth entropies: Tight one-shot analysis of quantum privacy amplification}, 
      author={Bartosz Regula and Marco Tomamichel},
      year={2026},
      eprint={2603.04493},
      archivePrefix={arXiv},
      primaryClass={quant-ph},
      url={https://arxiv.org/abs/2603.04493}, 
}

@article{Dupuis_2020,
   title={Entropy Accumulation},
   volume={379},
   ISSN={1432-0916},
   url={http://dx.doi.org/10.1007/s00220-020-03839-5},
   DOI={10.1007/s00220-020-03839-5},
   number={3},
   journal={Communications in Mathematical Physics},
   publisher={Springer Science and Business Media LLC},
   author={Dupuis, Frédéric and Fawzi, Omar and Renner, Renato},
   year={2020},
   month=sep, pages={867–913} }

@article{Dupuis_2019,
   title={Entropy Accumulation With Improved Second-Order Term},
   volume={65},
   ISSN={1557-9654},
   url={http://dx.doi.org/10.1109/TIT.2019.2929564},
   DOI={10.1109/tit.2019.2929564},
   number={11},
   journal={IEEE Transactions on Information Theory},
   publisher={Institute of Electrical and Electronics Engineers (IEEE)},
   author={Dupuis, Frederic and Fawzi, Omar},
   year={2019},
   month=nov, pages={7596–7612} }

@article{Metger2024,
  title = {Generalised Entropy Accumulation},
  volume = {405},
  ISSN = {1432-0916},
  url = {http://dx.doi.org/10.1007/s00220-024-05121-4},
  DOI = {10.1007/s00220-024-05121-4},
  number = {11},
  journal = {Communications in Mathematical Physics},
  publisher = {Springer Science and Business Media LLC},
  author = {Metger,  Tony and Fawzi,  Omar and Sutter,  David and Renner,  Renato},
  year = {2024},
  month = oct 
}

@article{Tomamichel2009,
  title = {A Fully Quantum Asymptotic Equipartition Property},
  volume = {55},
  ISSN = {1557-9654},
  url = {http://dx.doi.org/10.1109/TIT.2009.2032797},
  DOI = {10.1109/tit.2009.2032797},
  number = {12},
  journal = {IEEE Transactions on Information Theory},
  publisher = {Institute of Electrical and Electronics Engineers (IEEE)},
  author = {Tomamichel,  Marco and Colbeck,  Roger and Renner,  Renato},
  year = {2009},
  month = dec,
  pages = {5840–5847}
}

@article{Fawzi_2025,
   title={Asymptotic Equipartition Theorems in von Neumann Algebras},
   volume={27},
   ISSN={1424-0661},
   url={http://dx.doi.org/10.1007/s00023-025-01545-3},
   DOI={10.1007/s00023-025-01545-3},
   number={1},
   journal={Annales Henri Poincaré},
   publisher={Springer Science and Business Media LLC},
   author={Fawzi, Omar and Gao, Li and Rahaman, Mizanur},
   year={2025},
   month=feb, pages={95–141} }

@article{Berta_2015,
   title={The smooth entropy formalism for von Neumann algebras},
   volume={57},
   ISSN={1089-7658},
   url={http://dx.doi.org/10.1063/1.4936405},
   DOI={10.1063/1.4936405},
   number={1},
   journal={Journal of Mathematical Physics},
   publisher={AIP Publishing},
   author={Berta, Mario and Furrer, Fabian and Scholz, Volkher B.},
   year={2015},
   month=dec }

@article{kossmann2025routed,
  title={Routed Bell tests with arbitrarily many local parties},
  author={Ko{\ss}mann, Gereon and Berta, Mario and Schwonnek, Ren{\'e}},
  journal={arXiv preprint arXiv:2510.08405},
  year={2025}
}

@article{bluhm2026device,
  title={Device independent quantum key distribution with robust self-tests},
  author={Bluhm, Andreas and Ko{\ss}mann, Gereon and Schwonnek, Ren{\'e}},
  journal={arXiv preprint arXiv:2603.28085},
  year={2026}
}

@article{van2024schmidt,
  title={The Schmidt rank for the commuting operator framework},
  author={van Luijk, Lauritz and Schwonnek, Ren{\'e} and Stottmeister, Alexander and Werner, Reinhard F},
  journal={Communications in Mathematical Physics},
  volume={405},
  number={7},
  pages={152},
  year={2024},
  publisher={Springer}
}

@article{Junge2011,
  author        = {Junge, Marius and Navascu{\'e}s, Miguel and Palazuelos, Carlos and P{\'e}rez-Garc{\'i}a, David and Scholz, Volkher B. and Werner, Reinhard F.},
  title         = {Connes' Embedding Problem and Tsirelson's Problem},
  journal       = {Journal of Mathematical Physics},
  volume        = {52},
  number        = {1},
  pages         = {012102},
  year          = {2011},
  doi           = {10.1063/1.3514538},
  archivePrefix = {arXiv},
  eprint        = {1008.1142},
  primaryClass  = {quant-ph}
}

@article{lu2026device,
  title={Device-independent quantum key distribution over 100 km with single atoms},
  author={Lu, Bo-Wei and Yang, Chao-Wei and Wang, Run-Qi and Gao, Bo-Feng and Zhen, Yi-Zheng and Wang, Zhen-Gang and Shi, Jia-Kai and Ren, Zhong-Qi and Hahn, Thomas A and Tan, Ernest Y-Z and others},
  journal={Science},
  volume={391},
  number={6785},
  pages={592--597},
  year={2026},
  publisher={American Association for the Advancement of Science}
}

@misc{he2025qicsquantuminformationconic,
      title={QICS: Quantum Information Conic Solver}, 
      author={Kerry He and James Saunderson and Hamza Fawzi},
      year={2025},
      eprint={2410.17803},
      archivePrefix={arXiv},
      primaryClass={math.OC},
      url={https://arxiv.org/abs/2410.17803}, 
}

@article{Komann2026,
  title = {Optimising the relative entropy under semidefinite constraints},
  volume = {12},
  ISSN = {2056-6387},
  url = {http://dx.doi.org/10.1038/s41534-026-01184-4},
  DOI = {10.1038/s41534-026-01184-4},
  number = {1},
  journal = {npj Quantum Information},
  publisher = {Springer Science and Business Media LLC},
  author = {Koßmann,  Gereon and Schwonnek,  René},
  year = {2026},
  month = jan 
}

@article{Hu_2022,
   title={Robust Interior Point Method for Quantum Key Distribution Rate Computation},
   volume={6},
   ISSN={2521-327X},
   url={http://dx.doi.org/10.22331/q-2022-09-08-792},
   DOI={10.22331/q-2022-09-08-792},
   journal={Quantum},
   publisher={Verein zur Forderung des Open Access Publizierens in den Quantenwissenschaften},
   author={Hu, Hao and Im, Jiyoung and Lin, Jie and Lütkenhaus, Norbert and Wolkowicz, Henry},
   year={2022},
   month=sep, pages={792} }

@misc{kossmann2025reliableentropyestimationobserved,
      title={Reliable Entropy Estimation from Observed Statistics for Device-Independent Quantum Cryptography}, 
      author={Gereon Koßmann and René Schwonnek},
      year={2025},
      eprint={2411.04858},
      archivePrefix={arXiv},
      primaryClass={quant-ph},
      url={https://arxiv.org/abs/2411.04858}, 
}

@article{Bennett_2014,
   title={Quantum cryptography: Public key distribution and coin tossing},
   volume={560},
   ISSN={0304-3975},
   url={http://dx.doi.org/10.1016/j.tcs.2014.05.025},
   DOI={10.1016/j.tcs.2014.05.025},
   journal={Theoretical Computer Science},
   publisher={Elsevier BV},
   author={Bennett, Charles H. and Brassard, Gilles},
   year={2014},
   month=Dec, pages={7–11} }

@article{Brown_2024,
   title={Device-independent lower bounds on the conditional von Neumann entropy},
   volume={8},
   ISSN={2521-327X},
   url={http://dx.doi.org/10.22331/q-2024-08-27-1445},
   DOI={10.22331/q-2024-08-27-1445},
   journal={Quantum},
   publisher={Verein zur Forderung des Open Access Publizierens in den Quantenwissenschaften},
   author={Brown, Peter and Fawzi, Hamza and Fawzi, Omar},
   year={2024},
   month=aug, pages={1445} }

@article{Tan_2021,
   title={Computing secure key rates for quantum cryptography with untrusted devices},
   volume={7},
   ISSN={2056-6387},
   url={http://dx.doi.org/10.1038/s41534-021-00494-z},
   DOI={10.1038/s41534-021-00494-z},
   number={1},
   journal={npj Quantum Information},
   publisher={Springer Science and Business Media LLC},
   author={Tan, Ernest Y.-Z. and Schwonnek, René and Goh, Koon Tong and Primaatmaja, Ignatius William and Lim, Charles C.-W.},
   year={2021},
   month=oct }

@article{Frenkel2023,
  title = {Integral formula for quantum relative entropy implies data processing inequality},
  volume = {7},
  ISSN = {2521-327X},
  url = {http://dx.doi.org/10.22331/q-2023-09-07-1102},
  DOI = {10.22331/q-2023-09-07-1102},
  journal = {Quantum},
  publisher = {Verein zur Forderung des Open Access Publizierens in den Quantenwissenschaften},
  author = {Frenkel,  Péter E.},
  year = {2023},
  month = sep,
  pages = {1102}
}

@article{FRITZ_2012,
   title={TSIRELSON’S PROBLEM AND KIRCHBERG’S CONJECTURE},
   volume={24},
   ISSN={1793-6659},
   url={http://dx.doi.org/10.1142/S0129055X12500122},
   DOI={10.1142/s0129055x12500122},
   number={05},
   journal={Reviews in Mathematical Physics},
   publisher={World Scientific Pub Co Pte Lt},
   author={FRITZ, TOBIAS},
   year={2012},
   month=may, pages={1250012} }

@book{ArnonFriedman2020,
  title = {Device-Independent Quantum Information Processing: A Simplified Analysis},
  ISBN = {9783030602314},
  ISSN = {2190-5061},
  url = {http://dx.doi.org/10.1007/978-3-030-60231-4},
  DOI = {10.1007/978-3-030-60231-4},
  journal = {Springer Theses},
  publisher = {Springer International Publishing},
  author = {Arnon-Friedman,  Rotem},
  year = {2020}
}

@article{Stinespring1955,
  title = {Positive Functions on $C^\star$-Algebras},
  volume = {6},
  ISSN = {0002-9939},
  url = {http://dx.doi.org/10.2307/2032342},
  DOI = {10.2307/2032342},
  number = {2},
  journal = {Proceedings of the American Mathematical Society},
  publisher = {JSTOR},
  author = {Stinespring,  W. Forrest},
  year = {1955},
  month = apr,
  pages = {211}
}

@article{Haagerup2009,
  title = {A reduction method for noncommutative $L_p$-spaces and applications},
  volume = {362},
  ISSN = {0002-9947},
  url = {http://dx.doi.org/10.1090/S0002-9947-09-04935-6},
  DOI = {10.1090/s0002-9947-09-04935-6},
  number = {4},
  journal = {Transactions of the American Mathematical Society},
  publisher = {American Mathematical Society (AMS)},
  author = {Haagerup,  Uffe and Junge,  Marius and Xu,  Quanhua},
  year = {2009},
  month = oct,
  pages = {2125–2165}
}

@book{Paulsen2003,
  title = {Completely Bounded Maps and Operator Algebras},
  ISBN = {9780511546631},
  url = {http://dx.doi.org/10.1017/CBO9780511546631},
  DOI = {10.1017/cbo9780511546631},
  publisher = {Cambridge University Press},
  author = {Paulsen,  Vern},
  year = {2003},
  month = feb 
}

@book{Stratila2019,
  title = {Lectures on von Neumann Algebras},
  ISBN = {9781108654975},
  url = {http://dx.doi.org/10.1017/9781108654975},
  DOI = {10.1017/9781108654975},
  publisher = {Cambridge University Press},
  author = {Stratila,  Serban-Valentin and Zsido,  Laszlo},
  year = {2019},
  month = apr 
}

@article{Jencova2024, 
    title={Recoverability of quantum channels via hypothesis testing}, volume={114}, ISSN={1573-0530}, DOI={10.1007/s11005-024-01775-2}, number={1}, journal={Letters in Mathematical Physics}, publisher={Springer Science and Business Media LLC}, author={Jenčová, Anna}, year={2024}, month=feb }

@misc{tomamichel2013frameworknonasymptoticquantuminformation,
      title={A Framework for Non-Asymptotic Quantum Information Theory}, 
      author={Marco Tomamichel},
      year={2013},
      eprint={1203.2142},
      archivePrefix={arXiv},
      primaryClass={quant-ph},
      url={https://arxiv.org/abs/1203.2142}, 
}

@article{Jennewein2000,
  title = {Quantum Cryptography with Entangled Photons},
  volume = {84},
  ISSN = {1079-7114},
  url = {http://dx.doi.org/10.1103/PhysRevLett.84.4729},
  DOI = {10.1103/physrevlett.84.4729},
  number = {20},
  journal = {Physical Review Letters},
  publisher = {American Physical Society (APS)},
  author = {Jennewein,  Thomas and Simon,  Christoph and Weihs,  Gregor and Weinfurter,  Harald and Zeilinger,  Anton},
  year = {2000},
  month = May,
  pages = {4729–4732}
}

@article{Naik2000,
  title = {Entangled State Quantum Cryptography: Eavesdropping on the Ekert Protocol},
  volume = {84},
  ISSN = {1079-7114},
  url = {http://dx.doi.org/10.1103/PhysRevLett.84.4733},
  DOI = {10.1103/physrevlett.84.4733},
  number = {20},
  journal = {Physical Review Letters},
  publisher = {American Physical Society (APS)},
  author = {Naik,  D. S. and Peterson,  C. G. and White,  A. G. and Berglund,  A. J. and Kwiat,  P. G.},
  year = {2000},
  month = May,
  pages = {4733–4736}
}

@article{Tittel2000,
  title = {Quantum Cryptography Using Entangled Photons in Energy-Time Bell States},
  volume = {84},
  ISSN = {1079-7114},
  url = {http://dx.doi.org/10.1103/PhysRevLett.84.4737},
  DOI = {10.1103/physrevlett.84.4737},
  number = {20},
  journal = {Physical Review Letters},
  publisher = {American Physical Society (APS)},
  author = {Tittel,  W. and Brendel,  J. and Zbinden,  H. and Gisin,  N.},
  year = {2000},
  month = May,
  pages = {4737–4740}
}

@article{Nadlinger2022,
  title = {Experimental quantum key distribution certified by Bell’s theorem},
  volume = {607},
  ISSN = {1476-4687},
  url = {http://dx.doi.org/10.1038/s41586-022-04941-5},
  DOI = {10.1038/s41586-022-04941-5},
  number = {7920},
  journal = {Nature},
  publisher = {Springer Science and Business Media LLC},
  author = {Nadlinger,  D. P. and Drmota,  P. and Nichol,  B. C. and Araneda,  G. and Main,  D. and Srinivas,  R. and Lucas,  D. M. and Ballance,  C. J. and Ivanov,  K. and Tan,  E. Y.-Z. and Sekatski,  P. and Urbanke,  R. L. and Renner,  R. and Sangouard,  N. and Bancal,  J.-D.},
  year = {2022},
  month = {7},
  pages = {682–686}
}

@article{Zhang2022,
  title = {A device-independent quantum key distribution system for distant users},
  volume = {607},
  ISSN = {1476-4687},
  url = {http://dx.doi.org/10.1038/s41586-022-04891-y},
  DOI = {10.1038/s41586-022-04891-y},
  number = {7920},
  journal = {Nature},
  publisher = {Springer Science and Business Media LLC},
  author = {Zhang,  Wei and van Leent,  Tim and Redeker,  Kai and Garthoff,  Robert and Schwonnek,  René and Fertig,  Florian and Eppelt,  Sebastian and Rosenfeld,  Wenjamin and Scarani,  Valerio and Lim,  Charles C.-W. and Weinfurter,  Harald},
  year = {2022},
  month = {7},
  pages = {687–691}
}

@article{Liu2022,
  title = {Toward a Photonic Demonstration of Device-Independent Quantum Key Distribution},
  volume = {129},
  ISSN = {1079-7114},
  url = {http://dx.doi.org/10.1103/PhysRevLett.129.050502},
  DOI = {10.1103/physrevlett.129.050502},
  number = {5},
  journal = {Physical Review Letters},
  publisher = {American Physical Society (APS)},
  author = {Liu,  Wen-Zhao and Zhang,  Yu-Zhe and Zhen,  Yi-Zheng and Li,  Ming-Han and Liu,  Yang and Fan,  Jingyun and Xu,  Feihu and Zhang,  Qiang and Pan,  Jian-Wei},
  year = {2022},
  month = {7} 
}

@article{Ren05,
	doi = "10.3929/ethz-a-005115027",
	author = "Renato Renner",
	title = "Security of quantum key distribution",
	journal= "Ph.D.~Thesis (ETH Zurich), arXiv:quant-ph/0512258",
	year = "2005",
}
